\providecommand{\R}{\mathbb{R}}
\newcommand{\ignore}[1]{}
\newcommand{\fami}{\ensuremath{\mathcal{F}}}
\newcommand{\mst}{{\sc MST}\xspace}
\newcommand{\umst}{{\sc Union MST}\xspace}
\newcommand{\kmst}{{\sc $k$-MST}\xspace}
\newcommand{\ukmst}{{\sc Union $k$-MST}\xspace}
\newcommand{\ikmst}{{\sc Intersection $k$-MST}\xspace}
\newcommand{\ukst}{{\sc Union $k$-Steiner Tree}\xspace}
\newcommand{\st}{{\sc Steiner Tree}\xspace}
\newcommand{\kst}{{\sc $k$-Steiner Tree}\xspace}
\newcommand{\setc}{{\sc Set Cover}\xspace}
\newcommand{\ksc}{{\sc $k$-Set Cover}\xspace}
\newcommand{\uksc}{{\sc Union $k$-Set Cover}\xspace}
\newcommand{\iksc}{{\sc Intersection $k$-Set Cover}\xspace}
\newcommand{\tsp}{{\sc TSP}\xspace}
\newcommand{\ktsp}{{\sc $k$-TSP}\xspace}
\newcommand{\uktsp}{{\sc Union $k$-TSP}\xspace}
\newcommand{\fl}{{\sc Facility Location}\xspace}
\newcommand{\ufl}{{\sc Nonmetric Facility Location}\xspace}
\newcommand{\mfl}{{\sc Metric Facility Location}\xspace}
\newcommand{\kufl}{{\sc $k$-Nonmetric Facility Location}\xspace}
\newcommand{\kmfl}{{\sc $k$-Metric Facility Location}\xspace}
\newcommand{\ukufl}{{\sc Union $k$-Nonmetric Facility Location}\xspace}
\newcommand{\ikufl}{{\sc Intersection $k$-Nonmetric Facility Location}\xspace}
\newcommand{\ukmfl}{{\sc Union $k$-Metric Facility Location}\xspace}
\newcommand{\ikmfl}{{\sc Intersection $k$-Metric Facility Location}\xspace}
\newcommand{\kds}{{\sc $k$-Densest Subgraph}\xspace}
\newcommand{\leds}{{\sc Minimum $\ell$-Edge Coverage}\xspace}
\newcommand{\mi}{{\sc Matroid Intersection}\xspace}
\newcommand{\pcst}{{\sc Prize-Collecting Steiner Tree}\xspace}
\newcommand{\scthreshold}{\ensuremath{\kappa}}
\newcommand{\OPT}{{\ensuremath{OPT}}}%optimum solution
\newcommand{\opt}{{\ensuremath{opt}}}%its cost
\newcommand{\requests}{\mathcal{U}}
\newcommand{\items}{\mathcal{S}}
\newcommand{\weight}{w}
\newcommand{\approximate}{\mathcal{A}}
\newcommand{\optimal}{\mathcal{O}}
\newcommand{\sciuniv}{\requests}
\newcommand{\scilayerlet}{\items}
\newcommand{\scicost}{\weight}
\newcommand{\scisollet}{\approximate}
\newcommand{\scioptsollet}{\optimal}
\newcommand{\scilayer}[1]{\scilayerlet^{#1}}
\newcommand{\scicostl}[1]{\scicost^{#1}}
\newcommand{\scisol}[1]{\scisollet^{#1}}
\newcommand{\scioptsol}[1]{\scioptsollet^{#1}}
\newcommand{\scioptchoice}[2]{\scioptsol{#1}\left(#2\right)}
\newcommand{\sciK}{K}
\newcommand{\scioptK}{K_{\scioptsollet}}
\newcommand{\facilities}{\mathcal{F}}
\newcommand{\clients}{\mathcal{C}}
\newcommand{\open}{o}
\newcommand{\connect}{w}
\newcommand{\fliconnect}{\connect}
\newcommand{\fliopen}{\open}
\newcommand{\fliuse}{\overline{o}}
\newcommand{\fliuniv}{\clients}
\newcommand{\flilayerlet}{\facilities}
\newcommand{\flisollet}{\approximate}
\newcommand{\flioptsollet}{\optimal}
\newcommand{\flilayer}[1]{\flilayerlet^{#1}}
\newcommand{\fliopenl}[1]{\fliopen^{#1}}
\newcommand{\fliconnectl}[1]{\fliconnect^{#1}}
\newcommand{\flisol}[1]{\flisollet^{#1}}
\newcommand{\flioptsol}[1]{\flioptsollet^{#1}}
\newcommand{\flichoicenolayer}[1]{\flisollet\left(#1\right)}
\newcommand{\flichoice}[2]{\flisol{#1}\left(#2\right)}
\newcommand{\flioptchoice}[2]{\flioptsol{#1}\left(#2\right)}
\newcommand{\fliK}{K}
\newcommand{\flioptK}{K_{\flioptsollet}}
\begin{document}

\date{}

\title{Approximation Algorithms for\\ Union and Intersection Covering Problems}

\author{Marek Cygan\inst{1}
\and
Fabrizio Grandoni\inst{2}
\and
Stefano Leonardi\inst{3}
\and
Marcin Mucha\inst{1}
\and
Marcin Pilipczuk\inst{1}
\and
Piotr Sankowski\inst{1}
}
%\begin{titlepage}
%\def\thepage{}
%\thispagestyle{empty}

\institute{Institute of Informatics, University of Warsaw, Poland \\
  \email{\{cygan,mucha,malcin,sank\}@mimuw.edu.pl}
  \and
    University of Rome Tor Vergata, Roma, Italy,
  \email{grandoni@disp.uniroma2.it}
  \and
    Department of Computer and System Science, Sapienza University of Rome, Italy,  \\
  \email{leon@dis.uniroma1.it}
}

\maketitle

\begin{abstract}
\noindent In a classical covering problem, we are given a set of \emph{requests} that we need to satisfy (fully or partially), by buying a subset of \emph{items} at minimum cost. For example, in the $k$-MST problem we want to find the cheapest tree spanning at least $k$ nodes of an edge-weighted graph. Here nodes and edges represent requests and items, respectively.

In this paper, we initiate the study of a new family of \emph{multi-layer} covering problems. Each such problem consists of a collection of $h$ distinct instances of a standard covering problem (\emph{layers}), with the constraint that all layers share the same set of requests. We identify two main subfamilies of these problems:
\begin{itemize}\itemsep0pt
\item[$\bullet$] in a \emph{union} multi-layer problem, a request is satisfied if it is satisfied \emph{in at least one} layer;
\item[$\bullet$] in an \emph{intersection} multi-layer problem, a request is satisfied if it is satisfied \emph{in all} layers.% \emph{simultaneously}.
\end{itemize}
To see some natural applications, consider both generalizations of $k$-MST. Union $k$-MST can model a problem where we are asked to connect at least $k$ users to either one of two communication networks, e.g., a wireless and a wired network. On the other hand, intersection $k$-MST can formalize the problem of providing both electricity and water to at least $k$ users.

We present a number of hardness and approximation results for union and intersection versions of several standard optimization problems: MST, Steiner tree, set cover, facility location, TSP, and their partial covering variants.\end{abstract}
%\end{titlepage}
%\newpage

\section{Introduction}

In the fundamental {\sc Minimum Spanning Tree} problem (\mst), the goal is to compute the cheapest tree which spans all the $n$ nodes of a given edge-weighted graph $G=(V,E)$. To handle the subtleties of real-life applications, several natural generalisations and variants of the problem have been considered. For example, in the \st problem we need to connect with a tree only a given subset $W$ of $k$ \emph{terminal} nodes. In the \kmst problem instead, the goal is to connect at least $k$ (arbitrary) nodes. One common feature of these generalizations is that we need to design a single network.
%One common feature of these generalisations is that we have a single fixed network with a single set of connection costs.
However, this is often not the case in the applications. For example, suppose we want to provide at least $k$ out of $n$ users with both electricity and water. In this case, we cannot design the water and electricity infrastructures independently: our decisions on which users to reach have to be synchronized.

Consider now another classic problem, the {\sc Travelling Salesman} problem (\tsp): here we are given a complete weighted graph, and the goal is to compute the minimum-length tour traversing all the nodes. Again, several natural generalizations and variants of the problem have been considered in the literature. Still, all of them deal only with the case where there is a single network. However, there are natural applications which do not fit in this framework. For example, suppose you want to visit a set of places (bank, post office, etc.), and you can use your bike and your car. Of course, you cannot just reach a place by bike, and then suddenly switch to your car (that you left at home). Your trip must consist of a tour by bike and another tour by car, which together touch all the places that you need to visit.

The above examples show the need for a new framework, which is able to capture coordinated decision-making over multiple optimization problems.

\paragraph{Our results.}

In this paper we initiate the study of \emph{multi-layer covering} problems. These problems are characterized by a set of $h$ instances of a standard covering problem (\emph{layers}), sharing a common set of $n$ \emph{requests}. The goal is \emph{satisfying}, possibly partially, the requests by buying \emph{items} in each layer at minimum total cost. We identify two main families of such problems:
\begin{itemize}\itemsep0pt
\item[$\bullet$] {\bf Intersection problems.} Here, as in the water-electricity example, a request is satisfied if it is satisfied \emph{in all} the layers.
\item[$\bullet$] {\bf Union problems.} Here, as in the car-bike example, a request is satisfied if it is satisfied \emph{in at least} one layer.
\end{itemize}

We provide hardness and approximation results for the union and intersection versions of several classical covering problems: \mst, \st, ({\sc Nonmetric} and {\sc Metric}) \fl, \tsp, and \setc. (Formal definitions are given at the end of this section). We focus on the partial covering variant of these problems, i.e.\ \kmst, \kst, etc.: here we need to satisfy a \emph{target} number $k$ of the $n$ requests. This allows us to handle a wider spectrum of interesting problems. In fact, for intersection problems, if $k=n$ it is sufficient to compute an independent solution for each layer. On the other hand, some of the union problems above are interesting also for the case $k=n$. However, the results that we achieve for that case are qualitatively the same as for $k<n$.

For {\sc Intersection} versions of \kmst, \kst, \ktsp, \ksc, \kmfl, and \kufl, we show that:
\begin{itemize}\itemsep0pt
\item[$\bullet$] Even for two layers, a polylogarithmic approximation for these problems would imply a polylogarithmic approximation for \kds. We recall that the best approximation for the latter problem is $O(n^{\frac{1}{4}+\varepsilon})$ \cite{feige-densestsubgraph-2010} and finding a polylogarithmic approximation is a major open problem. Indeed, many researchers believe that a polylogarithmic approximation does not exist, and exploit this assumption in their hardness reductions (see, e.g.,  \cite{ABW10,ABBG10}).
%The hardness of related planted dense subgraph problems is an assumption in a few recent hardness results \cite{ABW10,ABBG10}.
\item[$\bullet$] On the positive side, we give $\tilde{O}(k^{1-1/h})$-approximation algorithms\footnote{The $\tilde{O}$ notation suppresses polylogarithmic factors.} for these problems.
\end{itemize}
Note that, in the single-layer case, the above problems can be approximated within a constant or logarithmic factor.
Hence, our results show that the complexity of natural intersection problems changes drastically from one to two layers.

For {\sc Union} versions of \kmst, \kst, \ktsp and \kmfl we show that:
\begin{itemize}\itemsep0pt
\item[$\bullet$] The problems are $\Omega(\log k)$-hard to approximate for an unbounded number $h$ of layers. Furthermore, there is a greedy $O(\log k)$-approximation algorithm. For the first three problems this only holds for the rooted version --- the unrooted case is inapproximable.
\item[$\bullet$] There is an LP-based algorithmic framework which provides $O(h)$-approximate solutions. Furthermore, the natural LPs involved have $\Omega(h)$ integrality gap.
\end{itemize}
We remark that \uksc\ and \ukufl\ can be solved by collapsing all layers into one, and hence they are less interesting with respect to the goals of this paper.

\paragraph{Related Work.}
To the best of our knowledge, and somewhat surprisingly, approximation algorithms for union and intersection problems seem to not have been studied in the literature, with the notable exception of \mi. However, differently from our problems, \mi\ is solvable in polynomial time~\cite{edmonds:intersection}.

The term "multi-layer" has been used before in the literature, but with a meaning different from ours. Most often it refers to problems related to VLSI design, where we are given several planar layers on which the circuit has to be built~\cite{multilayer-vlsi}. It also sometimes refers to multi-layer models of communication networks that are composed of different physical and logical layers of communication devices \cite{multilayer-networks}.

The idea of introducing multiple cost functions into one optimization problem
is the main theme of \emph{multi-objective optimization}.
Standard and multi-criteria approximation algorithms have been developed for the multi-objective version of several classical problems, such as {\sc Shortest Path} \cite{hansen,martins,PY00focs,tarapata},
{\sc Spanning Tree} \cite{GRS09esa,GZ10,PY00focs,GR96swat},
{\sc Matching} \cite{BBGS10jcss,BBGS08ipco,GRS09esa,PY00focs}
etc. (for a survey, see~\cite{G00anannotated}). One could view these problems as having several layers with different costs. However, this setting is very different from our approach. In fact, solutions in different layers of multi-objective optimization problems have to be \emph{exactly the same}, and the goal is to satisfy some constraints on each objective.

Partial covering problems (also known as problems with \emph{outliers}), are well-studied in the literature: e.g., \kmst 
\cite{AK06mp,AR98,AABV95,BRV96,Garg96focs,Garg05stoc,RSMRR93},
%\cite{AK06mp,Garg05stoc},
  \ktsp \cite{AK06mp,Garg96focs}, \kmfl \cite{CKMN01soda,JMMSV03}, and \ksc~\cite{kearns,slavik}.
Their generalization on multiple layers is significantly harder, as our results show. Note that our \ukst problem generalizes all of the following problems: \kst (and hence \kmst), \pcst (see the proof of Theorem~\ref{thr:hardnessBounded}), and \ksc (see the proof of Theorem~\ref{thr:hardnessRooted}).

\emph{Rent-or-buy} \cite{EGRS08,FKLS06,GKPR07,SK02} and \emph{buy-at-bulk} \cite{GI06,GR10,GMM01,GKPR07,T02} problems can be seen as multi-layer problems where edge weights in different layers differ by a multiplicative factor. In contrast, weights of different layers are unrelated in our framework. \uktsp has some points in common with \emph{multi-depot} versions of TSP \cite{MRD07,RS10}: also in that case multiple tours are computed; however, their weights are measured w.r.t. a unique weight function. 

Recently Krishnaswamy et al.~\cite{matroid-median} considered a \emph{matroid median} problem,
where a set of open centers must form an independent set from a matroid. 
This can be viewed as a generalisation of a {\sc Union} problem, however 
in the \emph{matroid median} problem all the centers are in the same metric space. This setting is less general than ours as it does not allow for modelling
a setting with several completely unrelated metric spaces.

\paragraph{Preliminaries.}
\label{sec:preliminaries}
In \emph{covering} problems we are given a set $\requests$ of $n$ \emph{requests}, and a set $\items$ of \emph{items}, with costs $w:\items\to \R_{\geq 0}$. The goal is to satisfy all requests by selecting a subset of items at minimum cost.
We already defined \mst, \st, and \tsp. Here, nodes and edges represent requests and items (with costs $w:E\to \R_{\geq 0}$), respectively. In the \setc problem, requests are the elements of a universe $\requests$, and items $\items$ are subsets $S_1,\ldots,S_m$ of $\requests$. Any $S_i$ satisfies all the $v\in S_i$. \ufl is a generalization of \setc, where we are given a set $\facilities$ of \emph{facilities}, with opening costs $\open:\facilities \to \R_{\geq 0}$, and a set $\clients$ of \emph{clients}, with connection costs $\connect:\clients\times \facilities\to \R_{\geq 0}$. The goal is to compute a subset $\flisollet$ of \emph{open} facilities such that $\sum_{f\in \flisollet}\open(f)+\sum_{c\in \clients}\connect(c,\flisollet)$ is minimized. Here $\connect(c,\flisollet):=\min_{f\in \flisollet}\connect(c,f)$. We also say that $c$ is connected to (or served by) $\flichoicenolayer{c}:=\arg\min_{f\in \flisollet}\connect(c,f)$. If connection costs satisfy triangle inequality, the problem is called \mfl.

We can naturally define partial covering versions for the above problems: \kmst, \kst, \ktsp, \kufl, and \kmfl\footnote{In the literature \kufl\ often means that we are allowed to open at most $k$ facilities, while here we mean that we need to connect at least $k$ clients. Similarly for \kmfl. Sometimes \ksc indicates a \setc instance where the largest cardinality of a set is $k$, while our problem is sometimes called {\sc Partial Set Cover}.}.

It is straightforward to define union and intersection versions of  the above problems. In the rest of this paper, the number of layers is denoted by $h$, and variables associated to layer $i$ have an apex $i$ (e.g., $\weight^i$, $\open^i$, etc.), whereas $\OPT$ denotes the optimum solution, and $\opt$ its cost. By $N$ we denote the total number of requests and items (in all layers).

By standard reductions, a $\rho$-approximation for the \kmst problem implies a $2\rho$-approximation
for \kst and \ktsp. Moreover, a $\rho$-approximation for \ktsp gives a $2\rho$-approximation for \kmst. Essentially, the same reductions extend to the union and intersection versions of these problems. For this reason, in the rest of this paper we will consider the union and intersection version of \kmst only. 
%Unless differently stated, we consider the rooted version of \ukmst, where we are given a root $r^i$ per layer which must belong to the solution for that layer.
%The unrooted case is discussed in Appendix \ref{apx:union:unrooted}.

\section{Intersection Problems}\label{s:apx}
\newcommand{\sciproc}{\ensuremath{\mathtt{SCI}}}

%In this section we present our main results on the intersection problems.
%We start with some hardness results.
%Next, we present an algorithm for \iksc, which can be generalized to \ikufl. Based on a different approach, we also give an algorithm for \ikmst.

\subsection{\iksc}

In this section we present our approximation algorithm for \iksc. We recall that in this problem we are given $h$ collections $\scilayer{1}, \scilayer{2},\ldots,\scilayer{h}$ of subsets of a given universe $\sciuniv$, where $\scicostl{i}:\scilayer{i} \to \R_{\geq 0}$ is the cost of subsets in the $i$th collection. The goal is covering at least $k$ elements in all layers simultaneously, at minimum total cost.

The basic idea behind our algorithm is as follows. We consider any set $X$ in any layer, and any number $j\leq k$ of elements in $X$. We solve recursively, on the remaining layers, the intersection problem induced by $X$ with target $j$. The base of the induction is obtained by solving a one-layer \iksc{} problem, using the greedy algorithm which provides a $(1 + \ln k)$-approximation \cite{slavik}. We choose the set $X$ and the cardinality $j$ for which we obtain the best ratio of cost to number of covered elements. Next, we include covered elements in the solution under construction, and the problem is reduced consequently.

In order to highlight the main ideas of our approach, we focus on the special case $h=2$, and we neglect polylogarithmic factors in the analysis. It is easy, just more technical, to extend the same approach to $h>2$ and to refine (slightly) the approximation factor (See Appendix \ref{apx:fli}).
%Our algorithm is decribed in Figure \ref{alg:iksc}.

%We use $N$ to denote the overall number of nodes.

\begin{Figure}
%\begin{algorithm}
\begin{minipage}{\textwidth}
\small
\begin{algorithmic}[1]
  \Procedure{$\sciproc$}{$k, \sciuniv, \scilayer{1}, \scilayer{2}, \scicostl{1}, \scicostl{2}$}
  \State{$\sciK \gets \emptyset$, $\scisol{1} \gets \emptyset$, $\scisol{2} \gets \emptyset$}
  \Repeat
  \For{a=$1$ {\bf to} $2$}
    \For{all $X \in \scilayer{a}$}
      \For{$b := 1$ to $\min(k-|\sciK|, |X \setminus \sciK|)$}
      \State {Solve one-layer \iksc{} problem on layer $\overline{a}$}
      \State {with universe $X\setminus K$ and target $b$.}
%        \State {target $b$. Let $(K_{a,b,X}, \mathcal{A}_{a,b,X})$ be the solution obtained, of cost $C_{a,b,X}$}
      \EndFor
    \EndFor
  \EndFor
%  \State{$a', b', X' \gets$ values of the loops' iterators which minimize $C_{a,b,X} / b$.}
  \State{Let $(a',b',X')$ be the loop iterators which provide a solution $(\sciK',\scisollet')$}
  \State{minimizing the ratio of cost $C'$ to number $b'$ of covered elements.}
  \State{$\sciK \gets \sciK \cup \sciK'$, $\scisol{a'} \gets \scisol{a'} \cup \{X'\}$, $\scisol{\overline{a}'} \gets \scisol{\overline{a}'} \cup \scisollet'$}
  \Until{$|\sciK| = k$}
  \State{\Return $(\sciK, \scisol{1}, \scisol{2})$}
  \EndProcedure
\end{algorithmic}
\end{minipage}
%\end{algorithm}
 \caption{Approximation algorithm for $2$-layer \iksc. For $a\in \{1,2\}$, $\overline{a}$ is the other value in $\{1,2\}$}
\label{alg:iksc}
\end{Figure}

\begin{theorem}\label{thr:intersection:sc}
There is a $\tilde{O}(\sqrt{k})$-approximation algorithm for \iksc on two layers.
\end{theorem}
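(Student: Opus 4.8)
\smallskip
\noindent\emph{Proof idea.}
The plan is to analyze the greedy procedure of Figure~\ref{alg:iksc} by a density (cost-per-covered-element) argument, exactly as in the classical greedy for partial set cover, but aiming for a per-iteration guarantee of $\tilde O(\sqrt k)$ instead of $O(1)$. Fix an optimal solution and let $\mathcal O^1\subseteq\scilayer{1}$ and $\mathcal O^2\subseteq\scilayer{2}$ be the subcollections it chooses, $\opt=\opt^1+\opt^2$ the corresponding layer costs, and $K^\star$ (with $|K^\star|\ge k$) the set it covers in both layers. Consider the procedure at the start of an iteration of the outer loop: $K$ is the already-covered set, $r:=k-|K|\ge 1$, and $R:=K^\star\setminus K$, so $|R|\ge r$; for each $v\in R$ fix $X_v\in\mathcal O^1$ and $Y_v\in\mathcal O^2$ covering $v$ in layers $1$ and $2$, respectively. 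For a loop candidate $(a,X,b)$ the cost charged is $\scicostl{a}(X)$ plus the cost of the layer-$\bar a$ solution the subroutine returns on universe $X\setminus K$ with target $b$; by \cite{slavik} this subroutine cost is at most $1+\ln k$ times the optimum of that single-layer subproblem, and the candidate covers at least $b$ new ``both-layers'' elements.

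\smallskip
\noindent The crux is the following claim: in every such iteration, some loop candidate covers $\delta\ge 1$ new elements at cost at most $(2+\ln k)\,\opt\cdot\delta/\sqrt r$. I would prove it by a dichotomy on $\max\{|Z\cap R|: Z\in\mathcal O^1\cup\mathcal O^2\}$.
\begin{itemize}\itemsep0pt
\item[$\bullet$] If some $Z\in\mathcal O^{a}$ has $|Z\cap R|\ge\sqrt r$, take the candidate $(a,Z,\min(|Z\cap R|,r))$. The family $\mathcal O^{\bar a}$ covers all of $Z\cap R\subseteq Z\setminus K$ in layer $\bar a$ at cost $\opt^{\bar a}$, so the subproblem optimum is $\le\opt^{\bar a}$; hence the candidate costs at most $\scicostl{a}(Z)+(1+\ln k)\,\opt^{\bar a}\le (1+\ln k)\,\opt$ and covers $\delta\ge\sqrt r$ new elements.
\item[$\bullet$] Otherwise every $Z\in\mathcal O^1\cup\mathcal O^2$ satisfies $|Z\cap R|<\sqrt r$. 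Partition $R$ into the classes $R_X:=\{v\in R:X_v=X\}$, $X\in\mathcal O^1$; then $R_X\subseteq X\setminus K$, $|R_X|<\sqrt r$ and $\sum_X|R_X|=|R|\ge r$. For the candidate $(1,X,|R_X|)$ the sets $\{Y_v:v\in R_X\}$ cover $R_X$ in layer $2$, so the subproblem optimum is at most $\tilde c(X):=\sum\{\scicostl{2}(Y):Y=Y_v\text{ for some }v\in R_X\}$. Regrouping $\sum_X\tilde c(X)$ by $Y\in\mathcal O^2$ and using that the preimage $\{v\in R:Y_v=Y\}\subseteq Y\cap R$ has fewer than $\sqrt r$ elements (hence meets fewer than $\sqrt r$ of the pairwise disjoint classes $R_X$) gives $\sum_X\tilde c(X)<\sqrt r\,\opt^2$. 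Were every candidate $(1,X,|R_X|)$ to exceed the claimed ratio, i.e. $\scicostl{1}(X)+(1+\ln k)\tilde c(X)>(2+\ln k)\,\opt\,|R_X|/\sqrt r$ for all $X$, summing over $X$ would give a left-hand side larger than $(2+\ln k)\,\opt\sqrt r$, contradicting $\opt^1+(1+\ln k)\sqrt r\,\opt^2\le(1+\ln k)\sqrt r\,\opt<(2+\ln k)\,\opt\sqrt r$.
\end{itemize}
In both cases one checks that the target used lies in the loop's range $\{1,\dots,\min(k-|K|,|X\setminus K|)\}$, so the claim follows.

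\smallskip
\noindent Granting the claim, since the procedure always selects a minimum-ratio candidate, iteration $t$ covers $\delta_t\ge 1$ new elements at cost $C_t$ with $C_t/\delta_t\le(2+\ln k)\,\opt/\sqrt{r_t}$, where $r_t$ is the number of still-uncovered elements at its start and $r_{t+1}=r_t-\delta_t$ (with $\delta_t$ truncated at $r_t$ in the last iteration). Since $\delta_t/\sqrt{r_t}\le 2(\sqrt{r_t}-\sqrt{r_{t+1}})$, summing telescopes: $\sum_t\delta_t/\sqrt{r_t}\le 2\sqrt{r_1}\le 2\sqrt k$, so the algorithm's total cost is at most $2(2+\ln k)\sqrt k\,\opt=\tilde O(\sqrt k)\,\opt$. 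Polynomiality is immediate: each iteration tries $O(|\scilayer{1}\cup\scilayer{2}|\cdot k)$ candidates, each handled by one run of the polynomial-time greedy of \cite{slavik}, and there are at most $k$ iterations.

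\smallskip
\noindent The step I expect to be the real obstacle is the second case of the claim --- squeezing out a factor $\sqrt r$ rather than the trivial $r$ when every optimal set is ``light''. The right bookkeeping is to fix the assignment $v\mapsto(X_v,Y_v)$, which partitions $R$ into the classes $R_X$, and to charge the layer-$2$ cost of covering each class against $\opt^2$ via the light-set bound; this is precisely where the threshold $\sqrt r$ is dictated, and it is what turns the harmonic-type sum into $O(\sqrt k)$ rather than $\Theta(k)$.
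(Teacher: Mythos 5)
Your proof is correct and analyzes the same algorithm (Figure~\ref{alg:iksc}) with the same overall density dichotomy at threshold $\kappa=\sqrt{r}$, and your Case~1 (heavy set) matches the paper's Case~1 essentially verbatim. The only genuine divergence is in Case~2 (all optimal sets light). The paper's argument there is an element-level averaging: it sets $w(x):=w^1(\scioptchoice{1}{x})+w^2(\scioptchoice{2}{x})$, shows $\sum_{x\in \scioptK\setminus\sciK}w(x)\le\kappa\cdot\opt$ directly from the light-set hypothesis, picks a cheapest element $x_0$, and then the single candidate $(a,\scioptchoice{a}{x_0},b{=}1)$ already achieves ratio $\kappa\cdot\opt/|\scioptK\setminus\sciK|=\tilde O(\opt/\sqrt{r})$. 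You instead do a candidate-level averaging over $(1,X,|R_X|)$ with the regrouping bound $\sum_X\tilde c(X)<\sqrt r\,\opt^2$, which also works but is strictly more machinery for the same bound; the paper's $b=1$ shortcut makes the bookkeeping you flag as ``the real obstacle'' unnecessary. One small imprecision in your write-up: the sum in the contradiction argument should be restricted to those $X\in\mathcal O^1$ with $R_X\neq\emptyset$ so that every term corresponds to an actual loop candidate (the bounds $\sum_X w^1(X)\le\opt^1$, $\sum_X\tilde c(X)<\sqrt r\,\opt^2$, and $\sum_X|R_X|\ge r$ all survive this restriction), so nothing breaks, but it should be stated.
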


\begin{proof}
Consider the algorithm in Figure \ref{alg:iksc}. Its running time is polynomial, since $\sciproc$ procedure calls
the one-layer greedy algorithm $O(Nk^2)$ times.

Let $(\scioptsol{1},\scioptsol{2})\subseteq \scilayer{1}\times \scilayer{2}$ be the optimal solution, and let $\scioptK \subseteq (\cup_{S\in \scioptsol{1}} S) \cap (\cup_{S\in \scioptsol{2}}S)$ be any set of $k$ elements in the intersection.
For each element $x \in \scioptK$ and layer $i=1,2$, let us fix a set $\scioptchoice{i}{x} \in \scioptsol{i}$ that covers $x$. 
We prove that at each iteration of the main loop $C' / b' = \tilde{O}(\opt / \sqrt{k-|\sciK|})$.
%  Intuitively we want to show that the average cost of each newly covered element is not greater than the cost of the optimal solution divided by the number of elements still to be covered.
This implies that the total cost of the constructed solution is bounded by
$\sum_{i=0}^{k-1} \tilde{O}(\opt / \sqrt{k-i}) = \opt \cdot \tilde{O}(\sqrt{k}).$

Let $\scthreshold := \sqrt{k-|\sciK|}$. We consider two cases, depending on whether there exists a set $X$ in the optimal solution that covers at least $\scthreshold$ elements of $\scioptK \setminus \sciK$.

{\bf Case 1.} Assume that there exists $1 \le a \le 2$ and $X \in \scioptsol{a}$, such that for at least $\scthreshold$ elements $x$ of $\scioptK \setminus \sciK$ we have $\scioptchoice{a}{x} = X$.
  Let us focus on the moment when our algorithm considers taking the set $X$.
  Obviously we have $\scthreshold \leq k-|\sciK|$, therefore our algorithm considers
  covering $b:=\scthreshold$ elements of $X$.
  As the optimal solution does it, it may be done with cost $\opt$, so the
  call to the one layer algorithm returns a solution with cost $\tilde{O}(\opt)$. Hence we have $C' / b' = \tilde{O}(\opt/ \sqrt{k-|\sciK|})$.

  {\bf Case 2.} For each $1 \le a \le 2$ and every $X \in \scioptsol{a}$,
  at most $\scthreshold$ elements of $\scioptK \setminus \sciK$ satisfy $\scioptchoice{a}{x} = X$.
  For each $x \in \scioptK \setminus \sciK$, let
  $w(x) :=  w^1(\scioptchoice{1}{x}) + w^2(\scioptchoice{2}{x})$
  be the sum of the costs of sets covering $x$ in the optimal solution.
  We have
  $$\sum_{x \in \scioptK \setminus \sciK} w(x) =
  \sum_{a=1}^2 \sum_{X \in \scioptsol{a}} \sum_{x \in \scioptK \setminus \sciK:\ \scioptchoice{a}{x} = X} w^a(\scioptchoice{a}{x}) \leq \sum_{a=1}^2 \sum_{X \in \scioptsol{a}} w^a(X) \scthreshold \leq \scthreshold \cdot \opt.$$
  Thus there exists $x_0 \in \scioptK \setminus \sciK$
  such that $w(x_0) \leq \scthreshold \cdot \opt/|\scioptK \setminus \sciK|$.
  If we take any $a$ and consider the iteration with $X=\scioptchoice{a}{x_0}$
  and $b=1$, the algorithm computes a set of minimum cost $C_0 \leq w(x_0)$
  covering $x_0$. We can conclude that
  $$\frac{C'}{b'}\leq C_0 \leq \frac {\scthreshold \cdot \opt}  {|\scioptK \setminus \sciK|} = \tilde{O}(\opt / {\sqrt{k-|\sciK|}}).$$
%  because $|K^* \setminus K| \geq k-|K|$ and since the call to the one
%  layer problem for $j=1$ returns optimal choice of the covering set.
\qed
\end{proof}

%Algorithm~\ref{alg:approx} can be generalised to arbitrary number of layers $d$.
%The idea is that the algorithm recursively calls itself for one layer less
%until it reaches one layer.
%Furthermore with slight modifications it works also for the \sciname{} problem.
%After carefull calculations and taking treshold
%$P := 4^{1-1/d} (k-|K|)^{1-1/d} \log^{1/d}(k)$
%inside the proof to optimally balance logarithmic factors
%we get the following theorem, which is proven in Appendix.
The proof of the following theorem is in Appendix \ref{apx:fli} due to space limits.
\begin{theorem}\label{thr:intersection:sc2}
There exists a $(4k^{1-1/h}\log^{1/h}(k))$-approximation algorithm for \ikufl (hence for \iksc) running in $N^{O(h)}$ time.
\end{theorem}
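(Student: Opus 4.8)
The plan is to generalize both the algorithm and the analysis of Theorem~\ref{thr:intersection:sc} simultaneously in two directions: from \iksc{} to \ikufl{}, and from $h=2$ to arbitrary $h$, by recursing on the number of layers with the one‑layer greedy algorithm as the base case. The recursive procedure, on an $h$‑layer instance with target $k$, maintains the set $\sciK$ of already‑satisfied clients and, while $|\sciK|<k$, tries every triple $(a,f,b)$ with $a$ a layer, $f\in\facilities^a$, and $b\le k-|\sciK|$: it ``opens $f$ in layer $a$'' and recursively solves the residual $(h-1)$‑layer instance on the currently‑unsatisfied clients with target $b$, where serving a client $c$ incurs the extra additive cost $w^a(c,f)$ (so that the layer‑$a$ connection costs are charged inside the recursion); the cost of the triple is $o^a(f)$ plus the value returned by the recursive call. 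The procedure commits to the triple minimizing the ratio $C'/b'$ of cost to number of newly satisfied clients, adds those clients to $\sciK$, and repeats. The base case $h=1$ is the one‑layer greedy algorithm, an $O(\log k)$‑approximation; crucially, a one‑layer instance with an extra additive per‑client charge is again an ordinary one‑layer \kufl{} instance (with connection costs $w(c,f)+w^a(c,f)$), so the base case is unchanged. Since \iksc{} is the special case in which facilities are sets with $0/\infty$ connection costs, this covers \iksc{} as well.

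\emph{Analysis.} Mirroring the proof of Theorem~\ref{thr:intersection:sc}, I would show by induction on $h$ that the procedure is a $(4k^{1-1/h}\log^{1/h}k)$‑approximation (for $h=1$ this is the greedy guarantee). Fix an optimal solution, a set $\scioptK$ of $k$ clients it satisfies, and the facility $\scioptchoice{i}{x}$ serving each $x\in\scioptK$ in layer $i$. It suffices to prove that at an iteration with $m:=k-|\sciK|$ clients still to satisfy, $C'/b'\le 4^{1-1/h}\,m^{-1/h}\log^{1/h}(k)\cdot\opt$ up to lower‑order terms: summing this non‑increasing bound over $m=1,\dots,k$ yields $4^{1-1/h}\cdot\tfrac{h}{h-1}\,k^{1-1/h}\log^{1/h}(k)\cdot\opt$, and the numerical inequality $(h/(h-1))^h\le 4$ — valid for all $h\ge 2$, tight at $h=2$ — gives exactly the claimed bound. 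Set $\scthreshold:=\lceil (4m)^{(h-1)/h}\log^{1/h}(k)\rceil$ and split into the two cases of Theorem~\ref{thr:intersection:sc}. If for some layer $a$ and facility $f$ at least $\scthreshold$ clients $x\in\scioptK\setminus\sciK$ have $\scioptchoice{a}{x}=f$, then the triple $(a,f,\scthreshold)$ admits — by restricting the optimum to the other $h-1$ layers, whose opening/connection costs plus the layer‑$a$ entrance fees $w^a(\cdot,f)$ are disjoint pieces of $\opt$ — a residual solution of cost $\le\opt$, so by the inductive hypothesis the recursive call returns cost $\le 4\scthreshold^{1-1/(h-1)}\log^{1/(h-1)}(\scthreshold)\cdot\opt$ covering $\ge\scthreshold$ clients, whence $C'/b'\le 4\scthreshold^{-1/(h-1)}\log^{1/(h-1)}(k)\cdot\opt$. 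Otherwise every optimal facility serves fewer than $\scthreshold$ such clients; the double‑counting of Theorem~\ref{thr:intersection:sc} then bounds $\sum_{x\in\scioptK\setminus\sciK}\bigl(\sum_i w^i(x,\scioptchoice{i}{x})+\sum_i o^i(\scioptchoice{i}{x})\bigr)$ by $O(\scthreshold)\cdot\opt$, so some $x_0$ is cheap, and the triple $(1,\scioptchoice{1}{x_0},1)$ — where the target‑$1$ recursion is exact — gives $C'/b'\le O(\scthreshold)\cdot\opt/m$. The value of $\scthreshold$ is chosen precisely to equalize the two bounds at $\Theta(m^{-1/h}\log^{1/h}(k))\cdot\opt$.

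\emph{Running time and main difficulty.} The recursion has depth $h$, and at each level the outer loop iterates $\le k$ times over $O(Nk)$ triples, so the number of base‑case greedy calls is $N^{O(h)}$. I expect the real work to lie not in the structure of the argument, which closely follows Theorem~\ref{thr:intersection:sc}, but in the constant‑tracking: since $(h/(h-1))^h\le 4$ is tight at $h=2$ there is almost no slack, so one must carefully absorb the $+1$'s coming from each opened facility and from target‑$1$ subcalls, the integrality of $\scthreshold$, and small $m$ (presumably by treating $m=O(k^{1-1/h})$ separately with the trivial bound $C'/b'\le\opt$). One should also check that the entrance‑fee reformulation of the residual problem remains a legitimate \kufl{} (resp.\ \setc) instance at every recursion level, so that the base‑case $O(\log k)$ guarantee applies verbatim.
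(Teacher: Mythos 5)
Your proposal matches the paper's proof in Appendix~\ref{apx:fli} essentially step for step: the same recursion on the number of layers, the same entrance-fee (per-client use cost) device to absorb the committed facility's connection costs into the residual $(h-1)$-layer instance, the greedy $(1+\ln k)$ guarantee as the $h=1$ base (the paper also adds an explicit $k=1$ brute-force base so that target-$1$ subcalls are exact, as you require), the threshold $\scthreshold=\Theta(m^{1-1/h}\log^{1/h}k)$ with the same two-case charging argument, the same integral bound on the greedy sum, and the same $(h/(h-1))\le 4^{1/h}$ inequality to reach the stated constant. The book-keeping concerns you flag (ceiling on $\scthreshold$, small $m$) are also present in the paper's proof and handled there in the same informal way.
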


\subsection{\ikmst}\label{sec:intersection:kmst}

In this section we present a simple approximation algorithm for \ikmst. 
Recall that here we are given a graph $G=(V,E)$ on $n$ nodes, and $h$ edge-weight functions $w^1,\ldots,w^h$.
By taking the metric closures of $w^i$ we may assume that $G$ is complete.
The goal is computing a tree $T^i$ for each layer such that $\sum_iw^i(T^i)$ is minimized and $|\bigcap_i V(T^i)|\geq k$.

The algorithm is very simple: We consider a new metric $w$ defined as a sum $w(e):=\sum_i w^i(e)$ for each $e\in E$, and compute a $2$-approximate solution of the resulting (one-layer) \kmst\ problem using the algorithm in \cite{Garg05stoc}.

In Appendix~\ref{sec:ommited-kmsti} we prove the following theorem.
\begin{theorem}\label{thm:kmst}
The \ikmst\ algorithm above is $16k^{1-1/h}$-approximate.
%It can be modified to solve \iktsp{} with $32k^{1-1/h}$ approximation guarantee.
\end{theorem}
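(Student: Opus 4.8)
The plan is to reduce everything to a single‑layer estimate. First note that the algorithm is trivially correct and polynomial: it outputs the \emph{same} tree $T$ in every layer, so $T$ is a tree, it spans at least $k$ vertices, and those vertices lie in $\bigcap_i V(T^i)$ since $T^i=T$; and \kmst admits a polynomial $2$‑approximation \cite{Garg05stoc}. Since the returned solution costs $\sum_i w^i(T)=w(T)\le 2\cdot(\text{optimal single‑layer }k\text{-MST cost in the metric }w)$, it suffices to exhibit \emph{some} tree spanning $k$ vertices whose $w$‑cost is at most $8k^{1-1/h}\opt$. To that end I would fix an optimal intersection solution $(T^1_*,\dots,T^h_*)$ and a set $S$ of exactly $k$ vertices in $\bigcap_i V(T^i_*)$. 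Each $T^i_*$ is a tree connecting $S$ in the metric $w^i$, so double‑and‑shortcut of $T^i_*$ yields a Hamiltonian path on $S$, hence $\MST_{w^i}(S)\le 2w^i(T^i_*)$; summing, $\sum_i\MST_{w^i}(S)\le 2\opt$. It therefore remains to build a single tree $\tau$ spanning $S$ with
\[
w(\tau)=O\!\left(k^{1-1/h}\right)\cdot\sum\nolimits_i\MST_{w^i}(S).
\]

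I would construct $\tau$ by induction on the number of layers $h$. For $h=1$, take $\tau=\MST_{w^1}(S)$. For the step, double‑and‑shortcut $\MST_{w^h}(S)$ to obtain a Hamiltonian path $H$ on $S$ with $w^h(H)\le 2\MST_{w^h}(S)$, and cut $H$ greedily by $w^h$‑length into $r=O(k^{1/h})$ sub‑paths, each of $w^h$‑diameter $O(\MST_{w^h}(S)/k^{1/h})$ (any single edge of $H$ longer than this threshold is made its own two‑vertex piece). Let $S_1,\dots,S_r$ be the vertex sets of the pieces, with $m_l:=|S_l|$ and $\sum_l m_l=k$. On each $S_l$, using $\MST_{w^i}(S_l)\le\MST_{w^i}(S)$ for $i<h$ (as $S_l\subseteq S$), apply the induction hypothesis to the first $h-1$ metrics to get a tree $\tau_l$ spanning $S_l$ with $\sum_{i<h}w^i(\tau_l)=O(m_l^{1-1/(h-1)})\sum_{i<h}\MST_{w^i}(S)$. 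Finally glue the $\tau_l$'s into one tree $\tau$ by adding a minimum spanning tree in the combined metric $w$ over one representative vertex of each $S_l$.

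For the cost bound: in layer $h$, each $\tau_l$ lives inside $S_l$, which has $w^h$‑diameter $O(\MST_{w^h}(S)/k^{1/h})$, so $w^h(\tau_l)\le(m_l-1)\cdot O(\MST_{w^h}(S)/k^{1/h})$; since $\sum_l(m_l-1)<k$ this sums to $O(k^{1-1/h}\MST_{w^h}(S))$, and the gluing tree plus the two‑vertex pieces contribute only $O(\MST_{w^h}(S))$ more in layer $h$. In a layer $i<h$ the induction gives $\sum_l w^i(\tau_l)=\bigl(\sum_l O(m_l^{1-1/(h-1)})\bigr)\MST_{w^i}(S)$; because $r=O(k^{1/h})$ and $\sum_l m_l=k$, concavity of $t\mapsto t^{1-1/(h-1)}$ yields $\sum_l m_l^{1-1/(h-1)}\le r^{1/(h-1)}k^{1-1/(h-1)}=O\!\bigl(k^{1/(h(h-1))}\bigr)\cdot k^{(h-2)/(h-1)}=O(k^{1-1/h})$ — the exponents telescoping to exactly $1-1/h$ — so this sum is again $O(k^{1-1/h}\MST_{w^i}(S))$, with the gluing tree and degenerate pieces lower order. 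Adding the $h$ layer bounds gives $w(\tau)=O(k^{1-1/h})\sum_i\MST_{w^i}(S)=O(k^{1-1/h})\opt$; tracking the three factors of $2$ (metric‑completion of each $T^i_*$, the double‑and‑shortcut inside the recursion, and the $k$‑MST approximation) through the bookkeeping gives the claimed $16k^{1-1/h}$.

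The delicate step — and the one I would spend the most effort on — is keeping the hidden constant independent of $h$. A single path cannot be cut so that the number of pieces, their cardinalities, and their $w^h$‑diameters are all controlled at once; cutting by $w^h$‑cost forces a $\Theta(1)$ overhead in the piece count at each of the $h$ recursion levels, and one must check that this overhead, together with the doubling of $\MST_{w^h}(S)$ and the factor $r^{1/(h-1)}$ from the concavity step, telescopes to a bounded constant rather than blowing up like $2^{h}$. Two further points need care: the up to $\Theta(k^{1/h})$ over‑long edges of each Hamiltonian path must be absorbed within the same budget, and the per‑piece representatives must be chosen consistently across recursion levels so that the gluing steps do not reintroduce $\Theta(k)$ long edges into already‑processed layers. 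Everything else — correctness, polynomial running time, and the doubling inequalities $\MST\le 2\cdot\text{Steiner}$ — is routine.
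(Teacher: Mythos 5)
Your high-level plan — cut a Hamiltonian tour of one layer into roughly $k^{1/h}$ pieces of small $w^h$-diameter, recurse on the remaining layers inside each piece, then glue — is the same decomposition idea that underlies the paper's Lemma~\ref{lem:kmst2}. However, your recursion as written has a genuine gap at exactly the spot you flag as ``delicate.'' The inequality you invoke, $\MST_{w^i}(S_l)\le\MST_{w^i}(S)$ for $S_l\subseteq S$, is false in general metrics: take a star with center $c$ at distance $1$ from three leaves $a,b,d$ (so leaf--leaf distances are $2$); then $\MST(\{a,b,c,d\})=3$ but $\MST(\{a,b,d\})=4$. The correct bound is only $\MST_{w^i}(S_l)\le 2\,\MST_{w^i}(S)$, obtained by doubling and shortcutting. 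Since you recompute $\MST_{w^i}(S_l)$ at every level of the recursion, this factor of $2$ compounds across the $h-1$ recursive calls and the constant hidden in your $O(k^{1-1/h})$ grows like $2^{h}$, which destroys the claimed bound. Your closing paragraph correctly identifies this risk but does not resolve it, so as it stands the argument does not establish a constant independent of $h$.

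The paper avoids this exactly where you get stuck. In Lemma~\ref{lem:kmst2} the paths $P_i^j$ carried down the nested sets $K=K_0\supseteq K_1\supseteq\cdots\supseteq K_{h-1}$ are always obtained by \emph{shortcutting a fixed Hamiltonian cycle $C^j$ on the original set $K$}, not by re-taking an MST of the current subset. Shortcutting never increases length, so every cost bound is measured against $w^j(K)$ (the MST cost of the original $K$) and no per-level factor of $2$ accumulates; the only loss per layer is the single doubling used to build $C^j$ from the MST of $K$. To repair your proof you would have to strengthen the induction hypothesis in the same way: pass down, for each remaining layer $i$, a tour on the current subset obtained by shortcutting a tour on the \emph{original} $S$ together with its length bound $L^i\le 2\,\MST_{w^i}(S)$, and recurse on those rather than on fresh MSTs. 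Once you do that you are, in effect, reproving Lemma~\ref{lem:kmst2}, and the contract-one-cheap-edge-at-a-time bookkeeping the paper uses afterwards is then a cleaner way to sum up the per-step costs than tracking piece cardinalities $m_l$ and the concavity estimate (which does telescope correctly in the exponents, as you computed, but still inherits the bad constant). Two smaller remarks: your gluing step is actually fine if you bound each glue edge by the $w^i$-diameter of $S$ (at most $2\,\MST_{w^i}(S)$ per layer, times $O(k^{1/h})$ edges, which is lower order), rather than appealing to $\MST_w$ of the representatives, which is not comparable to $\sum_i\MST_{w^i}(S)$; and the promise that ``tracking the three factors of $2$ gives $16$'' cannot be taken for granted until the $2^h$ issue above is eliminated.
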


The analysis of the approximation ratio of the above algorithm given in Theorem~\ref{thm:kmst} is tight up to a factor $O(h)$ (see Appendix~\ref{sec:ommited-kmsti}).

\subsection{Approximation Hardness}\label{s:hard}

This section is devoted to the approximation hardness of \ikmst, \iksc (hence also of \ikufl) and \ikmfl. We use reductions from the \kds problem: find the induced subgraph on $k$ nodes with the largest possible number of edges.
The fact that partial coverage problems can be as hard 
as \kds is already known.
Hajiaghayi and Jain~\cite{hj06} use \kds to show that a partial
coverage version of the {\sc Steiner Forest} problem has no polylogarithmic approximation.
In particular they introduce the \leds problem where one is to find the minimum
number of vertices in a graph, whose induced subgraph has at least $\ell$ edges.
Moreover Hajiaghayi and Jain show a relation between approximation ratios for \kds and \leds.
In order to simplify our reductions we extend the result on \leds to bipartite graphs
and prove the following theorems in Appendix~\ref{apx:kmst-hardness}.
%The proof of the following theorem is given in Appendix \ref{apx:kmst-hardness}.

%\begin{theorem}\label{thm:dense-v2e}
%If there exists an $f(n)$-approximation algorithm for \leds on bipartite graphs, then there exists a $16(f(2n))^2$-approximation algorithm for \kds on arbitrary graphs.
%\end{theorem}

%Using Theorem~\ref{thm:dense-v2e} together with simple reductions we obtain the following theorems proved in Appendix~\ref{apx:kmst-hardness}.

\begin{theorem}\label{thm:dse-sci}
If there exists an $f(n)$-approximation algorithm for unweighted \iksc on two layers or for \ikmfl on two layers, then there exists a $16(f(2m))^2$-approximation algorithm for \kds.
\end{theorem}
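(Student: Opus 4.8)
The plan is to route the reduction through the \emph{bipartite} version of \leds, for which \cite{hj06} already supplies a quadratic amplification, and to observe that bipartite \leds is essentially a special case of both unweighted two-layer \iksc and two-layer \ikmfl. Given a \kds instance $H=(V_H,E_H)$ with $|E_H|=m$ and target $k$, I would first form the \emph{bipartite double cover} $G$: take disjoint copies $A,B$ of $V_H$ and, for every $\{u,v\}\in E_H$, add the two edges $\{u_A,v_B\}$ and $\{v_A,u_B\}$, so that $G$ has exactly $2m$ edges. Write $e_G(X)$ for the number of edges $G$ induces on a vertex set $X$, $e_G(X,Y)$ for the number of $G$-edges between $X$ and $Y$, and similarly $e_H(\cdot)$. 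Two elementary facts drive everything: (i) if $W\subseteq V_H$ has $e_H(W)=\ell$, then the set $W'\subseteq V(G)$ consisting of both copies of every vertex of $W$ satisfies $|W'|=2|W|$ and $e_G(W')=2\ell$; (ii) conversely, if $S\subseteq V(G)$ has $e_G(S)=\ell'$ and $W\subseteq V_H$ collects the vertices having a copy in $S$, then $|W|\le|S|$ and $e_H(W)\ge\ell'/2$ (each $H$-edge is the image of at most two $G$-edges). By (i), an optimal \kds solution---$k$ vertices, $\ell^*$ edges---certifies that the bipartite \leds optimum on $G$ with target $2\ell^*$ is at most $2k$.

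Next I would give the two encodings. For unweighted two-layer \iksc: take the universe to be $E(G)$; for each $a\in A$ put into layer $1$ the set of $G$-edges incident to $a$, for each $b\in B$ put into layer $2$ the set of $G$-edges incident to $b$, and give every set cost $1$. A solution picks $S_A\subseteq A$ in layer $1$ and $S_B\subseteq B$ in layer $2$; an edge $\{a,b\}$ is covered in both layers exactly when $a\in S_A$ and $b\in S_B$, i.e.\ when it lies in $G[S_A\cup S_B]$, and the cost is $|S_A|+|S_B|=|S_A\cup S_B|$. Thus solving this two-layer \iksc instance \emph{is} solving bipartite \leds on $G$, and the instance has $2m$ elements. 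The same incidence structure is realized by two-layer \ikmfl: facilities are the $A$-vertices in layer $1$ and the $B$-vertices in layer $2$, all with unit opening cost; clients are the $2m$ edges of $G$; the connection cost of a client to an incident facility is a tiny $\eps>0$ and to any non-incident facility a large $\Delta$. Choosing $\eps$ small enough makes its contribution lower order, and choosing $\Delta$ larger than the cost of opening every facility forces every near-optimal solution to use only incident connections, so it again just selects $S_A,S_B$ with $e_G(S_A,S_B)\ge$ target; one must check the triangle inequality, which is immediate since all non-incident distances can be set equal to $\Delta$ and all ``co-located'' client distances small. Consequently, an $f(n)$-approximation for either problem on an instance with $n$ requests yields an $f(2m)$-approximation for bipartite \leds on $G$.

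It remains to amplify a bipartite \leds approximation into a \kds approximation, along the lines of \cite{hj06}. Since $\ell^*$ is unknown, for each guess $g\in\{1,\dots,m\}$ run the \leds-approximation on $G$ with target $2g$, project the returned vertex set to $W\subseteq V_H$ via (ii), pad $W$ with arbitrary vertices to size exactly $k$ if $|W|<k$, and otherwise pass to a densest $k$-vertex induced subgraph of $H[W]$ by iteratively deleting a vertex of currently minimum degree; output the densest $k$-set found over all guesses. For the guess $g=\ell^*$ the approximation returns $S$ with $|S|\le 2k\,f(2m)$ and $e_G(S)\ge 2\ell^*$, hence $|W|\le 2k\,f(2m)$ and $e_H(W)\ge\ell^*$; the greedy deletion keeps at least $e_H(W)\cdot\frac{k(k-1)}{|W|(|W|-1)}\ge\ell^*\cdot\frac{k(k-1)}{(2k\,f(2m))^2}=\Omega\!\big(\ell^*/(f(2m))^{2}\big)$ edges on $k$ vertices. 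Collecting the constant losses---the factor $2$ from the double cover, the factor $2$ from (ii), and $\tfrac{k-1}{k}\ge\tfrac{1}{2}$ for $k\ge2$---gives the stated $16\,(f(2m))^{2}$ bound; the procedure is clearly polynomial, and the degenerate cases ($k\le1$, $m=0$, $n_H<k$) are trivial.

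The main obstacle---the rest being gadget verifications and the textbook averaging bound---is to make the two layers encode the \emph{induced}-subgraph edge count \emph{and} keep the \leds optimum only $O(k)$ rather than $\Theta(k^2)$; this is exactly why the bipartite double cover is the right reduction: its $2\ell^*$ target edges come essentially for free together with the $2k$ chosen vertices, whereas for the plain vertex--edge incidence bipartite graph (which also has $2m$ edges) the \leds optimum can be as large as $\Theta(k^2)$ and the translation back to \kds breaks down. The remaining fragile point is the metric case of \ikmfl: one must make sure the triangle inequality still permits the non-incident connection costs to be prohibitively large, so that the instance does not collapse to one whose optimum ignores $G$ altogether.
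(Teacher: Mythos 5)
Your proof is correct and follows essentially the same path as the paper's. The paper also routes the reduction through the bipartite double cover (their Lemma \emph{ds-bipartite}), encodes bipartite \leds\ as two-layer \iksc\ via the exact vertex--edge incidence gadget you describe (their Lemma \emph{lem:dse-sci}), and amplifies a bipartite \leds\ approximation into a \kds\ approximation by guessing the target and extracting a dense $k$-vertex subgraph (their Lemmas \emph{ds-dse} and \emph{extract-div-c}); the only organizational difference is that the paper goes arbitrary $\kds \to$ bipartite $\kds \to$ bipartite \leds, stating the steps as separate lemmas and pipelining them, whereas you inline them into a single chain in the opposite composition order. Two minor points: your accounting is actually slightly tighter than claimed (it yields $8(f(2m))^2$, not $16(f(2m))^2$, which of course still proves the theorem), and for \ikmfl\ the paper simply uses a $\{0,\infty\}$ connection cost and observes each client has finite distance to exactly one facility per layer, avoiding your $\eps/\Delta$ bookkeeping.
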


\begin{theorem}\label{thm:dse-mst}
If there exists an $f(n)$-approximation algorithm for \ikmst on two layers, then there exists a $16(f(2n+2m+2))^2$-approximation algorithm for \kds.
\end{theorem}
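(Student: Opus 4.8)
The plan is to compose two reductions. The appendix already establishes, extending the argument of Hajiaghayi and Jain~\cite{hj06} to bipartite graphs, that an $\alpha$-approximation for \leds yields an $O(\alpha^2)$-approximation for \kds; so it suffices to turn the assumed $f(\cdot)$-approximation for two-layer \ikmst into an $O(f(\cdot))$-approximation for \leds, while tracking how the instance size grows. Given a \leds/\kds instance, i.e.\ a graph $H=(V_H,E_H)$ with $|V_H|=n$, $|E_H|=m$ and a target number $\ell$ of edges to cover, I would build a two-layer instance from two vertex-disjoint graphs $G^1,G^2$. Each $G^a$ has a root $r^a$, a \emph{vertex-node} $x_v^a$ for every $v\in V_H$, and an \emph{edge-node} $y_e^a$ for every $e\in E_H$; it carries an edge $r^ax_v^a$ of weight $1$, and, after fixing an arbitrary orientation of every $e=uw$ of $H$ as $u\to w$, an edge $x_u^1y_e^1$ of weight $0$ in $G^1$ and an edge $x_w^2y_e^2$ of weight $0$ in $G^2$ (so each $G^a$ is itself a ``spider'' tree, the edge-nodes being pendant leaves). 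The $m$ edge-objects of $H$ are the shared requests, realized by $y_e^1$ in layer $1$ and by $y_e^2$ in layer $2$; the roots and vertex-nodes are non-terminal, which is harmless since \ikmst and \ikst are interreducible up to a factor $2$. Set the target $k:=\ell$. The instance has $2(n+m+1)=2n+2m+2$ nodes, matching the claimed bound.

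\textbf{Correspondence.} For soundness, a set $S\subseteq V_H$ with $|E_H[S]|\ge\ell$ gives the solution taking, in each layer $a$, the star on $r^a$ and $\{x_v^a:v\in S\}$ together with every $y_e^a$ whose designated endpoint lies in $S$ (each attaches at weight $0$); its cost is at most $2|S|$, and the intersection contains every $e$ with both endpoints in $S$, hence at least $\ell$ requests, so the optimum \ikmst cost is at most $2L^*$ where $L^*$ is the optimum \leds value. For completeness, from any \ikmst solution $(T^1,T^2)$ of cost $c$ --- we may assume $r^a\in V(T^a)$, at additive cost $O(1)$ --- set $S:=\{v:x_v^1\in V(T^1)\}\cup\{v:x_v^2\in V(T^2)\}$. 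Since the edge-nodes are pendant leaves, a vertex-node $x_v^a$ can join the rest of $T^a$ only through its weight-$1$ edge to $r^a$, so $w^a(T^a)=|\{v:x_v^a\in V(T^a)\}|$ and $|S|\le w^1(T^1)+w^2(T^2)=c$. Moreover, if request $e=uw$ is in the intersection then $y_e^1\in V(T^1)$ forces its unique neighbor $x_u^1$ into $T^1$ and $y_e^2\in V(T^2)$ forces $x_w^2$ into $T^2$, so $u,w\in S$ and $e\in E_H[S]$; thus $|E_H[S]|\ge k=\ell$, i.e.\ $S$ is feasible for \leds. Running the assumed algorithm therefore returns a \leds solution of size at most $f(2n+2m+2)\cdot 2L^*+O(1)$, an $O(f(2n+2m+2))$-approximation.

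\textbf{Assembling and the main obstacle.} Feeding this $O(f(\cdot))$-approximation for \leds into the \leds-to-\kds relation squares the ratio, giving an $O((f(2n+2m+2))^2)$-approximation for \kds; what remains is to push the constants through that relation (the step guessing the optimal number of covered edges, and the averaging/partitioning step responsible for the square) and to note that discarding the additive $O(1)$ terms and the factor-$2$ \ikmst/\ikst slack costs only constant factors, so that everything fits inside $16$. The one genuinely delicate point is robustness to the metric-closure reduction of Section~\ref{sec:intersection:kmst}: in the closure an edge-node acquires cheap (weight-$2$) links to vertex-nodes other than its designated one, which could let a tree contain $y_e^a$ without the ``correct'' vertex-node and decouple an edge-request from the membership of its endpoints in $S$. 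I would neutralize this by rescaling the root--vertex weights to a large value $M$ while keeping the designated attachments at $0$ --- so reaching any edge-node still costs at least $M$ and any other attachment at least $2M$ --- or, equivalently, argue directly on the uncompleted spider, where each $y_e^a$ is literally a degree-one leaf, observing that expanding the shortcut edges of a closure solution back into paths never increases its cost and only enlarges the spanned vertex set. With that in place the forcing arguments above go through verbatim; I expect this closure-robustness check, rather than the reduction itself, to be the only real work.
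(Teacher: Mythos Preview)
Your reduction has a structural gap. In \ikmst both layers live on a \emph{single} vertex set, and the covering constraint is $|V(T^1)\cap V(T^2)|\ge k$ for the literal intersection of node sets. With $G^1,G^2$ vertex-disjoint and $y_e^1\ne y_e^2$, any finite-cost trees satisfy $V(T^1)\subseteq V(G^1)$ and $V(T^2)\subseteq V(G^2)$, so the intersection is empty and no target $k\ge 1$ is feasible. The phrase ``request $e$ is realized by $y_e^1$ in layer~$1$ and by $y_e^2$ in layer~$2$'' presupposes a model in which each layer carries its own graph with only an abstract bijection between terminals; that is a natural generalisation, but it is not the problem you are given, and your node count $2(n+m+1)$ is the count in that wrong model. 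This, not the metric-closure robustness you flag, is the real obstacle.

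The fix is easy and your orientation idea survives it: identify $y_e^1=y_e^2=:y_e$, put all nodes on one vertex set, and set every edge of the $G^{3-a}$ side to weight $\infty$ in layer $a$. Now $T^a$ is forced into the $G^a$ half, the intersection consists only of edge-nodes, and your forcing argument (each $y_e$ is a leaf whose unique finite-weight neighbour in layer~$a$ is the designated vertex-node) goes through verbatim. The node count becomes $2+2n+m\le 2n+2m+2$; since the argument of Lemma~\ref{ds-dse} does not actually use bipartiteness, this already yields a $2\bigl(2f(2n+2m+2)\bigr)^2\le 16\bigl(f(2n+2m+2)\bigr)^2$ approximation for \kds.

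For comparison, the paper reaches the same conclusion by a different route (Lemma~\ref{lem:dse-mst}). It first passes to \emph{bipartite} \leds via Lemma~\ref{ds-bipartite}, and then uses the bipartition $V_1\cup V_2$ in place of your orientation: a single root and a single copy of each vertex-node, with $V_1$ attached to the root at weight~$1$ in layer~$1$ and weight~$\infty$ in layer~$2$, and symmetrically for $V_2$. The $\infty$ weights, rather than duplication, keep vertex-nodes out of the intersection; bipartiteness is what makes one copy suffice. That detour is precisely the source of both the argument $2n+2m+2$ in the statement (the doubling of $n$ and $m$ happens in Lemma~\ref{ds-bipartite}, not in the \ikmst gadget) and of the constant~$16=8\cdot 2$. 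Your repaired orientation argument sidesteps the bipartite step entirely and gives a slightly looser cost-to-$|S|$ correspondence (a factor~$2$ instead of equality), which is why the constants still land inside~$16$.
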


Theorems \ref{thm:dse-sci} and \ref{thm:dse-mst}
suggest that the existence of a polylogarithmic approximation algorithm
for the considered problems is rather unlikely (or at least very hard to achieve).

\section{Union Problems}

In this section we present our results for \ukmst and \ukmfl. For \ukmst, we first consider the rooted case in Sections \ref{sec:union:hardness} and \ref{sec:union:lp} and then the unrooted one in Section \ref{union:unrooted}. The \umst\ problem is a variant of \ukmst\ with $k=n$.

\subsection{Approximation Hardness}\label{sec:union:hardness}

\begin{theorem}\label{thr:hardnessBounded}
Rooted \ukmst and \ukmfl are APX-hard for any $h\geq 1$. \umst is APX-hard for any $h\geq 2$.
\end{theorem}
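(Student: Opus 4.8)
The plan is to prove the three APX-hardness claims by direct reductions from known APX-hard problems. For rooted \ukmst, I would reduce from the vertex-cover restricted version that underlies APX-hardness of \setc (or equivalently from \setc on instances with bounded set size / bounded frequency), exploiting the remark in the preliminaries that \ukst generalizes \ksc and hence, in rooted form, so does \ukmst. Concretely: given a \setc instance with elements $\requests$ and sets $S_1,\dots,S_m$, build one layer per set (or one layer encoding the set structure), place all elements of $\requests$ as requests, add a root $r$, give each set $S_i$ a "star" gadget rooted at $r$ with a cheap edge of cost $1$ to a set-vertex $v_i$ and zero-cost edges from $v_i$ to the elements it contains, and set $k=|\requests|+$ (number of auxiliary vertices forced in). A union solution that connects all requests must, in some layer, reach each element, which corresponds exactly to picking a set cover; the total edge cost equals the number of sets chosen up to an additive constant. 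Since \setc is $\Omega(\log n)$-hard in general but APX-hard already on restricted instances (e.g. $3$-Set Cover / Vertex Cover-type instances), this yields APX-hardness of rooted \ukmst for $h\geq 1$; taking $h=1$ recovers the known APX-hardness of \kst, so the statement is consistent.

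For rooted \ukmfl I would run an analogous reduction from \ufl / \setc: each layer is a facility-location instance sharing the common client set $\requests$; open facilities in a layer correspond to chosen sets, connection cost $0$ for covered elements and prohibitively large otherwise, opening cost $1$ per facility. A client is satisfied (cheaply connected) in the union iff some layer opens a facility covering it, so feasible union solutions of bounded cost again correspond to set covers of bounded size, preserving APX-hardness. The root merely forces the solution in at least one layer to be nonempty/anchored, which is harmless. One must check that the standard APX-hard \setc instances used have each element in at least one set of bounded size so that the gap is a multiplicative constant, not logarithmic.

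For \umst\ (the case $k=n$) with $h\geq 2$, the previous reduction degenerates because requiring all $n$ requests forces essentially a spanning structure in each layer, so I would instead reduce from a genuinely two-sided problem, most naturally from (a bounded-degree, APX-hard version of) \tsp or from Max-$2$-SAT-style min covering, arranging two layers so that each layer must span all nodes but the two layers can "share the load": one layer gets a cheap Hamiltonian-ish backbone, the other gets cheap shortcuts, and the optimal way to split spanning responsibility encodes the hard combinatorial choice. Alternatively, reduce from \setc with $h=2$ by using the second layer to realize the "or" for free on a fixed subset of requests while the first layer must pay per chosen set; making $k=n$ just adds a fixed cheap tree in each layer to cover the leftover dummy requests. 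The main obstacle is exactly this $k=n$ case: with $k<n$ the partial-covering slack gives us room to encode the selection problem, but at $k=n$ one must engineer gadgets where spanning everything is forced yet the cost still depends linearly on a hard yes/no pattern, and one must verify the completeness/soundness gap survives the metric closure (triangle inequality) that is implicitly taken. I expect the bulk of the technical work — and the place a careful case analysis is needed — to be bounding the cost of an optimal union solution from below in terms of the size of an optimal set cover in the $h\geq 2$, $k=n$ construction, i.e. ruling out cheap "cheating" trees that exploit the interaction between layers.
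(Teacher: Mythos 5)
There is a genuine gap, mostly in the \umst\ case, and substantial over-complication in the other two.

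For the first claim (rooted \ukmst\ and \ukmfl\ APX-hard for any $h\geq 1$), the paper's argument is a one-liner: the $h=1$ problems \kmst\ and \kmfl\ are already known to be APX-hard, and any instance of the single-layer problem becomes an instance with $h$ layers by appending $h-1$ dummy layers of infinite weight. Your proposed reduction from \setc\ with ``one layer per set'' is both overkill and, worse, does not actually prove the claim: it produces a number of layers that grows with the instance, so it cannot establish hardness for a \emph{fixed} $h\geq 1$. You do note in passing that $h=1$ recovers known hardness of \kst, but you never close the loop to arbitrary fixed $h$; the missing ingredient is precisely the trivial dummy-layer padding. The same remark applies to your \ukmfl\ sketch.

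For the second claim (\umst\ APX-hard for $h\geq 2$), your proposal is speculative and none of the suggested routes (bounded-degree TSP, Max-$2$-SAT, or a \setc-with-dummy-tree construction) is carried through; you explicitly defer the ``bulk of the technical work.'' In particular the \setc-with-two-layers idea is internally inconsistent: if you add a cheap tree in each layer to cover leftover requests at $k=n$, that cheap tree destroys the cost gap you are trying to preserve, and without it you cannot force coverage of all $n$ requests. The paper's actual reduction is from \pcst, which is a tight conceptual fit that sidesteps all of this: layer $1$ carries the original edge weights $w^1=w$, and layer $2$ consists of star edges $\{r,v\}$ of weight $p(v)$; connecting $v$ only in layer $2$ costs exactly the \pcst\ penalty, so a union spanning solution of cost $\alpha$ corresponds exactly to a \pcst\ solution of value $\alpha$ and vice versa, and the reduction is approximation preserving with no gadget analysis needed. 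The key idea you are missing is that \pcst\ \emph{is} the two-layer union problem where the second layer encodes penalties as root-edge costs; once you see that, the $k=n$ obstacle you worried about vanishes, because the penalty layer always gives a legal (if expensive) way to span any node.
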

\begin{proof}
The first claim trivially follows from the APX-hardness \cite{Garg05stoc,GK98soda} of the considered problems for $h=1$, by adding dummy layers with infinite edge weights.

%The first claim follows from the APX-hardness of classical rooted $k$-MST, by introducing $(h-1)$ dummy layers with infinite edge weights. This reduction preserves approximation factors.
%For the spanning tree case,
For the second claim, we consider a reduction from the APX-hard \cite{BP89ipl} \pcst\ problem: given an undirected graph $G=(V,E)$, edge weights $w:E\to \R_{\geq 0}$, a root node $r\in V$, and node prizes $p: V\to \R_{\geq 0}$, find a tree $T\ni r$ which minimizes $\sum_{e\in T}w(e)+\sum_{v\notin T}p(v)$. We create a first layer, with edge weights $w^1=w$. Then we construct a second layer, where we set $w^2(\{r,v\})=p(v)$ for any $v\in V$. All the other layers, if any, are dummy layers defined as above. This reduction is approximation preserving.
\qed
\end{proof}

For an unbounded number of layers, our problems become much harder.
\begin{theorem}\label{thr:hardnessRooted}
For an arbitrary number of layers, rooted \ukmst and \ukmfl are not approximable better than $\Omega(\log k)$ unless $P=NP$, even when $k=n$.
\end{theorem}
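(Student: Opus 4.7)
The plan is to reduce from \setc, which is $(1-o(1))\ln N$-hard to approximate under $P\neq NP$ (Feige; Dinur--Steurer), where $N$ is the universe size. Both reductions follow the same template: one layer per set, and a ``gateway'' gadget that forces every layer which is used at all to pay a full unit of cost.

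For rooted \ukmst, given a \setc instance with universe $U=\{u_1,\dots,u_N\}$ and subsets $S_1,\dots,S_m$, I construct the following instance: the vertex set is $V=\{r\}\cup U$ with root $r$, there are $h:=m$ layers, and in each layer $j$ I fix an arbitrary ``gateway'' element $u_{i_j}\in S_j$ and assign weight $1$ to the edge $(r,u_{i_j})$, weight $0$ to every edge $(u_{i_j},u_i)$ with $u_i\in S_j\setminus\{u_{i_j}\}$, and a prohibitive weight $W:=m+1$ to every other edge. I set $k:=|V|=N+1$, so $k=n$. The core claim is that the optimum value of this \ukmst instance equals the optimum value of the \setc instance. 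Forward: from a set cover $\{S_{j_1},\dots,S_{j_{t^*}}\}$ take as $T^{j_\ell}$ the star at $r$ through $u_{i_{j_\ell}}$ spanning $\{r\}\cup S_{j_\ell}$ (cost $1$), with $T^j=\{r\}$ on unused layers; this covers all of $V$ at total cost $t^*$. Backward: any feasible solution of cost below $W$ cannot use heavy edges, so in each layer $j$ the tree $T^j$ lies inside $\{r\}\cup S_j$ and must include the bridge $(r,u_{i_j})$ (costing $1$) whenever it touches $U$; hence the set of non-trivial layers forms a set cover of $U$ of size at most the total cost. Consequently an $\alpha$-approximation for rooted \ukmst yields an $\alpha$-approximation for \setc, and since $\log k=\Theta(\log N)$ this gives the claimed $\Omega(\log k)$ hardness even when $k=n$.

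For \ukmfl I apply the same template: set $\clients:=U$, in each layer $j$ place a single facility $f_j$ of opening cost $1$, and define the layer-$j$ metric as the shortest-path metric on the star with centre $f_j$, edges of length $0$ to the clients $u_i\in S_j$ and edges of length $M:=N+1$ to the clients $u_i\notin S_j$. With $k:=N$, a set cover of size $t^*$ yields a feasible solution of cost $t^*$ (open the chosen facilities, serve every client at cost $0$), while any solution of cost below $M$ must attach each served client through a $0$-cost edge, placing it in a corresponding set; hence the opened facilities form a set cover of size at most the total cost. The same argument gives $\Omega(\log k)$ hardness.

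The main subtlety is that one must rule out ``fractional'' exploitation of a layer. In both reductions the very first time a layer is activated (its tree reaches into $U$, or its facility is opened) the full unit cost has already been paid, so no partial use can cheat on the set-cover correspondence --- and this is what makes the bijection between \ukmst/\ukmfl solutions and set covers cost-preserving. The only additional point for \ukmfl is that the layer metrics have to satisfy the triangle inequality, but that comes for free from phrasing them as shortest-path metrics on stars.
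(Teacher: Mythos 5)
Your reduction is essentially the same as the paper's: one layer per set, a cost-$1$ ``bridge'' that must be paid before touching any element of the set, and $k$ set to the total number of vertices. The paper implements the bridge with a shared auxiliary node $s$ and edges $\{r,s\}$ of weight $1$ and $\{s,v\}$ of weight $0$ for $v\in S_i$ (no other edges present), while you pick a per-layer gateway element $u_{i_j}\in S_j$ and put an explicit prohibitive weight $W=m+1$ on all other edges; these gadget variants are interchangeable. One small loose end in your write-up: the backward direction is stated only for solutions of cost below $W$, so the claim that an $\alpha$-approximation for rooted \ukmst yields an $\alpha$-approximation for \setc needs the extra (easy) observation that if the returned solution costs $\geq W=m+1>\alpha\cdot\opt$ you may simply output the trivial cover $\{S_1,\dots,S_m\}$, whose size $m$ is still within the same ratio. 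The paper sidesteps this by not introducing heavy edges at all (equivalently, using a sparse instance graph), so every feasible tree automatically uses only bridge and zero-cost edges.
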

\begin{proof}
We prove the claim for rooted \ukmst, by giving a reduction from cardinality \setc: given a universe $\requests$ of $n'$ elements, and a collection $\items=\{S_1,\ldots,S_{m'}\}$ of $m'$ subsets of $\requests$, find a minimum cardinality subset $\mathcal{A}\subseteq \items$ which spans $\requests$. This problem is $\Omega(\log n')$-hard to approximate \cite{RS97stoc}. We create one node per element of $\requests$, plus two extra nodes $r$ and $s$. We create one layer $i$ for each set $S_i$ (i.e., $h=m'$). In layer $i$ we let $w^i(\{r,s\})=1$ and $w^i(\{s,v\})=0$ for each $v\in S_i$. We also let $r^i:=r$ for each $i$, and assume $k=n=n'+2$. Note that any solution to the rooted \ukmst instance of cost $\alpha$ can be turned into a solution to the \setc instance of the same cost, and vice versa.

To prove the claim for \ukmfl, we use the same reduction as above, where the edge $\{r,s\}$ is replaced by a single node $r$, which is a facility of opening cost $1$.
\qed
\end{proof}
In Appendix \ref{apx:union:greedy} we give a greedy $O(\log k)$-approximation algorithm.

\subsection{An LP-Based Approximation for rooted \ukmst} \label{sec:union:lp}

In this section we present an LP-based $O(h)$-approximation algorithm for rooted \ukmst. Essentially the same approach works also for \ukmfl (see Appendix \ref{apx:union:extensions}). This is an improvement over the $\Theta(\log k)$-approximation given by the greedy algorithm for the relevant case of bounded $h$.

%The basic idea is as follows. We consider an LP relaxation for the problem, where, for each request $v$, there is a variable $z^i_v$ indicating whether $v$ is satisfied in layer $i$. Variable $z_v$, $\sum_i z^i_v\geq z_v$, indicates whether $v$ is satisfied in at least one layer. We solve the LP, and use the optimal solution to partition the requests in (disjoint) groups $W^1,\ldots,W^h$: request $v$ is assigned greedily to the layer $i$ with the largest $z^i_v$. Then we solve independently the single-layer problem induced by each $W^i$, with target roughly equal to the sum of $z_v$ over $v\in W^i$. The idea is computing on each layer solutions of cost bounded with respect to the value of a proper LP (related to the original LP).

%For ease of presentation, we next focus on \ukmst. A $O(h)$-approximation for the other problems is given in Appendix \ref{apx:union:extensions}.
%Here, we focus on \ukmst\ (see Appendix \ref{apx:union:extensions} for related results).
For notational convenience, we assume that the roots $R:=\cup_{i}\{r^i\}$ are not counted into the target number $k$ of connected nodes. In other terms, we replace $k$ by $k-|R|$. We make the same assumption also in the case of one layer. 
%\kst\ is a generalization of \kmst, where one wishes to connect $k$ out of a subset $W\subseteq V-\{r\}$ of terminals to the root $r$. In particular, for $W=V$, \kst\ reduces to \kmst.
Consider the following LP relaxation for \kst ($W\ni r$ is the set of terminals)
denoted by $LP_{kST}(w,W,V,r,k)$:
\begin{align*}
\min\quad & \textstyle{\sum_{e\in E}w(e)\,x_{e}} &  &\\
 s.t.\quad  & \textstyle{\sum_{e\in \delta(S)}x_{e}\geq z_{v},} & \textstyle{\forall (v,S): S\subseteq V-\{r\},v\in S\cap W;} & \\
        & \textstyle{\sum_{v\in W} z_v\geq k;} & & \\
        & \textstyle{x_{e}\geq 0,1\geq z_v\geq 0,} & \textstyle{\forall v\in W,\forall e\in E.} &
\end{align*}
Here, variable $x_e$ indicates whether edge $e$ is included in the solution, whereas variable $z_v$ indicates whether terminal $v$ is connected. Moreover $\delta(S)$ denotes the set of edges with exactly one endpoint in $S$. 
Observe that 
$LP_{kMST}(w,V,r,k):=LP_{kST}(w,V,V,r,k)$ is an LP relaxation for \kmst. We need the following lemmas.
\begin{lemma}\label{lem:garg} \cite{Garg96focs}
Let $(w,V,r,k)$ be an instance of \kmst, $w_{max}:=\max_{v\in V}w(r,v)$, and $\opt'$ be the optimal solution to
$LP_{kMST}(w,V,r,k)$. There is a polynomial-time algorithm {\tt apx-kmst} which computes a solution to the instance of cost at most $2\opt'+w_{max}$.
\end{lemma}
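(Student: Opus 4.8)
The plan is to recall Garg's algorithm~\cite{Garg96focs} and re-analyze it against the relaxation $LP_{kMST}(w,V,r,k)$. The starting point is a Lagrangian relaxation: for a multiplier $\lambda\ge 0$ we dualize the cardinality constraint $\sum_{v}z_v\ge k$, which turns the remaining program into the LP relaxation of a rooted \pcst\ instance on $G$ in which every vertex carries the uniform prize $\lambda$. Applying weak duality to an optimal fractional solution of $LP_{kMST}$ (which connects at least $k$ vertices fractionally, so the dualized term is non-positive) shows that for every $\lambda\ge 0$ the optimum of this \pcst-LP is at most $\opt'+\lambda(n-k)$. This is the only place the linear program is used; everything afterwards is combinatorial.

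Next I would run the Goemans--Williamson primal--dual algorithm for rooted \pcst. For any prize vector it returns, in polynomial time, a tree $T\ni r$ together with a feasible dual certificate such that $w(T)+2\sum_{v\notin V(T)}(\text{prize of }v)\le 2\cdot(\text{\pcst-LP optimum})$. Applying this with the uniform prizes $\lambda$ and combining with the bound from the previous paragraph, the tree $T_\lambda$ produced at multiplier $\lambda$ satisfies
\[
 w(T_\lambda)\;\le\;2\opt'+2\lambda\bigl(|V(T_\lambda)|-k\bigr).
\]
In particular, if some $\lambda$ yields $|V(T_\lambda)|=k$ exactly we are already done with cost $2\opt'$. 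Since $|V(T_\lambda)|$ grows from essentially $1$ (at $\lambda=0$, where no edge is bought) to $n$ (for $\lambda$ exceeding the total edge weight), a binary search (or parametric search) on $\lambda$ locates a threshold value $\lambda^\star$ at which the algorithm can output two trees $T_1,T_2$ --- for infinitesimally close multipliers, or via different tie-breaking in the primal--dual growth process --- with $|V(T_1)|<k\le|V(T_2)|$, both obeying the displayed inequality at $\lambda=\lambda^\star$.

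Finally I would merge $T_1$ and $T_2$ into one feasible tree. Writing $k=\alpha|V(T_1)|+(1-\alpha)|V(T_2)|$ with $\alpha\in(0,1)$, the convex combination $\alpha T_1+(1-\alpha)T_2$ is a ``fractional tree'' of cost at most $2\opt'$, because the $\lambda^\star$-terms cancel exactly. The remaining task is to round it to an \emph{integral} tree through $r$ spanning at least $k$ vertices: depending on whether $\alpha$ is above or below $1/2$, one either returns $T_2$ directly (it is then cheap) or grows $T_1$ by attaching a prefix of a shortcut Euler tour of $T_2$ long enough to reach the missing $k-|V(T_1)|$ vertices. This rounding step --- essentially Garg's ``saving an $\varepsilon$'' argument --- is the delicate part, and it is precisely here that the additive term $w_{max}$ enters: controlling the imbalance of the convex combination, or guaranteeing connectivity of the patched-in part to the root, costs at most one ``root edge'', of weight at most $w_{max}$. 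The main obstacles I expect are this rounding and the non-monotonicity of the primal--dual output as a function of $\lambda$ (which forces the two-tree threshold argument); the Lagrangian reduction and the primal--dual guarantee itself are routine. The resulting procedure is the claimed algorithm {\tt apx-kmst}.
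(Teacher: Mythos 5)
The paper does not prove this lemma --- it is stated as a citation to \cite{Garg96focs} --- so there is no internal argument to compare against. Your sketch does capture the established framework: dualizing $\sum_v z_v\ge k$ with multiplier $\lambda$ yields a uniform-prize \pcst relaxation; the Goemans--Williamson primal--dual guarantee combined with the Lagrangian upper bound gives $w(T_\lambda)\le 2\opt'+2\lambda(|V(T_\lambda)|-k)$; and a threshold $\lambda^\star$ produces trees $T_1,T_2$ with $|V(T_1)|<k\le|V(T_2)|$ whose cost convex combination is at most $2\opt'$. These steps are correct.

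The gap is exactly where you flag it, and it is real. Let $\alpha$ satisfy $k=\alpha|V(T_1)|+(1-\alpha)|V(T_2)|$. Grafting onto $T_1$ a cheapest contiguous piece of a shortcut Euler tour of $T_2$ that picks up $k-|V(T_1)|$ new vertices, plus one edge to $r$ of weight at most $w_{max}$, gives a tree of cost at most $w(T_1)+2(1-\alpha)w(T_2)+w_{max}$; but the convex-combination bound $\alpha w(T_1)+(1-\alpha)w(T_2)\le 2\opt'$ only yields $w(T_1)+2(1-\alpha)w(T_2)\le 4\opt'$ in the worst case (take $w(T_1)=0$). So the argument as written proves $4\opt'+w_{max}$, which is the $\approx 5$-approximation to which Chudak, Roughgarden and Williamson corrected Garg's original analysis, not the stated $2\opt'+w_{max}$. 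Your alternate branch (``return $T_2$ if $\alpha\le 1/2$'') does not help: it only gives $w(T_2)\le 2\opt'/(1-\alpha)\le 4\opt'$. Closing the gap needs a genuinely stronger structural fact --- e.g.\ obtaining $T_1,T_2$ from a \emph{single} GW execution at $\lambda^\star$ that differ only in the pruning stage, so that $V(T_1)\subseteq V(T_2)$ and $T_2\setminus T_1$ is a forest of subtrees hanging off $T_1$ that can be added one at a time with a tight charging argument --- or the refined accounting of Garg's later ``saving an $\varepsilon$'' result. The Lagrangian and primal--dual ingredients you name are routine; the real content of the lemma is this rounding, and your proposal does not supply it.
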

%The following lemma is implicit in \cite{Frank89adm,Jackson88jgt}.
\begin{lemma}\label{lem:frank}\cite{Frank89adm,Jackson88jgt}
Let $G=(V\cup \{v\},E)$ be a directed graph, with edge capacities $\alpha:E\to \R_{\geq 0}$ such that $\sum_{e\in \delta^+(u)}\alpha(e)=\sum_{e\in \delta^-(u)}\alpha(e)$ for all $u\in V\cup \{v\}$. Then there is a pair of edges $(u,v)$ and $(v,z)$, such that the following capacity reservation $\beta$ supports the same flow as $\alpha$ between any pair of nodes in $V$: for $\Delta\alpha:=\min\{\alpha(u,v),\alpha(v,z)\}$, set $\beta(u,v)=\alpha(u,v)-\Delta\alpha$, $\beta(v,z)=\alpha(v,z)-\Delta\alpha$, $\beta(u,z)=\alpha(u,z)+\Delta\alpha$, and $\beta(e)=\alpha(e)$ for the remaining edges $e$.
%, and metric edge-weights $w:E\to \R_{\geq 0}$. Then there is a polynomial-time algorithm which, for any given node $v$, computes a new capacity reservation $\beta$ such that: (a) the directed flow supported by $\beta$ from $a$ to $b$, $a,b\in V\setminus \{v\}$, is at least the same as for $\alpha$, (b) $\sum_{e\in E}w(e)\beta(e)\leq \sum_{e\in E}w(e)\alpha(e)$, and (c) $\beta(e)=0$ for the edges incident to $v$.
\end{lemma}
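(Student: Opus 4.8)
The plan is to reprove this as an instance of the classical directed \emph{Eulerian splitting-off} lemma (which is what the citation refers to), via the standard cut-uncrossing argument. First I would recast everything in terms of cuts. For $X\subseteq V\cup\{v\}$ put $d_\alpha(X):=\sum_{e\in\delta^+(X)}\alpha(e)$. The hypothesis that $\alpha$ is balanced at \emph{every} vertex gives $d_\alpha(X)=\sum_{e\in\delta^-(X)}\alpha(e)$, so $d_\alpha$ is symmetric, i.e.\ $d_\alpha(X)=d_\alpha((V\cup\{v\})\setminus X)$, and submodular (hence also posi-modular); and by the max-flow--min-cut theorem, for $s,t\in V$ the maximum $\alpha$-flow value $\lambda_\alpha(s,t):=\min\{d_\alpha(X):s\in X,\,t\notin X\}$ is symmetric in $s$ and $t$. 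The goal becomes: choose an incoming arc $(u,v)$ and an outgoing arc $(v,z)$ of positive capacity (they exist provided $v$ has positive degree, which is implicit in the statement) so that $\lambda_\beta(s,t)=\lambda_\alpha(s,t)$ for all $s,t\in V$.

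The inequality $\lambda_\beta(s,t)\le\lambda_\alpha(s,t)$ is easy and holds for every pair: given any $s$--$t$ flow in $\beta$ of value $F$, reroute the part of its flow on the new arc $(u,z)$ that exceeds $\alpha(u,z)$ --- an amount at most $\Delta\alpha$ --- through $u\to v\to z$, using the $\Delta\alpha$ units of capacity that $\beta$ freed on $(u,v)$ and on $(v,z)$; equal amounts are pushed into and out of $v$, so the result is a valid $\alpha$-flow of value $F$. For the reverse inequality I would use that splitting off can only shrink cuts: $d_\beta(X)=d_\alpha(X)-\Delta\alpha$ when $X$ \emph{separates $\{u,z\}$ from $v$} (both $u$ and $z$ on one side, $v$ on the other), and $d_\beta(X)=d_\alpha(X)$ otherwise. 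Hence the pair is admissible iff every "dangerous" set $X$ --- one separating $\{u,z\}$ from $v$, with $X\cap V\ne\emptyset\ne V\setminus X$ --- satisfies $d_\alpha(X)\ge\Delta\alpha+R(X)$, where $R(X):=\max\{\lambda_\alpha(x,y):x\in X\cap V,\,y\in V\setminus X\}$; using symmetry of $d_\alpha$ and of $\lambda_\alpha$ one may take $v\notin X$ and $\{u,z\}\subseteq X$.

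It then remains to show that some pair is admissible, which I would prove by contradiction. If no pair works, then for every incoming arc $(u,v)$ and outgoing arc $(v,z)$ there is a dangerous set $X_{u,z}$ with $v\notin X_{u,z}$, $\{u,z\}\subseteq X_{u,z}\subsetneq V$, and $d_\alpha(X_{u,z})<\Delta\alpha+R(X_{u,z})$. Fixing one incoming arc $(u,v)$, I would uncross the sets $X_{u,z}$ over the out-neighbours $z$ of $v$, using the submodularity and symmetry of the cut function $d_\alpha$ on the one hand and the ultrametric-type inequality $\lambda_\alpha(a,c)\ge\min\{\lambda_\alpha(a,b),\lambda_\alpha(b,c)\}$ on the other, to produce a single dangerous set $M$ with $v\notin M$, $u\in M$, and every out-neighbour of $v$ in $M$. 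From there, since every arc leaving $v$ then enters $M$ while the arc $(u,v)$ leaves $M$, a short counting argument combining this with the balance of $\alpha$ at $v$ contradicts the dangerousness of $M$.

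I expect the uncrossing in the last step to be the main obstacle. The cut side uncrosses cleanly precisely because $d_\alpha$ is symmetric and submodular, which is exactly where the Eulerian hypothesis at every vertex enters; but the requirement function $R(\cdot)$ is not submodular, and this is the classical difficulty in all splitting-off proofs. It is circumvented using the three-point/ultrametric structure of $\lambda_\alpha$ (again a consequence of $\alpha$ being Eulerian). A secondary point is to guarantee that the \emph{full} split amount $\Delta\alpha=\min\{\alpha(u,v),\alpha(v,z)\}$, rather than merely an infinitesimal amount, is feasible for the pair the argument selects; this strengthening is exactly what the results of Frank and of Jackson cited in the statement supply, so in a final write-up I would simply invoke them.
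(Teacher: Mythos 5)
The paper gives no proof of this lemma: it is stated as a known result and attributed to Frank and to Jackson (the two citations), which is standard practice since it is the directed Eulerian splitting-off theorem. Your sketch correctly identifies it as such and reproduces the standard cut-based argument --- symmetric submodular cut function $d_\alpha$ from the Eulerian hypothesis, the easy direction $\lambda_\beta\le\lambda_\alpha$ by rerouting excess flow on $(u,z)$ through $u\to v\to z$, the reduction of the hard direction $\lambda_\beta\ge\lambda_\alpha$ to the nonexistence of a ``dangerous'' cut $X$ separating $\{u,z\}$ from $v$ with $d_\alpha(X)<\Delta\alpha+R(X)$, and the uncrossing contradiction. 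The two places where you stop short --- the uncrossing details (where one must combine submodularity of $d_\alpha$ with the ultrametric inequality for $\lambda_\alpha$, since the requirement $R(\cdot)$ is not itself submodular) and, more importantly, promoting an infinitesimal admissible split to the full split amount $\Delta\alpha=\min\{\alpha(u,v),\alpha(v,z)\}$ --- are precisely what the cited theorems of Frank and Jackson supply, and you explicitly defer to them for that. Since the paper itself does nothing more than cite those theorems, your treatment is consistent with (indeed more informative than) the paper's own. One small point of care if you were to flesh this out: your ``dangerous set'' condition must be quantified only over $X$ with $X\cap V\ne\emptyset\ne V\setminus X$, as you note, and the full-split version requires showing that some pair has \emph{no} set violating the strict inequality with the pair-dependent $\Delta\alpha$ --- this is genuinely stronger than the infinitesimal statement and cannot be obtained by repeatedly splitting off $\varepsilon$ and re-selecting the pair, which is why invoking the cited results is the right move.
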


\begin{corollary}\label{cor:frank}
Given a feasible solution $(x,z)$ to $LP_{kST}(w,W,V,r,k)$, there is a feasible solution $(x',z')$ to $LP_{kMST}(w,W,r,k)$ such that $\sum_{e}w(e)x'_e\leq 2\cdot \sum_{e}w(e)x_e$.
\end{corollary}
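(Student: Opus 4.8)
The plan is to read a feasible solution $(x,z)$ of $LP_{kST}(w,W,V,r,k)$ as a set of edge capacities, eliminate the Steiner nodes $V\setminus W$ one at a time via the splitting‑off operation of Lemma~\ref{lem:frank}, and take the capacities that remain on the edges inside $W$ as $x'$, keeping $z':=z$.

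First I would set up a directed circulation so that Lemma~\ref{lem:frank} is applicable. Replace each undirected edge $\{a,b\}$ by two opposite arcs $a\to b$ and $b\to a$, each of capacity $x_{\{a,b\}}$; call the resulting arc‑capacity function $\alpha_0$. At every vertex $u$ the total in‑ and out‑capacity both equal $\sum_{c\sim u}x_{\{u,c\}}$, so $\alpha_0$ meets the hypothesis of Lemma~\ref{lem:frank}, and moreover $\sum_e w(e)\,\alpha_0(e)=2\sum_e w(e)\,x_e$. By the cut constraints of $LP_{kST}(w,W,V,r,k)$, for each $v\in W$ the minimum $r$-$v$ cut in the undirected graph $(V,E,x)$ is at least $z_v$; hence there is an undirected $r$-$v$ flow of value $z_v$, and routing it on the corresponding arc copies gives a directed $r$-$v$ flow of value $\ge z_v$ in $(V,\alpha_0)$. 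Here I use the triangle inequality for $w$, which we may assume after taking metric closures.

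Next I would eliminate the Steiner nodes. Fix an arbitrary order $v_1,\dots,v_t$ of $V\setminus W$. For $i=1,\dots,t$, while $v_i$ still has an incident arc, apply Lemma~\ref{lem:frank} with $v=v_i$: the split drives at least one of the two arcs $(u,v_i),(v_i,z)$ to capacity $0$, so the number of arcs at $v_i$ strictly decreases; it preserves the circulation property at every vertex and all flow values between vertices other than $v_i$; and since it moves $\Delta$ units from $(u,v_i)$ and $(v_i,z)$ onto $(u,z)$, it does not increase $\sum_e w(e)\,\alpha(e)$ by the triangle inequality ($w(u,z)\le w(u,v_i)+w(v_i,z)$; self‑loops have cost $0$ and affect no cut, so discard them). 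Thus $v_i$ becomes isolated after finitely many steps. A short induction on $i$ shows that processing $v_i$ never reattaches an arc to an already‑processed $v_j$ ($j<i$): a new arc at $v_j$ would force $v_j$ to be an endpoint of an arc at $v_i$, i.e.\ a neighbour of $v_i$, contradicting that $v_j$ is isolated. Hence when the process ends all arcs lie inside $W$. Let $\alpha$ be the final capacity function (merge parallel arcs) and set $x'_{\{u,w\}}:=\alpha(u\to w)+\alpha(w\to u)$ for $u,w\in W$ and $z':=z$.

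Finally I would verify the two required properties. For the cost, $\sum_{\{u,w\}\subseteq W}w(u,w)\,x'_{\{u,w\}}=\sum_e w(e)\,\alpha(e)\le\sum_e w(e)\,\alpha_0(e)=2\sum_e w(e)\,x_e$. For feasibility of $(x',z')$ in $LP_{kMST}(w,W,r,k)=LP_{kST}(w,W,W,r,k)$: the constraints $\sum_{v\in W}z_v\ge k$ and $0\le z_v\le1$ are inherited from $(x,z)$; and for any $S$ with $r\notin S\ni v$ (so $v\in W$), $\sum_{e\in\delta(S)}x'_e$ is at least the capacity in $(W,\alpha)$ of the arcs entering $S$, which is at least the value of a maximum directed $r$-$v$ flow in $(W,\alpha)$, which by the flow‑preservation of Lemma~\ref{lem:frank} equals that in $(V,\alpha_0)$, hence is $\ge z_v$; this is exactly the cut constraint of $LP_{kMST}(w,W,r,k)$ for the pair $(v,S)$. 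The one thing to be careful about is orchestrating the splitting‑off so that every Steiner node ends up isolated while simultaneously keeping the weighted total capacity non‑increasing and all $r$-$v$ connectivities intact; the rest is bookkeeping.
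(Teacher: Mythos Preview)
Your proposal is correct and follows essentially the same approach as the paper: double each undirected edge into two opposite arcs to obtain a circulation, then repeatedly apply Lemma~\ref{lem:frank} at non-terminal nodes (using the triangle inequality to keep the weighted capacity non-increasing) until only $W$ carries capacity, and finally merge the two arc directions back into an undirected $x'$ while keeping $z'=z$. Your write-up is in fact a bit more explicit than the paper's on termination and on why already-isolated Steiner nodes never acquire new arcs, but the argument is the same.
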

\begin{proof}
Variables $x_e$ can be interpreted as a capacity reservation which supports a fractional flow of value $z_v$ from each $v\in W$ to the root. Let us replace each edge with two oppositely directed edges, and assign to each such edge the same weight and capacity as the original edge. This way, we obtain a capacity reservation $\alpha$ which costs twice the original capacity reservation, and satisfies the condition of Lemma \ref{lem:frank}. We consider any non-terminal node $v\neq r$ with some incident edge of positive capacity, and apply Lemma \ref{lem:frank} to it.
Due to triangle inequality, the cost of the capacity reservation does not increase. We iterate the process on the resulting capacity reservation. Within a polynomial number of steps, we obtain a capacity reservation $\beta$ which: (1) supports the same flow from each terminal to the root $r$ as $\alpha$, (2) has value $0$ on edges incident to non-terminal nodes (besides $r$), and (3) does not cost more than $\alpha$. At this point, we remove the nodes $V-(W\cup \{r\})$, and merge the capacity of oppositely directed edges to get an undirected capacity reservation $x'$. By construction, the pair $(x',z)$ is a feasible solution to $LP_{kMST}(w,W,r,k)$ of cost at most $2\cdot \sum_{e}w(e)x_e$.
\qed
\end{proof}

We are now ready to describe our algorithm for rooted \ukmst. In a preliminary step we guess the largest distance $L$ in the optimal solution between any connected node and the corresponding root, and discard nodes at distance larger than $L$ from their root. This introduces a factor $O(nh)$ in the running time. Note that $L\leq \opt$. We let $V^i$ be the remaining nodes in layer $i$. 

Then we compute the optimal fractional solution $OPT^*=(x^i,z^i,z)_{i}$, of cost $opt^*$, to the following LP relaxation $LP_{ukMST}$ for the problem,
where variables $x^i_e$ and $z^i_v$ indicate whether edge $e$ is included in the solution of layer $i$ and node $v$ is connected in layer $i$, respectively. Variable $z_v$ indicates whether node $v$ is connected in at least one layer.

\begin{align*}
\min\quad & \textstyle{\sum_{i=1,\ldots,h}\sum_{e\in E}w^i(e)\,x^i_{e}} &  &\\
 s.t.\quad  & \textstyle{\sum_{e\in \delta(S)}x^i_{e}\geq z^i_{v},} & \textstyle{\forall i\in \{1,\ldots,h\}, \forall (v,S):S\subseteq V^i-\{r^i\},v\in S;} & \\
        & \textstyle{\sum_{i=1,\ldots,h} z^i_{v}\geq z_v,} & \textstyle{
\forall v\in V-R;} & \\
        & \textstyle{\sum_{v\in V-R} z_v\geq k;} & & \\
        & \textstyle{z^i_{v},x^i_{e}\geq 0,1\geq z_v\geq 0,} & \textstyle{
\forall i\in \{1,\ldots,h\},\forall v\in V-R,\forall e\in E.} &
\end{align*}

Given $OPT^*$, we compute for each layer $i$ a subset of nodes $W^i$, where $v$ belongs to $W^i$ iff $z^i_v=\max_{j=1,\ldots,h}\{z^j_v\}$ (breaking ties arbitrarily). We also define $k^i:=\lfloor \sum_{v\in W^i}z_v \rfloor$. For each layer $i$, we consider the $k$-MST instance on nodes $W^i\cup \{r^i\}$ with target $k^i$. This instance is solved using the $2$-approximation algorithm {\tt apx-kmst} of Lemma \ref{lem:garg}: the resulting tree $T^i$ is added to the solution for layer $i$. Let $k'$ be the number of connected nodes. If $k'<k$, the algorithm connects $k-k'$ extra nodes, chosen greedily, to the corresponding root in order to reach the global target $k$.
\begin{theorem}\label{thr:constantkMST}
There is a $O(h)$-approximation algorithm for rooted \ukmst. The running time of the algorithm is $O((nh)^{O(1)})$.
\end{theorem}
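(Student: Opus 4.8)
The plan is to prove Theorem~\ref{thr:constantkMST} by charging the cost of the algorithm's output against $\opt^* \le \opt$, layer by layer, and separately accounting for the cost of greedily connecting the remaining $k - k'$ nodes. First I would observe that the fractional solution $OPT^*$ restricted to layer $i$, namely $(x^i, z^i)$, is a feasible solution to $LP_{kST}(w^i, W^i, V^i, r^i, k^i)$: the cut constraints hold by restriction, and $\sum_{v \in W^i} z^i_v \ge \sum_{v \in W^i} z_v \ge k^i$ since $z^i_v \ge z_v / h$ is not quite what we want — rather, $v \in W^i$ means $z^i_v = \max_j z^j_v \ge z_v / h$, but actually we use $z^i_v \ge z^j_v$ for all $j$, hence $h \cdot z^i_v \ge \sum_j z^j_v \ge z_v$, giving $\sum_{v \in W^i} z^i_v \ge \frac{1}{h}\sum_{v\in W^i} z_v \ge k^i/h$... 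I need to be careful here: the natural target for the $k$-ST LP on $W^i$ is $k^i = \lfloor \sum_{v \in W^i} z_v\rfloor$, and feasibility of $(x^i,z^i)$ for that LP needs $\sum_{v\in W^i} z^i_v \ge k^i$, which holds because $z^i_v \ge z_v$ (each $v$ appears in exactly one $W^i$, and for that $i$ we have $z^i_v = \max_j z^j_v$, and since $\sum_j z^j_v \ge z_v$ with the max being at least the average, $z^i_v \ge z_v/h$ — this only gives a factor-$h$ loss, which is exactly where the $O(h)$ comes from). So $(x^i, h z^i)$ is feasible for the $k$-ST LP with target $k^i$, at cost $\sum_e w^i(e) x^i_e$.

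Next I would apply Corollary~\ref{cor:frank} to convert this $k$-ST LP solution on terminal set $W^i \cup \{r^i\}$ into a feasible $k$-MST LP solution on $W^i \cup \{r^i\}$ with the same target $k^i$, losing only a factor $2$ in cost. Then Lemma~\ref{lem:garg} gives a genuine tree $T^i$ spanning at least $k^i$ of the nodes of $W^i$, of cost at most $2 \opt^*_i + w^i_{\max}$, where $\opt^*_i$ is the cost of the $k$-MST LP solution on layer $i$ and $w^i_{\max} \le L \le \opt$ is the max root distance after the preprocessing step. Summing over the $h$ layers, the total tree cost is $O(1) \cdot \sum_i \opt^*_i + h \cdot L = O(h) \cdot \opt$, after carefully tracking that $\sum_i \opt^*_i$ is within an $O(h)$ factor of $\opt^*$ (the factor-$h$ blowup in the $z$ variables does not blow up the $x$-cost, so actually $\sum_i \opt^*_i = \opt^*$; the $h$ enters only through the additive $h \cdot w_{\max}$ terms and through the coverage deficit below).

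Then I would argue that $k' = \sum_i |V(T^i) \cap (W^i \setminus R)| \ge \sum_i k^i \ge \sum_i (\sum_{v \in W^i} z_v - 1) = (\sum_{v \in V - R} z_v) - h \ge k - h$, using that the $W^i$ partition $V - R$ and $\sum_v z_v \ge k$. Hence the greedy step needs to connect at most $h$ additional nodes, each at cost at most $L \le \opt$ (connect it directly to its root; such a node exists within distance $L$ as long as the target is feasible, i.e.\ there are at least $k$ nodes within distance $L$ of their roots, which is guaranteed by the definition of $L$). This adds another $O(h) \cdot \opt$, giving the claimed $O(h)$-approximation. The running time is dominated by guessing $L$ (an $O(nh)$ factor), solving the polynomial-size LP, and running {\tt apx-kmst} $h$ times, hence $O((nh)^{O(1)})$.

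The main obstacle I anticipate is the bookkeeping around the $W^i$ decomposition and the target values $k^i$: one must verify simultaneously that (a) each layer's LP remains feasible for target $k^i$ after the factor-$h$ scaling of the $z$-variables without any scaling of the $x$-variables, (b) the floors in the definition of $k^i$ lose at most $h$ nodes in total, and (c) the additive $w^i_{\max}$ terms from Lemma~\ref{lem:garg}, summed over layers, contribute only $O(h)\cdot\opt$ thanks to the preprocessing bound $L \le \opt$. Each of these is routine, but the feasibility argument in (a) — pinning down exactly why scaling $z$ by $h$ (equivalently, why $z^i_v \ge z_v/h$ on $W^i$) suffices and where the approximation factor is actually spent — is the conceptual crux and should be stated with care.
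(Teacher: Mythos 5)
Your plan follows the paper's proof closely, but the key feasibility step (your item~(a)) is wrong as stated. You propose to scale the $z$-variables by $h$ while leaving $x^i$ untouched, and claim that $(x^i, hz^i)$ is feasible for $LP_{kST}(w^i, W^i, V^i, r^i, k^i)$. This fails: the cut constraints of $LP_{kST}$ would then require $\sum_{e\in\delta(S)} x^i_e \geq h z^i_v$, whereas $LP_{ukMST}$ only guarantees $\sum_{e\in\delta(S)} x^i_e \geq z^i_v$, a factor $h$ too weak whenever $z^i_v>0$; the upper bound $z_v\leq 1$ in $LP_{kST}$ may also be violated by $hz^i_v$. The paper's construction scales the other way: it sets $\tilde x^i_e := h\,x^i_e$ and $\tilde z^i_v := z_v$ (note: $z_v$, not $z^i_v$) for $v\in W^i$. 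Feasibility then follows from $h\sum_{e\in\delta(S)} x^i_e \ge h z^i_v \ge z_v$, using $z^i_v=\max_j z^j_v\ge z_v/h$ for $v\in W^i$, and $z_v\le 1$ is preserved. As a consequence, the per-layer cost is $\tilde{apx}^i = h\sum_e w^i(e)x^i_e$, so $\sum_i\tilde{apx}^i = h\cdot\opt^*$ --- contradicting your parenthetical claim that ``$\sum_i \opt^*_i = \opt^*$'' and that the $h$ enters only through the additive $w_{\max}$ terms and the coverage deficit. The factor of $h$ is genuinely spent on the tree costs themselves.

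Once this is fixed, the rest of your argument is correct and matches the paper's: Corollary~\ref{cor:frank} loses another factor $2$, Lemma~\ref{lem:garg} adds $+L$ per layer, the coverage deficit $\sum_i k^i\ge k-h$ costs at most $h\cdot L\le h\cdot\opt$ to repair, and the total becomes $\sum_i(4\tilde{apx}^i+L)+hL\le 4h\cdot\opt^*+2hL\le 6h\cdot\opt$. Your final conclusion (an $O(h)$-approximation) is right; it is only the intermediate claim that the tree costs are bounded by $O(1)\cdot\opt^*$ that does not survive scrutiny.
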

\begin{proof}
Consider the algorithm above. The claim on the running time is trivial. By construction, the solution computed is feasible (i.e., it connects $k$ nodes). It remains to consider the approximation factor.

For each $v\in W^i$, we let $\tilde{z}^i_v=z_v$, and set $\tilde{z}^i_v=0$ for the remaining nodes. Furthermore, we let $\tilde{x}^i_e=h\cdot x^i_e$. Observe that $(\tilde{x}^i,\tilde{z}^i,z)_{i}$ is a feasible fractional solution to $LP_{ukMST}$ of cost $h\cdot opt^*$. Observe also that $(\tilde{x}^i,\tilde{z}^i)$ is a feasible solution to $LP_{kST}(w^i,W^i,V^i,r^i,k^i)$: let $\tilde{apx}^i$ be the associated cost. By Lemma \ref{lem:frank}, there is a fractional solution to $LP_{kMST}(w^i,W^i,r^i,k^i)$ of cost at most $2\tilde{apx}^i$. It follows from Lemma \ref{lem:garg} that the solution computed by {\tt apx-kmst} on layer $i$ costs at most $4\tilde{apx}^i+L$.

Since the $W^i$'s are disjoint, the algorithm initially connects at least $\sum_{i}k^i\geq k-h$ nodes. Hence the cost of the final augmentation phase is at most $h\cdot L\leq h\cdot opt$. Putting everything together, the cost of the solution returned by the algorithm is at most:
$$
\sum_{i}(4\cdot\tilde{apx}^i+L)+h\cdot L\leq 4h\cdot opt^*+2h\cdot L\leq 6h\cdot opt\,. \qquad \qed
$$
\end{proof}

The constant multiplying $h$ in the approximation factor can be reduced with a more technical analysis, at the cost of a higher running time. We also observe that the integrality gap of $LP_{ukMST}$ is $\Omega(h)$ (see Appendix \ref{apx:union:gap}).

\subsection{Unrooted \ukmst}
\label{union:unrooted}

\begin{theorem}\label{thr:inapproximable}
Unrooted \ukmst\ is not approximable in polynomial time for an arbitrary number $h$ of layers unless $P=NP$.
\end{theorem}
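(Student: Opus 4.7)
The plan is to reduce from 3-SAT, engineering the construction so that $\opt=0$ exactly when the formula is satisfiable. Given a 3-SAT instance $\varphi$ with $n'$ variables $x_1,\dots,x_{n'}$ and $m'$ clauses $C_1,\dots,C_{m'}$, I would set $N:=n'+1$ and build a complete graph on vertices $v_{j,s}$ for $j\in[m']$, $s\in[N]$, with $h:=n'$ layers (one per variable) and target $k:=n=m'N$. In layer $i$ define
\[
\mathrm{TRUE}_i := \{v_{j,s}: x_i \in C_j\}, \qquad \mathrm{FALSE}_i := \{v_{j,s}: \neg x_i \in C_j\},
\]
and set $w^i(e)=0$ whenever both endpoints of $e$ lie in $\mathrm{TRUE}_i$ or both lie in $\mathrm{FALSE}_i$; all other edges get weight $1$. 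The zero-weight subgraph of layer $i$ then has exactly three kinds of connected components: the clique on $\mathrm{TRUE}_i$, the clique on $\mathrm{FALSE}_i$, and isolated singletons.

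The heart of the proof is the claim $\opt=0 \iff \varphi$ is satisfiable. The easy direction: a satisfying assignment $\sigma$ lets us take, in each layer $i$, a spanning tree of the clique $\mathrm{TRUE}_i$ or $\mathrm{FALSE}_i$ depending on $\sigma(x_i)$; each $v_{j,s}$ is covered because $\sigma$ satisfies some literal of $C_j$, so total cost is $0$. For the converse, any zero-cost tree $T^i$ must lie in a single connected component of layer $i$'s zero-weight subgraph, yielding a partial assignment $\sigma$ (setting $\sigma(x_i)$ to $T$, $F$, or ``undefined'' according to which kind of component is used). A vertex $v_{j,s}$ is covered by some clique precisely when $\sigma$ satisfies $C_j$, so writing $S$ for the set of clauses satisfied by $\sigma$, the total coverage is at most $|S|\cdot N + n'$ (cliques plus singletons, and at most $n'$ singleton-layers exist). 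Feasibility demands $|S|N + n' \geq m'N$, and the choice $N=n'+1$ then forces $|S|\geq m'$; thus $\sigma$ satisfies every clause and any total extension witnesses $\varphi\in\mathrm{SAT}$.

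Since $\opt=0$ when $\varphi$ is satisfiable and $\opt\geq 1$ otherwise (any additional vertex must be attached via a weight-$1$ edge, as the singleton budget is already exhausted in the above bound), a polynomial-time algorithm achieving any finite approximation factor $\alpha$ could distinguish the two cases and thereby decide 3-SAT, forcing $P=NP$.

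The key obstacle to guard against is the ``free singleton'' loophole: a tree of one vertex costs $0$, so one might hope to repair an unsatisfying assignment by picking up missed clause-vertices as singletons. The padding $N=n'+1$ is precisely what defeats this: an unsatisfied clause contributes $N$ missing vertices, which strictly exceeds the $n'$ singletons available across all layers, so no zero-cost repair is possible and the reduction is tight.
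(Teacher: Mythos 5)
Your proof is correct and follows the same high-level strategy as the paper: reduce from SAT, create one layer per variable, encode each truth value of $x_i$ as a zero-cost component of layer $i$, and choose the target so that a zero-cost feasible solution exists iff the formula is satisfiable. The gadget differs in how you defeat the ``free singleton'' loophole. The paper introduces two extra nodes $t_i,f_i$ per variable (so $n=2n'+m'$) and sets the target to $n'+m'$, letting the $n'$ slack nodes absorb singleton pickups; you instead replicate each clause node $N=n'+1$ times and set $k=n$, so that a missed clause is worth more than all $n'$ available singletons combined. Both computations land in the same place ($|S|\geq m'$). One small dividend of your variant: since $k=n$, it directly shows that even unrooted \umst (no outliers) is inapproximable, which the paper's gadget does not establish as-is since it leaves $n'$ of its $2n'+m'$ nodes necessarily uncovered. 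Minor nit: you should state, as the paper does in a footnote, that no clause contains both a literal and its negation, so that $\mathrm{TRUE}_i$ and $\mathrm{FALSE}_i$ are disjoint and the zero-weight subgraph of layer $i$ really is two cliques plus singletons.
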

\begin{proof}
We give a reduction from SAT: given a CNF boolean formula on $m'$ clauses and $n'$ variables, determine whether it is satisfiable or not. For each variable $i$, we create two nodes $t_i$ and $f_i$. Intuitively, these nodes represent the fact that $i$ is true or false, respectively. Furthermore, we have a node for each clause. Hence the overall number of nodes is $n=2n'+m'$. We create a separate layer for each variable $i$ (i.e., $h=n'$). In layer $i$, we connect with an edge of cost zero $t_i$ (resp., $f_i$) to all the clauses which are satisfied by setting $i$ to true (resp., to false)\footnote{Without loss of generality, we can assume that each clause does not contain both a literal and its negation.}. The target value is $k=n'+m'$. Note that, there is a satisfying assignment to the SAT instance iff there is a solution of cost zero to the \ukmst\ instance.
\qed
\end{proof}

For $h=O(1)$, the rooted and unrooted versions of the problem are equivalent approximation-wise. In fact, one obtains an approximation-preserving reduction from the unrooted to the rooted case by guessing one node $r^i$ in the optimal solution per layer: this introduces a polynomial factor $O(n^h)$ in the running time. We remark that an exponential dependence on $h$ of the running time is unavoidable in the unrooted case, due to Theorem \ref{thr:inapproximable}. An opposite reduction is obtained by appending $n$ dummy nodes to each root (distinct nodes for distinct layers), with edges of cost zero, and setting the target to $k+hn$. The following result follows.
\begin{corollary}
Unrooted \ukmst  is APX-hard for any $h\geq 1$. There is a $O(h)$-approximation algorithm for the problem of running time $O((hn)^{O(1)}n^{h})$.
\end{corollary}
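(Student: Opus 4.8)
Both halves of the corollary come from combining Theorems~\ref{thr:hardnessBounded} and~\ref{thr:constantkMST} with the two reductions between the rooted and unrooted versions of \ukmst\ described just above; the only part that needs an actual argument is that these reductions preserve approximability, so I would organize the write-up around them.

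For the \textbf{APX-hardness} I would reduce rooted \ukmst\ to unrooted \ukmst, uniformly in $h$. Given a rooted instance on $n$ nodes with roots $r^1,\dots,r^h$ and target $k\ge 1$, attach to each $r^i$ a set $D^i$ of $n$ fresh dummy nodes, joined to $r^i$ by zero-cost edges in layer $i$ and by prohibitively large finite weights in every other layer, and raise the target to $k+hn$. I claim this is exactly cost preserving. In one direction, a rooted solution extends at no extra cost to an unrooted one by adding the zero-cost star at each $r^i$, which connects the extra $hn$ dummies. In the other direction, every feasible unrooted solution must contain all $h$ roots in their trees --- omitting $r^i$ would make all of $D^i$ unreachable at finite cost, and the $n$ non-dummy requests cannot cover that deficit of $n$ when $k\ge 1$ --- hence it can (and in an optimum does) connect all $hn$ dummies for free; moreover, connecting a node of $D^i$ forces the layer-$i$ tree to contain $r^i$, since after taking metric closures the only cheap way into $D^i$ is still through $r^i$. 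Deleting (or cheaply rerouting) the dummy leaves from every tree then returns a feasible rooted solution of the same cost, so $\opt$ is unchanged and any $\rho$-approximate unrooted solution restricts to a $\rho$-approximate rooted one. APX-hardness of rooted \ukmst\ for every $h\ge 1$ (Theorem~\ref{thr:hardnessBounded}) therefore transfers; the identical gadget handles \ukmfl, and for $h=1$ the claim also follows directly from APX-hardness of (unrooted) \kmst~\cite{Garg05stoc}.

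For the \textbf{algorithm} I would use the opposite, guessing reduction. Enumerate the $O(n^h)$ choices of one node $r^i$ per layer, together with an integer $k_0\in\{0,\dots,h\}$; for each, build the rooted \ukmst\ instance with these roots and target $k-k_0$, run the $O(h)$-approximation of Theorem~\ref{thr:constantkMST} in time $O((nh)^{O(1)})$, and output its tree collection --- which, since a rooted solution also connects its $\le h$ roots, connects at least $k$ nodes and is thus feasible for the unrooted instance. For the choice that takes, in each layer, a node of the optimal unrooted tree and the corresponding value of $k_0$, the optimal unrooted solution is itself feasible for the rooted instance, so that rooted optimum is at most $\opt$ and this run already yields an $O(h)$-approximation; returning the cheapest solution over all guesses gives an $O(h)$-approximation in total time $O((nh)^{O(1)}n^h)$, as claimed. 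The main obstacle --- a modest one --- is the cost-preservation step of the hardness reduction: one has to be sure that the union semantics together with taking metric closures does not open up a cheap, root-avoiding way to satisfy a dummy request, which is exactly what the large cross-layer weights on the dummy edges rule out; the rest is routine bookkeeping on targets and running times.
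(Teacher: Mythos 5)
Your proposal is correct and takes essentially the same route as the paper: the two reductions you describe (root-guessing for the algorithm, dummy-node attachment for the hardness) are exactly the ones the paper sketches in the paragraph preceding the corollary. You add some detail the paper leaves implicit --- in particular the need for large cross-layer weights on the dummy edges and the rerouting step so each tree can be assumed to actually contain its root --- and a harmless extra enumeration over $k_0\in\{0,\dots,h\}$ (the paper handles this bookkeeping via its convention of not counting roots toward $k$), none of which changes the approach.
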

%\begin{corollary}
%There is a $O(h)$-approximation algorithm for unrooted \ukmst. The running time of the algorithm is $O((hn)^{O(1)}n^{h})$.
%\end{corollary}
%\begin{proof}
%With an extra factor $O(n^{h})$ in the running time we can reduce the unrooted case to the rooted one. The claim follows from Theorem \ref{thr:constantkMST}.
%\end{proof}

\section{Conclusions and Open Problems}

In this paper, we introduced multi-layer covering problems, a new framework that can be used to describe a wide spectrum of yet unstudied problems. We addressed two natural ways of combining
the layers: intersection and union.
We gave multi-layer approximation algorithms, as well as hardness results, for 
a few classic covering problems (and their partial covering versions).
There are several research questions that merit further study.
%the following classic covering problems (and their partial covering versions): \tsp, \mst, \st, \setc, \mfl and \ufl. There are several research questions that merit further study.
\begin{itemize}
\item There are other natural ways one can combine the layers. Consider, for example, the car/bike problem in the case where you can put your bike in the car trunk. Now you can make more than one tour by bike, the only requirement being that the bike tours all touch the (unique) car tour. 
\item What about min-max multi-layer problems, where the goal is minimizing the maximum cost over the layers?
\item We considered covering problems: what about packing problems?
\item Our algorithms for union problems give tight bounds only with respect to the corresponding natural LPs. This leaves room for improvement. 
\item There is a considerable gap between upper and lower bounds for intersection problems. In particular, our hardness results do not depend on $h$, while the approximation ratios deteriorate rather rapidly for increasing $h$.
\end{itemize}

\bibliographystyle{plain}
\bibliography{literature}
%\bibliography{set-cover-intersection,related}

\appendix
\newpage

\section{Intersection Problems}
\label{apx:intersection}

In this section we give the omitted details about the intersection problems.

\subsection{\ikmst}
\label{sec:ommited-kmsti}

In this section we prove a simple approximation algorithm for \ikmst. 
Recall that here we are given a graph $G=(V,E)$ on $n$ nodes, and $h$ edge-weight functions $w^1,\ldots,w^h$.
By taking the metric closures of $w^i$ we may assume that $G$ is complete.
The goal is computing a tree $T^i$ for each layer such that $\sum_iw^i(T^i)$ is minimized and $|\bigcap_i V(T^i)|\geq k$.

The algorithm is very simple: We consider a new metric $w$ defined as a sum $w(e):=\sum_i w^i(e)$ for each $e\in E$, and compute a $2$-approximate solution of the resulting (one-layer) \kmst\ problem using the algorithm in \cite{Garg05stoc}.

\begin{lemma}\label{lem:kmst2}
Let $K\subseteq V$, and $w^i(K)$ denote the cost of the minimum spanning tree of $K$ on layer $i$. Then there exist two nodes $u,v\in K$ such that $w^i(u,v) \leq 4w^i(K)/|K|^{1/h}$ for $i=1,\ldots,h$.
\end{lemma}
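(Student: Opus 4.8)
\textbf{Proof plan for Lemma~\ref{lem:kmst2}.}

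The statement asks us to find a single edge $(u,v)$ inside $K$ that is simultaneously cheap in \emph{every} layer, with the bound $w^i(u,v)\le 4w^i(K)/|K|^{1/h}$. The natural approach is a counting/pigeonhole argument over the pairs of nodes in $K$, performed layer by layer. First I would set $t:=|K|$ and, for each layer $i$, bound the number of pairs $\{u,v\}\subseteq K$ that are ``expensive'' in layer $i$, say the pairs with $w^i(u,v)> 4w^i(K)/t^{1/h}$. If I can show that in each layer the expensive pairs number strictly fewer than some fraction $\rho$ of all $\binom{t}{2}$ pairs, with $h\rho<1$, then by union bound a ``cheap-in-all-layers'' pair must exist, which is exactly what we want; the target exponent $1/h$ strongly suggests $\rho\approx t^{-1/h}\cdot(\text{const})$ so that $h\cdot t^{-1/h}<1$ is not quite it --- rather we want each layer to kill at most a $1/h$-fraction, so the count of expensive pairs in layer $i$ should be at most roughly $\binom{t}{2}/h$.

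The key tool for bounding expensive pairs in one layer is that $w^i(K)$ is the cost of an MST of $K$ under metric $w^i$. I would use the standard fact that in any metric, the number of pairs at distance greater than $\lambda$ times the average ``spread'' is controlled by the MST cost: more precisely, a minimum spanning tree on $t$ points has $t-1$ edges, and by a well-known packing argument (essentially the one underlying the $k$-MST and TSP lower-bound literature, e.g.\ \cite{Garg96focs,Garg05stoc}), if many pairs were far apart one could route a spanning structure more cheaply, contradicting minimality. Concretely, I expect the argument to go: sort the pairwise distances; if the $j$-th smallest distance among the ``short'' edges used to connect up $K$ is large, then the MST (which must connect all $t$ points) costs at least (number of forced long edges) times (that distance); quantitatively, for any threshold $d$, the number of pairs with $w^i(u,v)\le d$ must be at least $\binom{s}{2}$ where $s$ is the largest number of points that can be packed with mutual distance $\le d$ --- no, cleaner is: partition $K$ greedily into clusters of diameter $\le d$; if there are $c$ clusters then the MST must pay at least $(c-1)\cdot \tfrac{d}{?}$ to join them, so $c-1\le w^i(K)\cdot(\text{something})/d$, hence one cluster has $\ge t/c$ points, giving $\binom{t/c}{2}$ cheap pairs in that cluster. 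Choosing $d = 4w^i(K)/t^{1/h}$ makes $c-1 \le t^{1/h}/4\cdot(\text{const})$, so some cluster has $\gtrsim t^{1-1/h}$ points and hence $\gtrsim t^{2-2/h}$ cheap pairs; I need this to exceed a $(1-1/h)$-type fraction... the precise constants are what the factor $4$ is tuned for.

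The main obstacle will be getting the constants and the exponent arithmetic to line up so that the union bound over $h$ layers actually closes: I need the number of ``cheap in layer $i$'' pairs to be at least $\big(1-\tfrac1h\big)\binom{t}{2}+$ a bit, for \emph{every} $i$, so that the $h$ cheap-sets have nonempty common intersection. This is where I would iterate the clustering bound carefully --- possibly a greedy ``diameter-$d$ cover'' argument where the number of clusters $c$ satisfies $c \le 1 + w^i(K)/(d/2)$ (each new cluster center is at distance $>d/2$ from previous ones, so they form a packing whose spanning cost is a lower bound for $w^i(K)$ up to a factor), and then the largest cluster has $\ge t/c$ points. Plugging $d=4w^i(K)/t^{1/h}$ gives $c\le 1 + t^{1/h}/2$, so the largest cluster $C_i$ has $|C_i|\ge t/(1+t^{1/h}/2)\ge t^{1-1/h}$ (for $t$ not too small; small $t$ handled directly since then some edge is automatically cheap). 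Then $\binom{|C_i|}{2}\ge \tfrac12 t^{2-2/h}$ cheap pairs live in layer $i$. Finally I would argue that $h$ subsets of $\binom{K}{2}$, each of size $\ge \tfrac12 t^{2-2/h}$, intersect provided $\tfrac{h}{2}t^{2-2/h} > (h-1)\binom{t}{2}$... which unfortunately fails for large $t$, so the honest argument must instead directly intersect the clusters $C_1,\dots,C_h$ themselves rather than the pair-sets: show $\bigcap_i C_i$ contains at least two nodes, using $|C_i|\ge t^{1-1/h}$ and $|K|=t$ via the bound $|\bigcap_i C_i|\ge t - \sum_i (t-|C_i|) = t - \sum_i(t - |C_i|)$, and $t-|C_i|\le t - t^{1-1/h}$, so $|\bigcap_i C_i| \ge t - h(t-t^{1-1/h})$ --- again not obviously $\ge 2$. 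So the real fix is to build the clusters more cleverly (a recursive/nested partition across layers, refining by one layer at a time, so that after processing all $h$ layers one surviving cell has $\ge t^{1-h/h}\cdot(\text{const})= \Omega(1)$, in fact $\ge 2$, nodes, all pairwise $w^i$-close for every $i$), and tracking that the total MST cost lost at layer $i$ is the right multiple of $d$. Making this nested-refinement bookkeeping precise, with the factor $4$ absorbing the loss at each of the $h$ levels, is the crux; everything else is routine.
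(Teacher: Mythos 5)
You correctly diagnose, after two false starts, that the right strategy is a nested, layer-by-layer refinement rather than an independent per-layer argument combined with a union bound or a direct intersection of one cluster per layer, and this is indeed the approach the paper takes. However, the proposal stops at naming the strategy; the actual proof content --- the induction invariant and the bookkeeping that makes the exponent and the constant close --- is explicitly deferred as ``the crux.'' That deferred part is the whole proof. Concretely, the paper's argument maintains by induction on $i$ a set $K_i\subseteq K$ with $|K_i|\ge |K|^{1-i/h}$ \emph{together with} paths $P_i^1,\ldots,P_i^i$ on $K_i$ with $w^j(P_i^j)\le 2w^j(K)/|K|^{1/h}$. The refinement step builds an Euler tour $C^i$ of the layer-$i$ MST of $K$ (length $\le 2w^i(K)$), cuts it into $|K|^{1/h}$ segments of length $\le 2w^i(K)/|K|^{1/h}$, keeps the segment $P$ maximizing $|V(P)\cap K_{i-1}|$, and then shortcuts both $P$ and the previously accumulated paths onto $K_i=V(P)\cap K_{i-1}$ using the triangle inequality. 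You sketch the right mechanism (diameter-bounded clustering is essentially equivalent to cutting the Euler tour into segments) but never write down the invariant that the surviving set is simultaneously small-diameter in all layers processed so far, which is what allows the conclusion $w^j(u,v)\le 4w^j(K)/|K|^{1/h}$ once two nodes survive.

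Moreover, your own arithmetic flags the remaining obstacle without resolving it: nesting $h$ times with $|K|^{1/h}$ parts per layer gives a surviving cell of size $\ge |K|\cdot |K|^{-h/h}=1$, not $\ge 2$. The paper handles this asymmetrically: the first $h-1$ layers are cut into $|K|^{1/h}$ segments (accepting a factor $2$ per path), while the last layer is cut into only $|K|^{1/h}/2$ segments of length $\le 4w^h(K)/|K|^{1/h}$, which guarantees one segment contains $2|K_{h-1}|/|K|^{1/h}\ge 2$ nodes of $K_{h-1}$. The factor $4$ in the lemma is exactly the price of that halving of the number of segments in the final layer; it is not, as your sketch suggests, a uniform ``loss at each of the $h$ levels.'' That distinction is precisely the bookkeeping you would need to carry out, so the proposal, while aimed in the right direction, has not established the lemma.
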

\begin{proof}
Let us prove the following claim by induction on $i$: for any $i \in \{0,\ldots,h-1\}$, there exist a nodeset $K_i \subseteq K$ and paths $P_i^1,P_i^2\ldots,P_i^i$ on $K_i$ such that: (a) $|K_i|\geq |K|^{1-i/h}$ and (b) $w^j(P_i^j)\leq 2w^j(K)/|K|^{1/h}$ for $j=1,\ldots,i$. Trivially $K_0=K$ satisfies the claim, hence assume $i>0$. Let $T^i$ be the minimum spanning tree of $K$ on layer $i$. Duplicate its edges, compute an Euler tour, and shortcut duplicated nodes. Let $C^i$ be the resulting cycle on $K$ of length at most $2w^{i}(K)$. Remove up to $|K|^{1/h}$ edges from $C^i$  so as to obtain $|K|^{1/h}$ segments of length at most $2w^{i}(K)/|K|^{1/h}$ each. Let $P$ be the segment maximizing the cardinality of $K_i:=V(P)\cap K_{i-1}$.
Set $K_i$ satisfies (a) since $|K_i|\geq |K_{i-1}|/|K|^{1/h}\geq |K|^{1-(i-1)/h-1/h}$. The paths $P_i^i$ and $P_i^j$, $j<i$, satisfying (b) are obtained from $P$ and $P_{i-1}^j$, respectively, by shortcutting the nodes not in $K_i$.

Similarly as above, we can split $C^h$ into $|K|^{1/h}/2$ segments which span $K$ and have length at most $4w^{h}(K)/|K|^{1/h}$ each. At least one of these segments contains $2|K_{h-1}|/|K|^{1/h}\geq 2$ nodes of $K_{h-1}$. Thus there are two nodes $u$ and $v$ such that $w^i(u,v)\leq 4w^i(K)/|K|^{1/h}$ for $i=1,\ldots,h$.
\qed
\end{proof}

\begin{theorem}[Theorem~\ref{thm:kmst} restated]
The \ikmst\ algorithm above is $16k^{1-1/h}$-approximate.
%It can be modified to solve \iktsp{} with $32k^{1-1/h}$ approximation guarantee.
\end{theorem}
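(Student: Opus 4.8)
The plan is to reduce the bound to a single‑layer statement about the optimal intersection set and then to establish it by iterating Lemma~\ref{lem:kmst2}. Write $w:=\sum_i w^i$ for the combined metric used by the algorithm, and (as in Lemma~\ref{lem:kmst2}) let $w^i(S)$, resp.\ $w(S)$, denote the cost of a minimum spanning tree of a node set $S$ in the metric $w^i$, resp.\ $w$. Fix an optimal solution $(T^1,\dots,T^h)$ of cost $\opt=\sum_i w^i(T^i)$, let $K^\star:=\bigcap_i V(T^i)$ (so $|K^\star|\ge k$), and pick any $K\subseteq K^\star$ with $|K|=k$. Since the $w^i$ are metrics after taking metric closures, doubling $T^i$, taking an Euler tour and shortcutting it onto $K$ produces a spanning tree of $K$ of $w^i$‑cost at most $2\,w^i(T^i)$; hence $\sum_i w^i(K)\le 2\,\opt$. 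On the other hand the algorithm, being a $2$‑approximation for \kmst in the metric $w$ and $K$ being a feasible $k$‑node set there, outputs a tree $\hat T$ with $w(\hat T)\le 2\,w(K)$; and using $\hat T$ in every layer is feasible for \ikmst (its node set is the intersection over all layers and has at least $k$ nodes) and costs $\sum_i w^i(\hat T)=\sum_{e\in \hat T}\sum_i w^i(e)=w(\hat T)$. So it suffices to prove $w(K)=O(k^{1-1/h})\cdot\sum_i w^i(K)$.

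To prove this, build a spanning tree of $K$ greedily, one edge at a time: keep a forest $F_t$ on $K$ with $k-t$ components, pick one representative per component to form a set $R_t$ with $|R_t|=k-t$, apply Lemma~\ref{lem:kmst2} to $R_t$ simultaneously in all $h$ layers to get two representatives $u_t,v_t$ --- automatically in distinct components --- with $w^i(u_t,v_t)\le 4\,w^i(R_t)/(k-t)^{1/h}$ in every layer $i$, and add $\{u_t,v_t\}$ to the forest. After $k-1$ steps the forest is a spanning tree $T$ of $K$. Since $R_t\subseteq K$, the same double‑and‑shortcut estimate gives $w^i(R_t)\le 2\,w^i(K)$ for all $t$, whence in each layer
\[
w^i(T)=\sum_{t=0}^{k-2}w^i(u_t,v_t)\ \le\ \sum_{t=0}^{k-2}\frac{8\,w^i(K)}{(k-t)^{1/h}}\ =\ 8\,w^i(K)\sum_{j=2}^{k}j^{-1/h}\ =\ O\bigl(k^{1-1/h}\bigr)\,w^i(K),
\]
where we used $\sum_{j=1}^{k}j^{-1/h}\le\tfrac{h}{h-1}\,k^{1-1/h}\le 2k^{1-1/h}$ for $h\ge2$; for $h=1$ the claim is immediate since the algorithm is then a $2$‑approximation outright. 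Summing over $i$, $w(K)\le w(T)=\sum_i w^i(T)=O(k^{1-1/h})\cdot\sum_i w^i(K)$, which is what we needed.

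The conceptual difficulty, and the reason for routing everything through the representative sets $R_t$, is that a minimum spanning tree is \emph{not} monotone under taking subsets of a metric space (deleting a single high‑degree ``centre'' can nearly double its cost), so the naive recursion ``delete an endpoint of $\{u_t,v_t\}$ and recurse on $K\setminus\{u_t\}$'' cannot be controlled via $w^i(K\setminus\{u_t\})\le w^i(K)$. The remaining task is constant chasing down to the advertised $16$: this calls for the sharper form of Lemma~\ref{lem:kmst2} --- in which only one (``pivot'') layer pays the factor $4$ while the others pay $2$ --- choosing the pivot at each step to be a layer minimising $w^i$, and replacing the Euler‑tour bound $w^i(R_t)\le 2\,w^i(K)$ by collapsing the edges already chosen to weight $0$, which keeps every per‑step estimate anchored to $w^i(K)$. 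Feasibility of the returned solution, the polynomial running time, and the choice $|K|=k$ are routine.
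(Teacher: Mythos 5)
Your proof follows essentially the same route as the paper's: iterate Lemma~\ref{lem:kmst2} on a shrinking $(k-t)$-element set to build a single cheap tree $T$ spanning the $k$ optimal intersection nodes, then compare $w(T)$ to the cost the $k$-MST $2$-approximation returns in the sum metric $w$. The one place where you diverge is in how the shrinking set is maintained: the paper contracts the chosen edge in all layers (so the MST cost of the current set is bounded by $w^i(K_{\optimal})$ directly, with no loss), whereas you keep an explicit set $R_t$ of one representative per forest component and bound $w^i(R_t)$ via double-and-shortcut, which costs an extra factor of $2$ per step. Your variant is cleaner in that Lemma~\ref{lem:kmst2} is applied verbatim to an honest subset of $V$ with the original metrics, so one never has to re-derive the lemma for a contracted (pseudo-)metric or argue that a single realizing edge works simultaneously in all layers; the price is the constant $64$ rather than the advertised $16$, a gap you explicitly flag and correctly attribute to the contraction/weight-zeroing trick. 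Two further remarks. First, your explicit bound $\sum_i w^i(K)\le 2\opt$ is the right one, and it is actually \emph{more careful} than the paper, which asserts $\sum_i w^i(K_{\optimal})=\opt$ without accounting for the fact that the optimal trees $T^i$ may use Steiner nodes outside $K_{\optimal}$. Second, the ``sharper form of Lemma~\ref{lem:kmst2}'' you invoke for constant-chasing is indeed extractable from its proof ($2$ on all but one pivot layer), but the paper's stated $16$ does not use it --- it comes solely from contraction plus the (questionable) $=\opt$ step --- so that part of your closing remark is not needed to reconcile with the paper.
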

\begin{proof}
Consider the following process: starting with the optimal set $K_{\optimal}$ of $k$ covered nodes, we iteratively take the edge $\{x,y\}$ guaranteed by Lemma \ref{lem:kmst2} and contract it in all layers, until $K_{\optimal}$ collapses into a single node. The contracted edges form a tree $T'$ (same for all layers) spanning $k$ nodes, of cost
\begin{align*}
 w(T') \leq &\ 4\sum_{i=1}^h w^i(K_{\optimal}) \sum_{i=1}^{k-1} (k-i+1)^{-\frac1h}
\leq 4\sum_{i=1}^h w^i(K_{\optimal}) \int_1^k x^{-\frac1h} dx \\
          <&\ 8k^{1-\frac1h}\sum_{i=1}^h w^i(K_{\optimal})=8k^{1-\frac1h} \opt.
\end{align*}
The algorithm returns a solution of cost at most $2w(T')$. The claim follows.
%For \iktsp{}, recall that both considered problems are equivalent up to a factor of $2$
%in the approximation.
\qed
\end{proof}

We show that the bound proven in Section \ref{sec:intersection:kmst} is tight up to a factor $O(h)$. 
\begin{lemma}
The approximation ratio of the \ikmst\ algorithm in Section \ref{sec:intersection:kmst} is $\Omega(\frac{1}{h}n^{1-1/h})$
\end{lemma}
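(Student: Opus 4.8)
I need to exhibit an instance of \ikmst\ on which the algorithm of Section~\ref{sec:intersection:kmst} (sum the metrics, run a $2$-approximate \kmst) produces a solution that is a factor $\Omega(\frac1h n^{1-1/h})$ worse than the optimum. Since the upper bound was $16k^{1-1/h}$, for $k=\Theta(n)$ this shows the analysis is tight up to $O(h)$.

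**Plan.** The natural construction is a ``product-of-paths'' metric. Take $n=t^h$ nodes indexed by $[t]^h$, and for each coordinate $i$ let layer $i$ be (the metric closure of) a structure that only measures the $i$-th coordinate: concretely, in layer $i$ put an edge of weight $1$ between two nodes whenever they agree in all coordinates except the $i$-th and differ by one step there, so that $w^i$ is essentially the $\ell_1$-distance in the $i$-th coordinate (scaled), and is $0$ between nodes sharing the same $i$-th coordinate. Then spanning all $k=n=t^h$ nodes in layer $i$ costs only about $t^{i-1}\cdot t = t$ --- wait, more carefully, a tree in layer $i$ that connects all $n$ nodes can first collapse (at zero cost) all nodes sharing an $i$-th coordinate, leaving $t$ ``super-nodes'' along the $i$-th axis, connectable at cost $t-1$. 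Hence $\opt=\sum_i w^i(T^i)=h(t-1)=\Theta(h\,n^{1/h})$. I would double-check that this zero-cost collapsing is legitimate, i.e.\ that the metric closure indeed assigns distance $0$ to same-$i$-coordinate pairs; if exact zeros are awkward for the \kmst\ subroutine one can use $\varepsilon$ instead and take $\varepsilon\to 0$.

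**The combined metric.** Under $w=\sum_i w^i$, the distance between two distinct nodes is $\sum_i |\text{difference in coord }i|\ge 1$, so any tree spanning $k=n$ nodes has $w$-cost $\ge n-1$. The $2$-approximate \kmst\ algorithm, asked to span all $n$ nodes, therefore returns a tree $T$ of $w$-cost $\Theta(n)$, and the algorithm outputs $T^i:=T$ for every layer. Its true cost is $\sum_i w^i(T)=w(T)=\Theta(n)$. Comparing with $\opt=\Theta(h\,n^{1/h})$ gives ratio $\Omega\!\left(\frac{n}{h\,n^{1/h}}\right)=\Omega\!\left(\frac1h n^{1-1/h}\right)$, as claimed. (For $k<n$ one uses the same grid on $k$ of the nodes plus a cheap gadget, but stating it for $k=n$ suffices since the lemma is about the algorithm's worst-case ratio.)

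**Main obstacle.** The only delicate point is making sure the \emph{algorithm as specified} — which runs a fixed $2$-approximation for one-layer \kmst\ on $w$ — can be \emph{forced} to return the bad tree, rather than merely being allowed to. Since the combined metric makes \emph{every} spanning tree on $n$ nodes have cost in $[n-1,2(n-1)]$ (the graph is ``almost uniform''), \emph{any} tree the subroutine outputs already has $w$-cost $\Theta(n)$, so no adversarial tie-breaking is needed; I would spell this out to close the argument. A secondary check is handling the metric-closure step and the possibility that $n^{1/h}$ is not an integer (round $t=\lfloor n^{1/h}\rfloor$ and absorb lower-order terms into the $\Omega$). I would also verify the elementary estimate $\sum_{j=1}^{t}(t-j+1)^{-1/h}=\Theta(t^{1-1/h})$ implicitly used when reconciling with the upper-bound proof, though it is not strictly needed for the lower bound itself.
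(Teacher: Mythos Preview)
Your grid construction is correct and yields the same $\Omega(\frac1h n^{1-1/h})$ bound, but via a genuinely different instance than the paper's. The paper takes $n=2^{hN}-1$ nodes identified with $hN$-bit strings and makes layer $i$ the shortest-path metric on a Hamiltonian \emph{cycle} obtained by sorting the nodes according to the cyclic rotation $x_ix_{i+1}\cdots x_h x_1\cdots x_{i-1}$ of their $N$-bit blocks. Each layer then has MST cost exactly $n-1$, so $\opt=h(n-1)$; the crux is a four-case argument showing that \emph{every} pair $x\neq y$ satisfies $w(x,y)=\Omega(n^{1-1/h})$, whence any spanning tree in the summed metric costs $\Omega(n^{2-1/h})$. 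Your construction instead rescales both numerator and denominator down by $n^{1-1/h}$: each layer's MST costs only $\Theta(n^{1/h})$ while the summed-metric MST costs $\Theta(n)$. Your argument is more elementary (no case analysis), at the price of degenerate layer metrics with zero distances; the paper's instance keeps all distances strictly positive in every layer.

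Two points to clean up. First, your layer-$i$ graph as written (unit edges only between nodes agreeing in all coordinates but the $i$-th) is \emph{disconnected} across fibers with the same $i$-th coordinate, so its metric closure assigns $\infty$, not $0$, to such pairs --- exactly the check you flagged, and it fails. You must also add weight-$0$ (or weight-$\varepsilon$) edges between nodes sharing their $i$-th coordinate, or simply define $w^i(u,v)=|u_i-v_i|$ directly as a pseudometric. Second, it is not true that ``every spanning tree on $n$ nodes has cost in $[n-1,2(n-1)]$'' in the $\ell_1$ grid metric (a star has much larger cost). What you actually need, and what holds, is only that the \emph{output of the $2$-approximation} lies in this range: the lower bound because any spanning tree has $n-1$ edges each of $w$-weight at least $1$, and the upper bound because the grid admits a Hamiltonian path of $w$-cost $n-1$.
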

\begin{proof}
Take an arbitrary integer $N > 2$ and set $n = 2^{hN} - 1$. We are going to
construct weights $w^1, w^2, \ldots, w^h$ on an $n$-node complete graph $G=(V,E)$
such that $w^i(T^i) = n-1$, but $w(T) = \Omega(n^{2-1/h})$. Here $T^i$ is the minimum spanning tree on layer $i$. 

We take $V = \{0, 1,\ldots, n-1\}$, that is, the set of nodes of $G$ are all
numbers with up to $hN$ digits in binary, except for $2^{hN}-1$, i.e., the number with $hN$ ones in binary.
Given $x \in V$, we split its $hN$-digit binary representation into $h$ segments of length $N$ and
denote the $i$-th segment by $x_i$. In other words, the binary representation of $x$ is $x_1x_2\ldots x_h$
and each $x_i$ is a $N$-digit binary string. By $x(i)$ we denote number represented in binary as $x_ix_{i+1}\ldots x_hx_1\ldots x_{i-1}$,
that is, we rotate cyclically $(i-1)N$ bits.

To construct metric $w^i$, sort $V$ according to numbers $x(i)$ and connect $V$ into Hamiltonian cycle $C_i$
in this order. All edges on $C_i$ have weight $1$ and other distances are minimum length distances on $C_i$.
Clearly, $w^i(T^i) = n-1$.

It is sufficient to show that, for each edge ${x,y}$, we have $w(x,y) = \Omega(n^{1-1/h})$. This leads to the claimed bound on $w(T)$.
%Let $x, y \in V$ and $x \neq y$. 
We distinguish a few subcases.

{\bf{Case 1.}} There exists $i$, $1 \leq i \leq h$, such that $|x_i - y_i| \geq 2$ and
$\{x_i, y_i\} \neq \{0, 2^N-1\}$. Then in $G_i$ the distance between $x$ and $y$ is at least
$n^{1-1/h}$.

{\bf{Case 2.}} There exists $i$, $1 \leq i \leq h$, such that $x_i = y_i$. Take $j$ such that $x_j \neq y_j$ but $x_{j+1} = y_{j+1}$ (with $x_{h+1} = x_1$). Then in $G_j$ the distance
between $x$ and $y$ is at least $n^{1-1/h} - n^{1-2/h}$.

{\bf{Case 3.}} There exists $i$, $1 \leq i \leq h$, such that $|x_i - y_i| = 1$.
Then in $G_{i-1}$ (with $G_0 = G_h$) the distance between $x$ and $y$ is at least $n^{1-1/h} - 2n^{1-2/h}$.

{\bf{Case 4.}} As $2^{hN}-1$ is not in $V$, there exists $i$, $1 \leq i \leq h$, such that
$x_i = 2^N-1$, $y_i = 0$, but $x_{i+1} = 0$ and $y_{i+1} = 2^N-1$. Then the distance
between $x$ and $y$ in $G_i$ is at least $n^{1-1/h} - 2n^{1-2/h}$.

As we exhausted all possibilities, the bound on $w(T)$ is proven.
\qed
\end{proof}

\subsection{Approximation hardness}\label{apx:kmst-hardness}

We start with the following technical lemma.
\begin{lemma}\label{extract-div-c}
  Assume we have an undirected graph $G=(V,E)$
  and an induced subgraph $G[X]$, $X \subseteq V$ with $x$ nodes
  and $y$ edges. Let $2 \leq x_0 \leq x$ and let $c := x/x_0$.
  Then one can in polynomial time find an induced subgraph $G[Y]$
  on $x_0$ nodes with at least $y/(2c^2)$ edges.
\end{lemma}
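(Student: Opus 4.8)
The plan is to use an averaging/probabilistic argument: pick a random subset $Y \subseteq X$ of exactly $x_0$ vertices, and bound the expected number of induced edges from below by $y/(2c^2)$, so that some choice of $Y$ achieves at least this bound; then derandomize by noting the whole argument can be made constructive. First I would compute the expected number of edges in $G[Y]$ when $Y$ is a uniformly random $x_0$-subset of $X$. For a fixed edge $\{u,v\}$ of $G[X]$, the probability that both endpoints land in $Y$ is $\binom{x-2}{x_0-2}/\binom{x}{x_0} = \frac{x_0(x_0-1)}{x(x-1)}$. By linearity of expectation, the expected number of induced edges is exactly $y\cdot\frac{x_0(x_0-1)}{x(x-1)}$.

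Next I would show this expectation is at least $y/(2c^2)$. Since $c = x/x_0$, we have $\frac{x_0^2}{x^2} = 1/c^2$, so it suffices to check that $\frac{x_0(x_0-1)}{x(x-1)} \geq \frac{1}{2}\cdot\frac{x_0^2}{x^2}$, i.e. $\frac{x_0-1}{x-1} \geq \frac{x_0}{2x}$, i.e. $2x(x_0-1) \geq x_0(x-1)$. Rearranging gives $2xx_0 - 2x \geq xx_0 - x_0$, i.e. $xx_0 \geq 2x - x_0$, which holds because $x_0 \geq 2$ forces $xx_0 \geq 2x > 2x - x_0$. Hence the expectation is at least $y/(2c^2)$, so there exists an $x_0$-subset $Y$ with at least $y/(2c^2)$ induced edges; since the bound on the number of edges need not be an integer, we may take the floor, which is still $\geq y/(2c^2)$ is not automatic — but the statement only asks for "at least $y/(2c^2)$", and an integer that is $\geq$ a real number that the average attains means some term meets or exceeds the average, so a valid $Y$ exists.

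To make this constructive in polynomial time, I would replace the random choice by the standard greedy derandomization (method of conditional expectations): process the vertices of $X$ one at a time, maintaining a partial decision of which vertices are "in"; at each step, keep the conditional expectation of the final induced edge count at least as large as before by choosing whether to include or exclude the current vertex appropriately, where the conditional expectation given a partial assignment is again a simple closed-form expression (a weighted count of edges already committed plus fractional contributions). This yields, in $O(|V|^2)$ time, a set $Y$ of exactly $x_0$ vertices with at least $y/(2c^2)$ induced edges. The main thing to be careful about is the bookkeeping in the conditional-expectation step — ensuring the running partial expectation is computed correctly and that exactly $x_0$ vertices are selected — but there is no real obstacle here; the only genuinely substantive point is the elementary inequality $\binom{x_0}{2}/\binom{x}{2} \geq \frac{1}{2c^2}$, which we verified above.
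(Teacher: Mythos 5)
Your proof takes exactly the same route as the paper: pick a uniformly random $x_0$-subset of $X$, compute the expected number of induced edges as $y\,\binom{x_0}{2}/\binom{x}{2}=y\,x_0(x_0-1)/(x(x-1))$, lower-bound this by $y/(2c^2)$ using $x_0\ge 2$, and derandomize by the method of conditional expectations. The paper states the same argument in two sentences; you simply verify the arithmetic inequality and spell out the derandomization, both of which are correct.
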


\begin{proof}
  By the linearity of expectations a random 
  subset $X_0 \subseteq X$ containing $x_0$ vertices
  induces a subgraph $G[X_0]$ with $\frac{yx_0(x_0-1)}{x(x-1)} \ge \frac{yx_0^2}{2x^2} = \frac{y}{2c^2}$
  edges.
  We can derandomize this procedure with standard techniques.
\qed
\end{proof}

%\begin{proof}
%  Construct $G[Y]$ greedily: first set $Y_x := X$ and
%  construct $Y_{k-1}$ by removing the lowest degree node from $G[Y_k]$.
%  Take $Y := Y_{x_0}$.
%  We now prove that such construction yields a subgraph with at least $y/(2c^2)$ edges.
%
%  Let $y_k$ be the number of edges in graph $G[Y_k]$. We have $y_x = y$.
%  At each step we have a graph with $k$ nodes and $y_k$ edges, thus
%  by removing a node with the smallest degree we remove at most $2y_k/k$ edges.
%  Thus:
%  $$\frac{y_{k-1}}{k-1} \geq \frac{y_k - \frac{2y_k}{k}}{k-1} = \frac{y_k}{k} \cdot \frac{k-2}{k-1}.$$
%  Hence we have:
%  $$
%    \frac{y_{x_0}}{x_0} \geq \frac{y}{x} \cdot \prod_{k=x_0+1}^x \frac{k-2}{k-1}
%    = \frac{y}{x} \cdot \frac{x_0-1}{x-1}
%    \geq \frac{y}{x} \cdot \frac{1}{2c}.$$
% We can conclude that
%  $$y_{x_0} \geq y \cdot \frac{x_0}{x} \cdot \frac{1}{2c} = y\cdot \frac{1}{2c^2}.$$
%\end{proof}

Now we reduce the domain of \kds\ to bipartite graphs (see also Figure \ref{fig:lemma-ds-bipartite}).
%\rem{F: an example of the construction for N=3 and h=2 would help}
\begin{lemma}\label{ds-bipartite}
  Assume there exists a $f(n, k)$-approximation algorithm for \kds\ on bipartite graphs. Then there exists a $8f(2n, 2k)$-approximation
  algorithm for the same problem on arbitrary graphs.
\end{lemma}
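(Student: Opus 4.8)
The plan is to associate to an arbitrary graph $G=(V,E)$ on $n$ nodes its bipartite "double cover" $H$, apply the assumed bipartite $\kds$ approximation there, and translate the result back. Concretely, I would take $V(H) = V_1 \cup V_2$ where $V_1$ and $V_2$ are two disjoint copies of $V$ (so $|V(H)| = 2n$), and put an edge between the copy of $u$ in $V_1$ and the copy of $v$ in $V_2$ whenever $\{u,v\}\in E$ (and also the symmetric edge between the copy of $v$ in $V_1$ and the copy of $u$ in $V_2$). Thus each original edge $\{u,v\}$ gives rise to two edges of $H$, i.e.\ $|E(H)| = 2|E|$.

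The first direction (completeness): if $G[X]$ is an optimal $k$-node induced subgraph with $y$ edges, then taking in $H$ the copies of $X$ inside both $V_1$ and $V_2$ yields a $2k$-node set inducing exactly $2y$ edges. Hence the optimum of $\kds$ on $H$ with parameter $2k$ is at least $2\cdot \mathrm{OPT}_G(k)$. The second direction (soundness) is the part I expect to be the main obstacle: given a $2k$-node set $Z\subseteq V(H)$ inducing $e(Z)$ edges, I need to extract from it a $k$-node \emph{induced} subgraph of $G$ with $\Omega(e(Z))$ edges. The natural move is to project $Z$ back to $V$: let $Z_1, Z_2$ be the parts of $Z$ in $V_1, V_2$, and let $A = \pi(Z_1)$, $B=\pi(Z_2)$ be their images in $V$. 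Every edge induced by $Z$ in $H$ corresponds to an original edge with one endpoint in $A$ and the other in $B$, so $e(Z) \le e_G(A,B)$ where $e_G(A,B)$ counts pairs $\{u,v\}\in E$ with $u\in A, v\in B$ (counted once if $\{u,v\}\subseteq A\cap B$, which is fine). Now $A\cup B$ has at most $2k$ vertices and $G[A\cup B]$ contains all these $\ge e(Z)$ edges, so I have a $\le 2k$-node induced subgraph with $\ge e(Z)$ edges.

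It then remains to shrink a $(\le 2k)$-node induced subgraph with $m$ edges down to exactly $k$ nodes while keeping $\Omega(m)$ edges, which is exactly what Lemma~\ref{extract-div-c} does with $x \le 2k$, $x_0 = k$, hence $c \le 2$, losing a factor $2c^2 \le 8$. (If $A\cup B$ has fewer than $k$ vertices, pad arbitrarily.) Chaining the losses: the bipartite algorithm on $(H, 2k)$ returns a set with at least $\mathrm{OPT}_H(2k)/f(2n,2k) \ge 2\,\mathrm{OPT}_G(k)/f(2n,2k)$ edges; projecting keeps all of them; Lemma~\ref{extract-div-c} keeps at least a $1/8$ fraction, giving a $k$-node induced subgraph of $G$ with at least $\mathrm{OPT}_G(k)/(4 f(2n,2k))$ edges. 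Wait --- I should be careful to get the constant in the statement: the factor is $2$ gained from the double cover against $8$ lost from Lemma~\ref{extract-div-c}, i.e.\ net $4f(2n,2k)$; since the lemma claims $8f(2n,2k)$, there is slack, so the bound holds comfortably. The only genuine subtlety is the edge-counting in the projection step when $A$ and $B$ overlap, and making sure "induced subgraph on exactly $x_0$ nodes" in Lemma~\ref{extract-div-c} is applied with a valid $x_0 \ge 2$, which holds as long as $k\ge 2$ (the case $k\le 1$ being trivial). Everything else is bookkeeping.
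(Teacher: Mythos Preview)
Your approach is essentially identical to the paper's: build the bipartite double cover, run the bipartite algorithm with target $2k$, project back to $G$, and then trim down to $k$ vertices via Lemma~\ref{extract-div-c}. There is, however, one slip in your projection accounting. An edge $\{u,v\}\in E(G)$ spawns \emph{two} edges of $H$, namely $(u_1,v_2)$ and $(v_1,u_2)$; if $Z$ happens to contain all four of $u_1,u_2,v_1,v_2$, then both of these lie in $H[Z]$ but they project to the single edge $\{u,v\}$ of $G$. Hence the correspondence from $H[Z]$-edges to $G[A\cup B]$-edges is up to $2$-to-$1$, and the correct inequality is $e(Z)\le 2\,|E(G[A\cup B])|$, not $e(Z)\le e_G(A,B)$. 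This is exactly why the paper states that $G[X]$ has at least $y'/2$ edges before invoking Lemma~\ref{extract-div-c}. With this factor-$2$ loss, the chain becomes $2$ (double cover) $\times \tfrac{1}{2}$ (projection) $\times \tfrac{1}{8}$ (Lemma~\ref{extract-div-c} with $c\le 2$), yielding precisely the claimed $8f(2n,2k)$ rather than the $4f(2n,2k)$ you computed; so the ``slack'' you noticed is not actually there, but the lemma's stated bound does hold.
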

%\rem{F: check that we use (u,v) not uv for edges}
\begin{proof}
  Assume we have an instance of \kds, i.e., a graph $G=(V,E)$ and one integer $k$.
  Construct a bipartite graph $G'=(V_1\cup V_2, E')$ as follows: for each $v \in V$ we take two
  copies $v_1 \in V_1$ and $v_2 \in V_2$. For each $(u,v) \in E$ we add $(u_1,v_2)$ and
  $(u_2,v_1)$ to $E'$. Let us run the $f(n,k)$-approximation algorithm for the bipartite graph $G'$
  and $k' := 2k$. This way we obtain a set $X' \subset V_1 \cup V_2$ such that $G[X']$ has $y'$ edges.
  Take $X := \{v: v_1 \in X' {\rm\ or\ } v_2 \in X'\}$, $k \leq |X| \leq 2k$. The graph $G[X]$ has at least $y'/2$ edges. Reduce $X$ to size $k$ using Lemma \ref{extract-div-c},
  obtaining a solution with at least $y'/16$ edges.

Let us now bound how much the obtained solution is worse than the optimal solution. Let
  $X_{\opt}$ be any optimal solution in $G$ such that $G[X_{\opt}]$ has $y_{\opt}$ edges
  and $k$ nodes. In $G'$ set $X_{\opt}' = \{v_1, v_2: v \in X_{\opt}\}$ has
  $2k$ nodes and $2y_{\opt}$ edges, thus $y' \geq 2y_{\opt}/f(2n, 2k)$. Therefore
  $G[X]$ has at least $y_{\opt}/(8f(2n, 2k))$ edges.
\qed
\end{proof}

\begin{Figure}[t]
\begin{center}
\begin{tikzpicture}[scale=1.0]
  \draw[thick] (0.0, -1.0) -- (0.0, 1.0) -- (-0.9, 2.0) -- (-0.9, 0.0) -- (0.0, -1.0);
  \fill (0.0, -1.0) circle (0.1);
  \fill (0.0, 1.0) circle (0.1);
  \fill (-0.9, 2.0) circle (0.1);
  \fill (-0.9, 0.0) circle (0.1);

  \draw[dashed,->] (1.0, 0.5) -- (3.0, 0.5);
  \begin{scope}[shift={(4.0, 0.0)}]
    \fill (0.0, -1.0) circle (0.1);
    \fill (0.0, 1.0) circle (0.1);
    \fill (-0.0, 2.0) circle (0.1);
    \fill (-0.0, 0.0) circle (0.1);
  \draw[thick] (0.0, -1.0) -- (1.0, 0.0);
  \draw[thick] (0.0, -1.0) -- (1.0, 1.0);
  \draw[thick] (0.0, 1.0) -- (1.0, 2.0);
  \draw[thick] (0.0, 1.0) -- (1.0, -1.0);
  \draw[thick] (-0.0, 0.0) -- (1.0, -1.0);
  \draw[thick] (-0.0, 0.0) -- (1.0, 2.0);
  \draw[thick] (-0.0, 2.0) -- (1.0, 0.0);
  \draw[thick] (-0.0, 2.0) -- (1.0, 1.0);
  \end{scope}
  
  \begin{scope}[shift={(5.0, 0.0)}]
    \fill (0.0, -1.0) circle (0.1);
    \fill (0.0, 1.0) circle (0.1);
    \fill (0.0, 2.0) circle (0.1);
    \fill (0.0, 0.0) circle (0.1);
  \end{scope}
\end{tikzpicture}
\caption{Construction from Lemma \ref{ds-bipartite} applied to a $4$-cycle.}
\label{fig:lemma-ds-bipartite}
\end{center}
\end{Figure}

Now we relate \kds\ and \leds.
A lemma similar to the following one was proved in~\cite{hj06}, but we include the proof for the sake of completeness.

\begin{lemma}\label{ds-dse}
  Assume there exists a $f(n)$-approximation algorithm
  for \leds\ on bipartite graphs.
  Then there exists a $2(f(n))^2$-approximation algorithm
  for \kds\ on bipartite graphs.
\end{lemma}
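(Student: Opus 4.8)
The plan is to reduce \kds{} (on bipartite graphs) to \leds{} (on bipartite graphs) by guessing the number of vertices $x$ in the optimal $k$-densest subgraph's footprint — but since we are on bipartite graphs and the \leds{} solution may use a different number of vertices than $k$, I expect to lose a factor related to the ratio between these numbers, which is exactly where Lemma~\ref{extract-div-c} comes in. More concretely: let $X_{\opt}$ be an optimal solution to the \kds{} instance on the bipartite graph $G$, with $k$ vertices and $y_{\opt}$ edges. Then $G[X_{\opt}]$ certifies that $G$ contains a subgraph on $k$ vertices with $y_{\opt}$ edges, so in particular $G$ has a subgraph with at least $\ell := y_{\opt}$ edges using at most $k$ vertices. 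Hence the optimum of the \leds{} instance $(G,\ell)$ is at most $k$.

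The first step is to run the $f(n)$-approximation for \leds{} on $(G,\ell)$; but we don't know $y_{\opt}$, so I would run it for every candidate value of $\ell$ from $1$ up to $|E|$ (polynomially many guesses) and keep all the solutions. For the correct guess $\ell = y_{\opt}$, we obtain a vertex set $Y$ with $|Y| \le f(n)\cdot k$ such that $G[Y]$ has at least $y_{\opt}$ edges. Now $|Y|$ may be larger than $k$, so the second step is to shrink $Y$ down to exactly $k$ vertices using Lemma~\ref{extract-div-c} with $x_0 = k$ and $c = |Y|/k \le f(n)$: this yields an induced subgraph on $k$ vertices with at least $y_{\opt} / (2c^2) \ge y_{\opt}/(2 f(n)^2)$ edges. (If $|Y| \le k$ we can simply add arbitrary extra vertices to reach $k$ without losing edges; and if $|Y| < x_0$ we pad first, so the hypothesis $x_0 \le x$ of Lemma~\ref{extract-div-c} is met.) This is a valid \kds{} solution whose number of edges is within a factor $2 f(n)^2$ of optimal, giving the claimed $2(f(n))^2$-approximation.

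The routine bookkeeping is: verifying that trying all $\ell \in \{1,\dots,|E|\}$ keeps the algorithm polynomial, verifying the arithmetic $y_{\opt}/(2c^2) \ge y_{\opt}/(2f(n)^2)$, and handling the degenerate cases $|Y| \le k$ and the possibility that $f(n)\cdot k$ exceeds $n$ (in which case $Y = V$ and the bound still holds). The step I expect to be the real crux — though it is short — is the observation that an approximate \leds{} solution may use arbitrarily more than $k$ vertices, and that this ``vertex overshoot'' is precisely bounded by the approximation ratio $f(n)$, so that Lemma~\ref{extract-div-c} converts it into a bounded edge loss; getting that factor to come out as $2 f(n)^2$ rather than something worse is the delicate point, and it is exactly why the lemma is stated with the $c^2$ dependence. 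Everything else follows by putting the two reductions together with the earlier bipartite reduction (Lemma~\ref{ds-bipartite}) to obtain hardness of \kds{} in the form needed for Theorems~\ref{thm:dse-sci} and~\ref{thm:dse-mst}.
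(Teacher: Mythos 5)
Your proposal is correct and matches the paper's approach: both reduce \kds{} to \leds{} by varying the edge-target $\ell$, observe that at $\ell=y_{\opt}$ the $f(n)$-approximate \leds{} solution has at most $f(n)k$ vertices and at least $y_{\opt}$ edges, and then invoke Lemma~\ref{extract-div-c} to shrink to exactly $k$ vertices while losing at most a $2f(n)^2$ factor in edges. The only cosmetic difference is that you try every $\ell$ and keep the best, whereas the paper increments $\ell$ and stops at the last value whose solution stays within $f(n)k$ vertices; both yield the same guarantee.
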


\begin{proof}
  Assume we have an instance $(G=(V,E), k)$ of \kds.
  For the graph $G$ we run the approximation algorithm for \leds\
  with consecutive $\ell:=1,2,\ldots$, obtaining solutions $X_1, X_2, \ldots$.
  We stop when $G[X_{\ell+1}]$ has more that $f(n)k$ nodes. Assume that
  $X_\ell$ was the last solution with at most $f(n)k$ nodes.
  We reduce $X_\ell$ to size $k$ using Lemma \ref{extract-div-c} and return the reduced set.

  Let us now prove that it is in fact a $2(f(n))^2$-approximation.
  Let $X_{\opt}$ be the optimal solution for the \kds\ instance with $y_{\opt}$ edges.
  Note that $\ell \geq y_{\opt}$, as $X_\ell$ was the last solution with at most $f(n)k$ nodes
  and our algorithm is a $f(n)$-approximation. Thus, by Lemma \ref{extract-div-c},
  the returned solution has at least $\ell/(2(|X_\ell|/k)^2) \geq y_{\opt}/(2(f(n))^2)$ edges.
\qed
\end{proof}

%\begin{remark}
%  The same thesis as in Lemma \ref{ds-dse} holds for \leds\
%  and \kds\ considered in arbitrary graphs. However, we stated Lemma \ref{ds-dse}
%  for bipartite graphs, to pipeline it with Lemma \ref{ds-bipartite}.
%\end{remark}

Pipelining Lemmas~\ref{ds-dse}~and~\ref{ds-bipartite} proves the following theorem.

\begin{theorem}\label{thm:dense-v2e}
If there exists an $f(n)$-approximation algorithm for \leds on bipartite graphs, then there exists a $16(f(2n))^2$-approximation algorithm for \kds on arbitrary graphs.
\end{theorem}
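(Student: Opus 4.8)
The plan is to obtain the theorem by simply composing the two preceding lemmas; all the real work has already been done in Lemmas~\ref{ds-dse}, \ref{ds-bipartite} (and, underneath them, in the downsampling subroutine of Lemma~\ref{extract-div-c}), so the statement is just their concatenation. First I would invoke Lemma~\ref{ds-dse}: starting from the assumed $f(n)$-approximation algorithm for \leds on bipartite graphs, it produces a $2(f(n))^2$-approximation algorithm for \kds restricted to bipartite graphs. The key feature to note here is that this guarantee depends only on the vertex count $n$ and not on the target $k$, so in the notation of Lemma~\ref{ds-bipartite} we may set $g(n,k) := 2(f(n))^2$.

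Next I would feed this bipartite \kds algorithm into Lemma~\ref{ds-bipartite}, which turns a $g(n,k)$-approximation on bipartite graphs into an $8\,g(2n,2k)$-approximation on arbitrary graphs (the construction there doubles both the number of vertices and the target, which is what introduces the factor-two shift in the argument of $g$). Substituting $g(n,k)=2(f(n))^2$ yields an approximation factor of $8\cdot 2(f(2n))^2 = 16(f(2n))^2$ for \kds on arbitrary graphs, exactly as claimed. Polynomiality of the resulting algorithm is immediate: Lemma~\ref{ds-dse} makes at most $|E|$ calls to the \leds algorithm (one per candidate value of $\ell$), and Lemma~\ref{ds-bipartite} makes a single call to the bipartite \kds algorithm plus one application of the derandomized sampling of Lemma~\ref{extract-div-c}.

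The only point that requires a little care is the bookkeeping of the arguments through the two reductions: one must check that the intermediate guarantee really is $k$-independent (otherwise one would pick up a more complicated expression after the target-doubling step), and that the two multiplicative overheads $2$ and $8$, together with the squaring from Lemma~\ref{ds-dse}, combine to give precisely $16(f(2n))^2$ rather than, say, $16(f(2n))^2$ with an extra $k$-dependent term. I do not expect any genuine obstacle here — there is no analysis to redo, and the proof is a two-line pipeline.
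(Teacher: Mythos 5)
Your proposal is correct and is exactly the paper's proof: the paper states the theorem is obtained by ``pipelining Lemmas~\ref{ds-dse} and~\ref{ds-bipartite},'' which is precisely the composition you carry out, and your bookkeeping $8\cdot 2(f(2n))^2 = 16(f(2n))^2$ matches.
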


We conclude this section by showing the missing reductions for \iksc, \ikmst and \ikmfl.

\begin{lemma}\label{lem:dse-sci}
If there exists an $f(n, k)$-approximation algorithm for unweighted \iksc on two layers, then there exists an $f(m, \ell)$-approximation algorithm for \leds on bipartite graphs.
\end{lemma}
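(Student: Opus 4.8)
The goal is to convert an $f(n,k)$-approximation for unweighted \iksc on two layers into an $f(m,\ell)$-approximation for \leds on bipartite graphs, where $m$ is the number of edges and $\ell$ the target edge count. The natural encoding is: given a bipartite graph $H=(A\cup B,F)$ with $|F|=m$ and a target $\ell$, build a $2$-layer \iksc instance whose universe $\sciuniv$ is the edge set $F$ (so $|\sciuniv|=m$), and set the target to $k=\ell$. In layer $1$ we create one set $S_a$ for each vertex $a\in A$, consisting of all edges of $F$ incident to $a$; in layer $2$ we create one set $S_b$ for each vertex $b\in B$, consisting of all edges incident to $b$. All set costs are $1$ (the problem is unweighted). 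Then a collection of sets in layer $1$ corresponds to a vertex subset $A'\subseteq A$, a collection in layer $2$ to $B'\subseteq B$, and an edge $e=(a,b)\in F$ is covered \emph{in both layers} precisely when $a\in A'$ and $b\in B'$, i.e.\ when $e$ lies in the induced subgraph $H[A'\cup B']$. So a feasible \iksc solution of cost $|A'|+|B'|$ covering $\ell$ elements is exactly an induced subgraph on $|A'|+|B'|$ vertices containing at least $\ell$ edges — which is exactly a feasible \leds solution of the same value, and conversely. Since cost equals the number of chosen vertices in both directions, the correspondence is value-preserving.

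**Key steps, in order.** First I would spell out the reduction above and verify the two directions of the value-preserving correspondence: (i) any \leds solution $X=A'\cup B'$ with $\geq\ell$ induced edges yields an \iksc solution of cost $|X|$ covering $\geq\ell$ elements; (ii) conversely, from any \iksc solution we read off $A',B'$ and note that the set of elements covered in both layers is exactly the edge set of $H[A'\cup B']$, so covering $k=\ell$ elements forces $\geq\ell$ induced edges on $|A'|+|B'|$ vertices. Second, I would run the assumed $f(n,k)$-approximation on this instance; here the \iksc "universe size" is $n=|\sciuniv|=m$ and the target is $k=\ell$, so the algorithm returns a solution of cost at most $f(m,\ell)\cdot\opt_{\text{\iksc}}=f(m,\ell)\cdot\opt_{\text{\leds}}$. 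Third, translate that \iksc solution back via step (i)/(ii) into a \leds solution with the same vertex count, yielding the claimed $f(m,\ell)$-approximation. One should also remark that the reduction runs in polynomial time and that the constructed \iksc instance is indeed unweighted, matching the hypothesis.

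**Anticipated obstacle.** The reduction itself is essentially immediate; the only place requiring a little care is making sure the parameters line up — specifically that the relevant "size" parameter of the \iksc instance really is $m$ (the number of edges of $H$, hence the universe size) rather than the number of sets, and that the target $\ell$ matches in both problems, so that one gets exactly $f(m,\ell)$ and not some inflated expression. A secondary subtlety is confirming that no "degenerate" \iksc solution can cheat: an element covered only in one layer does not count, so there is no way for the solver to cover $\ell$ elements without actually picking enough vertices on both sides — this is precisely why the biclique/bipartite structure matters and why the equivalence is exact rather than lossy. Once these bookkeeping points are checked, combining this lemma with Lemma~\ref{ds-dse} and Lemma~\ref{ds-bipartite} (equivalently Theorem~\ref{thm:dense-v2e}) immediately gives the quantitative reduction to \kds stated in Theorem~\ref{thm:dse-sci}.
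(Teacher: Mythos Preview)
Your proposal is correct and follows essentially the same construction as the paper: universe $=$ edge set, target $=\ell$, and the two layers are the incidence sets $\delta(v)$ for $v$ on each side of the bipartition, so that feasible \iksc solutions correspond bijectively (and with equal cost) to feasible \leds solutions. The paper's own proof is just a terse version of what you wrote.
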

\begin{proof}
Let $(G,\ell)$, $G=(V_1\cup V_2, E)$, be the considered instance of \leds.
%Construct instance of \sciname{} with $d=2$ as follows.
For each $v \in V_1 \cup V_2$, let $\delta(v)$ be the set of edges incident to $v$. Consider the $2$-layer \iksc instance $(\mathcal{U}, k, \mathcal{S}^1, \mathcal{S}^2)$ with: $\mathcal{U} = E$, $k = \ell$, and $\mathcal{S}^i = \{\delta(v) : v \in V_i\}$ for $i=1,2$.
Note that a solution for $(G,\ell)$ translates to a solution for $(\mathcal{U}, k, \mathcal{S}^1, \mathcal{S}^2)$ and vice versa.
\qed
\end{proof}

\begin{lemma}\label{lem:dse-fli}
If there exists an $f(n, k)$-approximation algorithm for \ikmfl on two layers, then there exists an $f(m, \ell)$-approximation algorithm for \leds on bipartite graphs.
\end{lemma}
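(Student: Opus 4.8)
The plan is to transcribe the reduction behind Lemma~\ref{lem:dse-sci} into the language of facility location; the one genuinely new issue is that metricity does not let us forbid ``wrong'' assignments outright, the way an infinite set cost does for \iksc. Given an instance $(G=(V_1\cup V_2,E),\ell)$ of \leds\ on bipartite graphs, I build a two-layer \ikmfl\ instance whose clients are the edges --- one client $c_e$ per $e\in E$, common to both layers --- and whose facilities in layer~$i$ are the vertices of $V_i$, each with opening cost $1$. For the connection cost $w^i$, partition $\{f_u:u\in V_i\}\cup\{c_e:e\in E\}$ into classes, putting $c_e$ and $f_u$ in the same class exactly when $u$ is the endpoint of $e$ lying in $V_i$, and let $w^i$ be $0$ within a class and $2$ across classes (or $\varepsilon$ and $2$, with $\varepsilon$ tiny, should strictly positive distances be required). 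This is metric: the right-hand side of the metric-facility-location inequality $w^i(c,f)\le w^i(c,f')+w^i(c',f')+w^i(c',f)$ can be $0$ only if all four points share a class, in which case $w^i(c,f)=0$ as well. Set the target to $k:=\ell$.

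In one direction, a feasible \leds\ solution $S\subseteq V$, i.e.\ one with $|E(G[S])|\ge \ell$, gives an \ikmfl\ solution of cost $|S|$: open $\{f_v:v\in S\cap V_i\}$ in layer $i$ (note $V_1\cap V_2=\emptyset$, so this opens $|S|$ facilities in total), and for every edge $e$ of $G[S]$ connect $c_e$ at cost $0$ in both layers --- this connects $\ge \ell=k$ clients in every layer. In the other direction, \emph{canonicalize} a given \ikmfl\ solution: while some connected client is served, in some layer, by a facility outside its own class, open the correct facility (cost $+1$) and reassign that client (and every other connected client of that class) to it. Each such step changes the cost by at most $1-2<0$, never disconnects a client, and strictly shrinks the set of wrongly served (client,\,layer) pairs, so the process preserves feasibility, does not increase the cost, and stops after polynomially many steps. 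In a canonical solution every connection costs $0$, so the cost equals the number $|S|$ of vertices of $V=V_1\cup V_2$ whose facility is used, and each of the $\ge k$ clients connected in both layers is an edge with both endpoints in $S$; hence $|E(G[S])|\ge k=\ell$.

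Consequently the optimum of the \ikmfl\ instance --- whose request set is exactly $E$, of size $m$ --- equals the optimum of $(G,\ell)$, and ``canonicalize, then project onto the used vertices'' turns any \ikmfl\ solution of cost $C$ into a \leds\ solution on at most $C$ vertices with at least $\ell$ edges. Running the assumed $f(\cdot,\cdot)$-approximation on this instance and post-processing its output therefore yields a \leds\ solution with $\ge\ell$ edges on at most $f(m,\ell)$ times the optimum number of vertices. The main (and essentially only) obstacle compared with Lemma~\ref{lem:dse-sci} is the triangle inequality: we cannot literally prohibit serving $c_e$ by a facility $f_u$ with $u\notin e$; instead we make such a service cost $2$ while the correct facility costs only $1$ to open, so wrong assignments never help and are removed by the local post-processing above. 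Everything else is a routine copy of the set-cover argument.
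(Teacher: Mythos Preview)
Your proof is correct and follows the same reduction skeleton as the paper: clients are the edges of $G$, layer-$i$ facilities are the vertices of $V_i$ with opening cost $1$, and $k=\ell$. The difference lies solely in how you handle metricity. The paper simply sets the connection cost to $0$ if $v$ is the relevant endpoint of $e$ and to $\infty$ otherwise, and observes that this \emph{is} metric: since each edge has a unique endpoint in $V_i$, every client has exactly one finite-cost facility per layer, so the right-hand side of $w(c,f)\le w(c,f')+w(c',f')+w(c',f)$ can only be finite when $f=f'$ is the common endpoint, forcing the left-hand side to be $0$ as well. With this choice, wrong assignments are outright impossible, and no canonicalization step is needed; the correspondence between solutions is then immediate.

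Your $0/2$ metric plus local-improvement cleanup works too, and has the minor advantage of keeping all costs finite (the paper's $\infty$ is really a stand-in for a sufficiently large number). But your stated motivation --- that ``metricity does not let us forbid wrong assignments outright'' --- is not quite right: it does, precisely because of the uniqueness of the correct facility in each layer. So your argument is a valid but unnecessarily elaborate detour around a non-obstacle.
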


\begin{proof}
Let $(G,\ell)$, $G=(V_1\cup V_2, E)$, be the considered instance of \leds.
For each $v \in V_1 \cup V_2$, let $\delta(v)$ be the set of edges incident to $v$.
Consider the $2$-layer \ikmfl instance defined as follows. Let $\fliuniv = E$, $k=\ell$
and $\flilayer{i} = V_i$. We define all opening costs to be equal to $1$ and
all connection costs $\fliconnectl{i}(e, v)$ to be equal to $0$ if $v$ is an endpoint
of $e$, or $\infty$ otherwise. As each client (edge) $e$ is connected by a finite distance
to only one facility in each layer, costs $\fliconnectl{i}$ are metric.
Note that a solution for $(G,\ell)$ translates to a solution for $(\fliuniv, k, \flilayer{1}, \flilayer{2})$ and vice versa.
\qed
\end{proof}

\begin{lemma}\label{lem:dse-mst}
If there exists an $f(n, k)$-approximation algorithm for \ikmst on two layers, then there exists an $f(n+m+1,\ell)$-approximation algorithm for \leds\ on bipartite graphs.
\end{lemma}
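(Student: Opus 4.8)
The plan is to adapt the \iksc reduction of Lemma~\ref{lem:dse-sci}, replacing ``pick a set $\delta(v)$'' by ``pay one unit to attach the vertex-node $v$ to a tree''. Let $(G,\ell)$ with $G=(V_1\cup V_2,E)$, $|V_1|+|V_2|=n$ and $|E|=m$, be the given bipartite \leds instance; we may assume $\ell\ge 2$, since for $\ell\le 1$ the problem is trivial. I would build a two-layer \ikmst instance whose node set is $\{r\}\cup V_1\cup V_2\cup\{n_e:e\in E\}$, i.e.\ one \emph{edge-node} $n_e$ per edge together with one dummy node $r$, so that there are exactly $n+m+1$ nodes. In layer $1$ the metric is: any two nodes of $V_1$ are at distance $1$, each $n_e$ is at distance $0$ from the endpoint of $e$ lying in $V_1$, and every node of $V_2\cup\{r\}$ is at distance $+\infty$ from all other nodes (so it can never occur in a finite-cost tree); layer $2$ is defined symmetrically with $V_1$ and $V_2$ interchanged. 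The target is $k:=\ell$. In the forward direction, a vertex set $X$ with $G[X]$ having $\ge\ell$ edges gives, in each layer $i$, the tree consisting of a spanning tree of the distance-$1$ clique on $X\cap V_i$ (cost $|X\cap V_i|-1$) with every $n_e$, $e\in E(G[X])$, hung on for free; the two trees meet in exactly $\{n_e:e\in E(G[X])\}$, so this is a feasible \ikmst solution of cost $|X|-2$. The reverse (decoding) map sends a feasible solution $(T^1,T^2)$ to $X:=U_1(F_1)\cup U_2(F_2)$, where $F_i:=\{e:n_e\in V(T^i)\}$ and $U_i(F)$ is the set of $V_i$-endpoints of the edges in $F$.

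The steps I would carry out are: \textbf{(1)} show the intersection is carried by edge-nodes: since $\ell\ge2$, each $T^i$ contains an edge; finiteness of the cost forbids any edge incident to $r$ or to a cross-layer vertex, so $V(T^1)\subseteq V_1\cup\{n_e\}$ and $V(T^2)\subseteq V_2\cup\{n_e\}$, hence $I:=V(T^1)\cap V(T^2)=\{n_e:e\in F_I\}$ for some $F_I\subseteq F_1\cap F_2$ with $|F_I|=|I|\ge\ell$. \textbf{(2)} A cost lower bound: in layer $i$ the distance-$0$ classes are $C_u=\{u\}\cup\{n_e:u\in e\}$, $u\in V_i$; $T^i$ must connect the $|U_i(F_i)|$ such classes it touches using edges of weight $\ge1$, whence $w^i(T^i)\ge |U_i(F_i)|-1$, i.e.\ $|U_i(F_i)|\le w^i(T^i)+1$. \textbf{(3)} Determine the optimum: writing $D$ for the optimum of the \leds instance, the forward construction gives $\opt\le D-2$, while applying (1)--(2) to any feasible solution gives $w^1(T^1)+w^2(T^2)\ge |U_1(F_I)|+|U_2(F_I)|-2\ge D-2$ (because $U_1(F_I)\cup U_2(F_I)$ induces at least $|F_I|\ge\ell$ edges), so $\opt=D-2$; here the upper bound needs the forward construction applied to an optimal \leds solution $X^*$, which may be assumed to have no isolated vertex in $G[X^*]$, so that $X^*\cap V_i=U_i(E(G[X^*]))$. \textbf{(4)} Decode losslessly up to a constant: each $e\in F_1\cap F_2$ has its $V_1$-endpoint in $U_1(F_1)$ and its $V_2$-endpoint in $U_2(F_2)$, so $G[X]$ contains $F_1\cap F_2\supseteq F_I$, i.e.\ at least $\ell$ edges, while $|X|\le|U_1(F_1)|+|U_2(F_2)|\le w^1(T^1)+w^2(T^2)+2$. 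Running the assumed $f$-approximation on the \ikmst instance (which has finite optimum, hence returns a finite-cost solution) thus yields a feasible \leds solution with $|X|\le f\cdot(D-2)+2\le f\cdot D$ since $f\ge1$; as the constructed instance has $n+m+1$ nodes and target $\ell$, the ratio $f$ is $f(n+m+1,\ell)$, and everything done is polynomial.

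The step I expect to be the real obstacle is \textbf{(1)} together with \textbf{(4)}: one must decode not merely the \emph{optimal} \ikmst solution but an \emph{arbitrary} $f$-approximate one, and a priori the intersection of two trees could contain ``junk'' (the node $r$, or vertex-nodes) corresponding to no edge of $G$. I resolve this by placing $r$ and all cross-layer vertices literally at distance $+\infty$ in each layer --- forcing the intersection onto edge-nodes --- and by reading the vertex set $X$ off from the \emph{edge-nodes present in each tree} (via their endpoints) rather than from the intersection directly; the cost lower bound in \textbf{(2)} is exactly what then controls $|X|$ in terms of $w^1(T^1)+w^2(T^2)$. The additive $\pm2$'s are harmless because $f\ge1$; the only finicky bookkeeping is pinning down $\opt=D-2$ \emph{exactly} in \textbf{(3)}, for which the ``no isolated vertex'' normalisation of $X^*$ is needed.
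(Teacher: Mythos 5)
Your proof is correct and uses essentially the same gadget as the paper: one node per vertex, one node per edge, vertices of $V_{3-i}$ made unreachable in layer $i$, and the cost of each tree accounting for the $V_i$-vertices it touches. The one genuine difference is in the treatment of the extra node $r$. The paper makes $r$ the hub of layer $i$ (distance $1$ to every $V_i$ vertex, $\infty$ to $V_{3-i}$) and asserts that the intersection consists only of edge-nodes; but $r$ is finitely reachable in \emph{both} layers, so nothing stops an approximate (or even optimal) solution from having $r$ in $V(T^1)\cap V(T^2)$, which with target $k=\ell$ would leave only $\ell-1$ common edge-nodes and hence decode to a vertex set with only $\ell-1$ induced edges. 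You sidestep this by making $r$ a pure padding node at distance $+\infty$ in both layers and by carrying out the bookkeeping (your steps (1)--(4), including the $\ell\geq 2$ proviso that kills the singleton-tree degeneracy) explicitly; this makes the intersection provably a set of edge-nodes and the decoding genuinely feasible. The cosmetic difference (a unit clique on $V_i$ versus a unit star around $r$) only shifts the optimal \ikmst value by $\pm 2$, which you correctly absorb using $f\geq 1$. So the approach is the paper's, but your version is tighter and actually closes a gap the paper's one-sentence argument leaves open.
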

\begin{proof}
Let $(G,\ell)$, $G=(V_1\cup V_2, E)$, be a bipartite instance of \leds. We show how to construct the first layer of the corresponding \ikmst\ instance: the construction of the second layer is symmetric.

Consider the following auxiliary weighted graph $G'=(V',E')$. The nodeset $V'$ is given by $\{r\}\cup V_1\cup V_2 \cup E$, where $r$ is a newly created root node. 
Moreover, $E'=E'_a\cup E'_b\cup E'_c$, where: $E'_a=\{\{r,v\} : v\in V_2\}$, $E'_b=\{\{r,v\} : v\in V_1\}$, and $E'_c=\{\{v,\{v,u\}\}: v\in V_1, \{v,u\}\in E\}$. We set the weight of edges in $E'_a$, $E'_b$, and $E'_c$ to $\infty$, $1$ and $0$, respectively. Intuitively, we want all nodes of $G'$ that correspond to $V_2$ to be too expensive to be used in this layer. Eventually, we consider the metric closure of $G'$ (set $E'$ induces a tree). We set the target $k := \ell$.

Because of the $\infty$ edges, no node $v \in V_1 \cup V_2$ will belong to the intersection. Thus the only nodes in the intersection will be the nodes corresponding to edges of the original bipartite graph. Consequently a solution for \ikmst\ of cost $\alpha$ in $G'$ translates to a solution for the \leds\ instance with $\alpha$ vertices and vice versa.
\qed
\end{proof}
Pipelining each of the Lemmas~\ref{lem:dse-sci},~\ref{lem:dse-fli} and \ref{lem:dse-mst} with Theorem~\ref{thm:dense-v2e} we prove Theorems~\ref{thm:dse-sci} and \ref{thm:dse-mst}.

\subsection{\ikufl}\label{apx:fli}

\newcommand{\flicostsum}{\overline{\fliconnect}}

%\rem{F: often d rather than h. Check!}
%\rem{F: in union the notation of opening and connection costs is different}

In this section we give a $(4k^{1-1/h} log^{1/h}(k))$-approximation algorithm for \ikufl.
The algorithm works in $N^{O(h)}$ time, i.e., a polynomial time for any fixed $h$.

Let us define the problem in a way which is convenient for our purposes. For each layer $i$, we are given a collection $\flilayer{i}$ of subsets of the set $\clients$ of clients, with one subset $X$ for each facility $f(X)$. Intuitively, $X$ is the set of clients that facility $f(X)$ is allowed to serve on layer $i$. We use $\open^i(X)$ and $\connect^i(x,X)$ as shortcuts for 
$\open^i(f(X))$ and $\connect^i(x,f(X))$, respectively.
A solution consists of a subset $\fliK \subseteq \clients$ of $k$ clients, and a subset $\flisol{i}\subseteq \flilayer{i}$. Intuitively, $\fliK$ is the subset of clients that we decide to connect, and $\cup_{X\in \flisol{i}}f(X)$ is the set of facilities that we open on layer $i$ to connect $K$ on that layer. We denote as $\flichoice{i}{x}$ the facility serving client $x$ on layer $i$.
We slightly generalize \ikufl by adding for each $x \in \fliuniv$
a cost $\fliuse(x)$ of using $x$ in the solution.
In total, the cost of a solution is:
$$\sum_{x \in \fliK} \fliuse(x) + \sum_{i=1}^h \sum_{X \in \flisol{i}} \fliopenl{i}(X) + \sum_{x \in \fliK} \sum_{i=1}^h \fliconnectl{i}\left(x, \flichoice{i}{x}\right).$$
As we shall see, this generalization does not make the problem much harder, but allows us to describe our algorithm more neatly.

%We are given $h$ collections
%$\flilayer{1}, \flilayer{2}, \ldots, \flilayer{h}$ of subsets of a given universe $\fliuniv$.
%Each $\flilayer{i}$ corresponds to a set of facilities available in the $i$-th layer.
%Each facility $X \in \flilayer{i}$ is represented as a subset of clients ($X \subseteq \fliuniv$) it can serve
%in the $i$-th layer.
%For each collection $\flilayer{i}$, for each set $X \in \flilayer{i}$
%there is a cost $\fliopenl{i}(X)$ of using $X$ (openning facility $X$)
%in the solution. Moreover, for each
%element $x \in X$ there is a cost $\fliconnectl{i}(x,X)$ of {\em{attaching}} the element $x$ to
%the set $X$. The solution constists of: a subset $\fliK \subset \fliuniv$ of size $k$,
%families $\flisol{i} \subseteq \flilayer{i}$ and an assignment (called
%a {\em{choice function}}) that, for each layer $i$ and each element $x \in \fliK$,
%chooses a set $\flichoice{i}{x} \in \flisol{i}$ that includes $x$.
%We slightly generalize \ikufl{} by adding for each $x \in \fliuniv$
%a cost $\fliuse(x)$ of using $x$ in the solution.
%In total, the cost of a solution is:
%$$\sum_{x \in \fliK} \fliuse(x) + \sum_{i=1}^h \sum_{X \in \flisol{i}} \fliopenl{i}(X) + \sum_{x \in \fliK} \sum_{i=1}^h \fliconnectl{i}\left(x, \flichoice{i}{x}\right).$$
%As we shall see, this generalization does not make the problem much harder, but allows us to describe our algorithm more neatly.

%\subsection{Case $h=1$}

A $O(\log k)$-approximation for our generalized \ikufl problem for $h=1$ follows easily from \cite{slavik}.

%First, let us solve the problem for $h=1$. This is a simple modification of the well-known $O(\log k)$ approximation algorithm for \ksc \cite{Vazirani03book}.
%\rem{F: reference missing here. For k-set cover I put Vazirani's book, but also there a precise reference is needed}

\begin{lemma}\label{approx-d-one}
There exists a $(1 + \ln k)$-approximation algorithm for the generalised \kufl problem, where connecting client $x\in \fliuniv$ has extra cost $\fliuse(x)$.
\end{lemma}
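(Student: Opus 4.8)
The plan is to reduce the generalised \kufl problem (which, having $h=1$, is an ordinary single-layer non-metric facility location with partial coverage) to a weighted instance of partial \setc, and then invoke the greedy algorithm of Slav\'ik~\cite{slavik}.

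First I would build the set-cover instance: the ground set is the client set $\clients$ with target $k$, and for every facility $f(X)$ (equivalently every $X\in\flilayer{}$) and every nonempty $C'\subseteq X$ we create a set equal to $C'$ of weight $\open(X)+\sum_{x\in C'}\bigl(\connect(x,X)+\fliuse(x)\bigr)$. This family is exponentially large, but it is closed under taking subsets within each facility, which will make the greedy step both tractable and automatically truncatable.

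Next I would check that the optimum of this set-cover instance coincides with $\opt$. For one direction, given an optimal \kufl solution $(\fliK,\flisol{},\{\flichoicenolayer{x}\})$, for each open facility $f(X)$ let $C'_X$ be the set of clients assigned to it; the nonempty $C'_X$ partition $\fliK$, so they cover at least $k$ clients, and their total weight is exactly $\sum_{X}\open(X)+\sum_{x\in\fliK}\bigl(\connect(x,\flichoicenolayer{x})+\fliuse(x)\bigr)=\opt$ (discarding facilities serving nobody only helps). Conversely, given a set-cover solution $\{C'_j\}$ with $C'_j\subseteq X_j$ covering at least $k$ clients, open the facilities $f(X_j)$ (once per distinct $X_j$), put $\fliK=\bigcup_j C'_j$, and assign each $x\in\fliK$ to the facility $f(X_j)$ minimising $\connect(x,X_j)$ over the indices $j$ with $x\in C'_j$; since each client contributes its use and connection cost at least once on the set-cover side and each distinct facility is paid for at least once, the resulting \kufl cost is at most the set-cover cost, and dropping any surplus clients only decreases it (use and connection costs being non-negative). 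Hence a $\rho$-approximate set-cover solution yields a $\rho$-approximate \kufl solution of no greater cost.

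Finally, I would run the standard greedy for weighted partial \setc on this instance, which by~\cite{slavik} covers at least $k$ elements at cost at most $(1+\ln k)$ times the optimum, hence at most $(1+\ln k)\,\opt$. The one point requiring care --- and the main obstacle --- is that a greedy step ostensibly ranges over exponentially many sets; but it can be executed in polynomial time: for each facility $f(X)$, restrict to the currently uncovered clients in $X$ (adding already-covered clients only raises the weight), sort them by $\connect(x,X)+\fliuse(x)$, and for each prefix length $j$ evaluate the ratio $\bigl(\open(X)+\text{(sum of the $j$ cheapest)}\bigr)/j$; then take the best pair $(f(X),j)$ over all facilities. Each step covers at least one new client, so there are at most $|\clients|$ steps and the whole algorithm is polynomial; and because every subset of a facility's clients is itself an available set, the greedy automatically picks the right granularity (in particular in its last step), so no ad hoc truncation argument is needed. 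Translating the resulting set-cover solution back as above gives the claimed $(1+\ln k)$-approximate \kufl solution. \qed
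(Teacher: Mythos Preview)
Your proof is correct and follows essentially the same route as the paper. The paper's argument is a two-line shortcut: it absorbs the extra per-client cost $\fliuse(x)$ into the connection costs by replacing $\connect(x,f)$ with $\connect(x,f)+\fliuse(x)$ for every facility $f$, thereby reducing to classical \kufl, and then invokes \cite{slavik}. You instead spell out the underlying reduction of (nonmetric) \kufl to weighted partial \setc explicitly and verify that the exponential set family can be handled greedily in polynomial time via the standard prefix trick; this is exactly what hides behind the paper's citation, so the two arguments coincide in substance.
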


\begin{proof}
We can reduce tha generalised version of the \kufl problem to the classical one by simply increasing all distances between client $x$ and any facility $f$ by $\fliuse(x)$. The claim follows from \cite{slavik}.  
%   \in X\fliuniv$
%  and facility $f(X) \in \flilayer{}$ by the cost of opening the client $x$, formally $\fliconnect(x, X) := \fliconnect(x, X) + \fliuse(x)$.
%   To solve the classical version we use the algorithm by Slavik~\cite{slavik}.
\qed
\end{proof}

%\subsection{Arbitrary $h$}

\bigskip
\newcommand{\fliproc}{\ensuremath{\mathtt{FLI}}}
Now we solve the problem for arbitrary $h$.
Let us give some notation. 
%The threshold $\scthreshold$ is set to $\scthreshold := 4^{1-1/d} (k-|K|)^{1-1/d} \log^{1/d}(k)$.
We are going to develop a procedure $\fliproc(k, \fliuniv, h, (\flilayer{i})_{i=1}^h, (\fliopen, \fliconnect, \fliuse))$
that returns the $(4k^{1-1/h} \log^{1/h}(k))$-approximation for \ikufl.
We abandon the requirement that each $X \in \bigcup_{i=1}^h \flilayer{i}$ is contained in $\fliuniv$,
but the solution set $\fliK$ needs to be a subset of $\fliuniv$.

Note that Lemma \ref{approx-d-one} provides an algorithm for $h=1$.
For arbitrary $h$, we use the procedure described in Figure \ref{alg:approx}.
The $\fliproc$ procedure constructs set $\fliK$ with facilities $\flisol{i}$ and a choice function
greedily. At one step, we iterate over all layers $r$ and sets $X \in \flilayer{r}$
and all possible cardinalities $j$ of elements that set $X$ can cover
(i.e., $1 \leq j \leq \min(k-|\fliK|, |X \setminus \fliK|)$) and try to:
choose $X$ and solve problem without sets $\flilayer{r}$ (where $X \in \flilayer{r}$)
for universe restricted to $X \setminus \fliK$ and the goal size $j$.
We solve the subproblem using $\fliproc$ procedure, but for $h$ decreased by one.
We hide the costs of attaching elements $x \in X \setminus \fliK$ to $X$ in costs per element,
i.e., in $\fliuse(x)$.
Finally, we choose set $X$ and cardinality $j$ with the smallest cost per element covered.
This cost is kept in variable $c_{j,X}$.

%What remains to prove is the following lemma.

\begin{Figure}
\begin{minipage}{\textwidth}
\small
\begin{algorithmic}[1]
  \Procedure{$\fliproc$}{$k, \fliuniv, h, (\flilayer{i})_{i=1}^h, (\fliopen, \fliconnect, \fliuse)$}
  \If{$h=1$}
    \State{\Return solution found by Lemma \ref{approx-d-one}}
  \EndIf
  \If{$k=1$}
    \State{\Return optimal solution by brute-force}
  \EndIf
  \State{$\fliK \gets \emptyset$, $\flisol{i} \gets \emptyset$ for $i=1,2,\ldots,h$.}
  \Repeat
  \For{all $1 \leq r \leq h$ and all $X \in \flilayer{r}$}
    \For{$j := 1$ to $\min(k-|\fliK|, |X \setminus \fliK|)$}
      \State $\fliuse' \gets \fliuse$, except for elements $x \in X \setminus \fliK$, where
        we put $\fliuse'(x) = \fliuse(x) + \fliconnectl{i}(x, X)$.
      \State $(\fliK_{j,X}, (\flisol{a}_{j,X})_{a=1}^h) \gets
         \fliproc(j, X \setminus \fliK, h-1, (\flilayer{i})_{1 \leq i \leq h, i\neq r}, (\fliopen, \fliconnect, \fliuse'))$.
       \State $C_{j,X} \gets$ cost of $(\fliK_{j,X}, (\flisol{i}_{j,X})_{i=1}^h)$ w.r.t. costs $(\fliopen, \fliconnect, \fliuse')$, plus $\fliopenl{r}(X)$
      \State $c_{j,X} \gets C_{j,X}/j$.
    \EndFor
  \EndFor
  \State{$r_0, X_0, j_0 \gets$ values of the loops' iterators for which the cheapest solution was found
    according to weights $c_{j,X}$.}
  \State{$\fliK \gets \fliK \cup \fliK_{j_0,X_0}$, $\flisol{r_0} \gets \flisol{r_0} \cup \{X_0\}$,
    $\flisol{i} \gets \flisol{i} \cup \flisol{i}_{j_0, X_0}$ for $i \neq r_0$}
  \State{For each $x \in \fliK_{j_0,X_0}$ assign $\flichoice{r_0}{x} = X_0$.}
  \Until{$|\fliK| = k$}
  \State{\Return $(\fliK, (\flisol{i})_{i=1}^h)$}
  \EndProcedure
\end{algorithmic}
\end{minipage}
\caption{Approximation algorithm for \ikufl.}\label{alg:approx}
\end{Figure}

%\begin{lemma}\label{approx-thm}
%  Algorithm described in Pseudocode \ref{alg:approx}
%  returns a solution within a multiplicative factor $(4k^{1-1/d}\log^{1/d}(k))$
%  of the optimal solution.
%\end{lemma}

\begin{proof} {\em (Theorem \ref{thr:intersection:sc2})}
We consider the algorithm in Figure \ref{alg:approx}. It is clear that $\fliproc$ procedure works in $N^{O(h)}$ time, as it calls
recursively itself $O(Nk^2)$ times with $h$ decreased by one.

We next prove the claim on the approximation by induction on $h$. For $h=1$ the thesis is implied by Lemma \ref{approx-d-one}. Assume now that all recursive calls in the $\fliproc$ procedure return solutions with $(4k^{1-1/(h-1)}\log^{1/(h-1)}(k))$ approximation ratio.

  Let $\opt$ be the cost of the optimal solution for the given instance.
  Pick any optimal solution with cost $\opt$ and let $\flioptK$ be the set of covered elements by it,
  and $\flioptsol{i}$ be the chosen subset of $\flilayer{i}$ for $i=1,2,\ldots,h$.
  For each layer $i$ and each element $x \in \flioptK$ we fix the set $\flioptchoice{i}{x} \in \flioptsol{i}$ that covers $x$ in the optimal solution.

  We prove that at one step of the algorithm, the weight $c_{j_0, X_0}$ satisfies:
  $$c_{j_0,X_0} \leq \opt \cdot 4^{1-1/h}(k-|\fliK|)^{-1/h}\log^{1/h}(k).$$
  Recall that $c_{j_0,X_0}$ is the average cost
  paid newly covered elements.
  This bound is sufficient, since the total cost of the constructed solution is bounded by:
  \begin{align*}
    \textrm{total cost} &\leq \opt \cdot \sum_{i=0}^{k-1} 4^{1-1/h}(k-i)^{-1/h}\log^{1/h}(k) 
      \leq \opt \cdot 4^{1-1/h}\log^{1/h}(k) \int_0^k x^{-1/h} dx \\
      &= \opt \cdot 4^{1-1/h}\log^{1/h}(k) \frac{1}{1-1/h} k^{1-1/h} 
      \leq \opt \cdot 4\log^{1/h}(k) k^{1-1/h}.
  \end{align*}
  The last inequality follows from the fact that $4^{-\varepsilon} \leq 1-\varepsilon$
  for $0 \leq \varepsilon \leq \frac{1}{2}$.

  Let $\scthreshold := 4^{1-1/h} (k-|\fliK|)^{1-1/h} \log^{1/h}(k)$.
  We consider two cases, depending on whether there exists a layer $r$ and
  a set $X \in \flioptsol{r}$
  that covers at least $\scthreshold$ elements of $\flioptK \setminus \fliK$, i.e,
  $$|\{x \in \flioptK \setminus \fliK: \flioptchoice{r}{x} = X\}| \geq \scthreshold.$$

  {\bf Case 1.} Assume there exists a layer $r$ and a set $X \in \flioptsol{r}$
  such that for at least $\scthreshold$ elements $x$ of
  $\flioptK \setminus \fliK$ we have $\flioptchoice{r}{x} = X$.
  Let us focus on the moment when our algorithm considers taking set $X$.
  We may assume $\scthreshold \leq k-|\fliK|$, as otherwise $k-|\fliK|$ is bounded by constant
  and we may instead use brute force to finish the greedy construction
  optimally.
  Therefore our algorithm considers covering $\scthreshold$ elements of $X$.
  As the optimal solution does it, it may be done with cost $\opt$, so the
  recursive call returns the solution with cost at most
  $\opt \cdot 4\scthreshold^{1-1/(h-1)} \log^{1/(h-1)}(\scthreshold)$. We cover $\scthreshold$ elements, so
  \begin{align*}
    c_{\scthreshold,X} &\leq \opt \cdot 4\scthreshold^{-1/(h-1)} \log^{1/(h-1)}(\scthreshold) \\
    &\leq \opt \cdot 4\left( 4^\frac{h-1}{h} (k-|\fliK|)^\frac{h-1}{h} \log^{1/h}(k) \right)^\frac{-1}{h-1} \log^{1/(h-1)}(k) \\
    &= \opt \cdot 4^{1-1/h}(k-|\fliK|)^{-1/h} \log^{1/h}(k).
  \end{align*}

  {\bf Case 2.} Every $X \in \fami'_{\opt}$ covers at most $\scthreshold$ elements of $\flioptK \setminus \fliK$.
  For each $x \in \flioptK \setminus \fliK$ denote
  $$\flicostsum(x) = \fliuse(x) + \sum_{r=1}^h \fliconnectl{r}(x, \flioptchoice{r}{x}) + 
    \fliopenl{r}(\flioptchoice{r}{x}),$$
  i.e., the total cost of choosing $x$, attaching it to set $\flioptchoice{r}{x}$ and
  choosing set $\flioptchoice{r}{x}$. By the assumption in this case, we have
  \begin{align*}
    \sum_{x \in \flioptK \setminus \fliK} \flicostsum(x) &=
  \sum_{x \in \flioptK \setminus \fliK} \fliuse(x) + \sum_{r=1}^h \fliconnectl{r}(x, \flioptchoice{r}{x}) \\
    & \qquad+ \sum_{r=1}^h \sum_{X \in \flioptsol{r}} \fliopenl{r}(X) \cdot |\{x \in \flioptK \setminus \fliK: \flioptchoice{r}{x} = X\}| \leq \scthreshold \cdot \opt.
  \end{align*}
  Thus there exists $x_0 \in \flioptK \setminus \fliK$
  such that $\flicostsum(x_0) \leq \scthreshold \cdot \opt / |\flioptK \setminus \fliK|$.
  Let $X_0 = \flioptchoice{h}{x_0}$.
  Note that, since our algorithm uses brute force for $k=1$,
  the recursive call with find optimal solution for $j := 1$ and $X := X_0$ and thus
  $c_{1,X_0} \leq \flicostsum(x_0)$. As $|\flioptK \setminus \fliK| \geq k-|\fliK|$ we have:
  \begin{align*}
    c_{1,X_0} &\leq \scthreshold \cdot \opt / |\flioptK \setminus \fliK| \\
    & \leq \opt \cdot 4^{1-1/h} (k-|\fliK|)^{1-1/h} \log^{1/h}(k) (k-|\fliK|)^{-1} \\
    & = \opt \cdot 4^{1-1/h} (k-|\fliK|)^{-1/h} \log^{1/h}(k).
  \end{align*}
\qed
\end{proof}

\section{Union Problems}\label{apx:union}

In this section we give the omitted details concerning union covering problems.

\subsection{A Greedy Approach}
\label{apx:union:greedy}

We next describe a simple greedy algorithm which provides a logarithmic approximation for several union partial covering problems. Consider a partial covering problem where $\requests$ is the set of requests, $\items^i$ is the set of items on layer $i$, with costs $\weight^i:\items^i\to \R_{\geq 0}$, and $k$ is the target. We require that the covering problem satisfies a natural \emph{composition} property, namely two solutions satisfying $k'$ and $k''$ distinct requests, can be merged (without increasing the total cost) to obtain a solution satisfying $k'+k''$ requests. (Merging might involve some polynomial-time operations). The algorithm works as follows:
\begin{enumerate}\itemsep0pt
\item[(1)] For all layers $i$, for all $k^i:=1,\ldots,k$, solve the single-layer problem induces by the triple $(\requests,\items^i,k^i)$ with a $\rho$-approximation algorithm.
\item[(2)] Among all the solutions computed, take the one $\approximate$, obtained for some triple $(\requests,\items^i,k^i)$, which minimizes the ratio $\weight^i(\approximate)/k^i$. 
\item[(3)] Merge $\approximate$ with the solution under construction. Remove from $\requests$ the requests satisfied by $\approximate$, and decrease $k$ by $k^i$. 
\item[(4)] If $k>0$, go to Step (1). Otherwise return the current solution. 
\end{enumerate}
\begin{theorem}\label{thr:greedy}
The algorithm above computes a $O(\rho\, \log k)$-approximation for the partial covering problem considered in polynomial time.
\end{theorem}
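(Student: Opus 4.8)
The plan is to adapt the classical greedy analysis of \textsc{Set Cover}: charge the cost of each greedily selected single-layer solution against the number of requests it newly satisfies, and bound the resulting total by a harmonic-series telescoping argument. Throughout, fix an optimal solution $\optimal=(\optimal^1,\dots,\optimal^h)$ of cost $\opt$, let $R^i$ be the set of requests satisfied by $\optimal^i$, and recall $|\bigcup_i R^i|\ge k$. It will be convenient to assume that each single-layer solution returned is credited with exactly its target number $k^i$ of requests (surplus coverage is only beneficial and is simply ignored by the accounting), so that the algorithm's residual counter always equals $q:=k-|C|$, where $C$ is the set of already-credited requests.

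The heart of the proof is the following per-iteration claim: at the start of any iteration, with $C$ and $q=k-|C|\ge 1$ as above, there are a layer $i$ and a target $t\in\{1,\dots,q\}$ admitting a feasible single-layer solution on layer $i$ that covers at least $t$ of the requests not in $C$ at cost at most $\tfrac{t}{q}\opt$. Granting this, Step~(1) applied to $(\requests\setminus C,\items^i,t)$ returns, by $\rho$-approximation, a solution of ratio $\weight^i(\cdot)/t\le\rho\,\opt/q$, so the minimum-ratio candidate chosen in Step~(2) also has ratio at most $\rho\,\opt/q$.

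To prove the claim I would set $m_i:=|R^i\setminus C|$, so that $\sum_i m_i\ge|(\bigcup_i R^i)\setminus C|\ge k-|C|=q$ while $\sum_i\weight^i(\optimal^i)=\opt$. If some $m_i\ge q$, then $\optimal^i$ alone covers at least $q$ requests of $\requests\setminus C$ at cost at most $\opt$, so $(i,q)$ works. Otherwise every $m_i<q$; choosing $i^{\star}$ to minimize $\weight^i(\optimal^i)/m_i$ over $\{i:m_i\ge 1\}$, the elementary inequality $\min_i(a_i/b_i)\le(\sum_i a_i)/(\sum_i b_i)$ for positive $b_i$ gives $\weight^{i^{\star}}(\optimal^{i^{\star}})/m_{i^{\star}}\le\opt/q$, and since $m_{i^{\star}}<q$ the pair $(i^{\star},m_{i^{\star}})$ is in the enumerated range. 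Either way the claim holds (the $\opt=0$ case is subsumed, as then every $\weight^i(\optimal^i)=0$).

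It remains to assemble the global bound. Let $\approximate_1,\approximate_2,\dots$ be the solutions selected in successive iterations, $n_t\ge 1$ the number of requests credited at step $t$, and $q_t:=k-|C_{t-1}|$ the residual target before step $t$, so $q_1=k$, $q_{t+1}=q_t-n_t$ and $1\le n_t\le q_t$. By the claim, $\weight(\approximate_t)\le\rho\,\opt\,n_t/q_t$, and by the composition property the final merged solution is feasible (it credits $\sum_t n_t\ge k$ requests) and costs at most $\sum_t\weight(\approximate_t)\le\rho\,\opt\sum_t n_t/q_t$. Bounding $n_t/q_t\le\sum_{j=q_{t+1}+1}^{q_t}1/j$ and telescoping gives $\sum_t n_t/q_t\le\sum_{j=1}^{k}1/j=O(\log k)$, hence total cost $O(\rho\log k)\cdot\opt$. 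Running time is polynomial: at most $k$ iterations, each making $O(hk)$ calls to the $\rho$-approximation plus polynomial-time merges. The delicate point — the main obstacle — is precisely the bookkeeping sketched above: one must be careful to separate the algorithm's residual counter, the genuine number of uncovered requests, and the per-step targets $k^i$, verifying that the good pair $(i,t)$ of the claim lies inside the enumerated range and that crediting only $k^{i_t}$ requests per step (rather than all that may actually be covered) keeps the harmonic telescoping valid.
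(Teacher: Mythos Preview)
Your proof is correct and follows essentially the same route as the paper's: both use the standard greedy/\textsc{Set Cover}-style averaging argument to bound the per-iteration cost-to-coverage ratio by $\rho\,\opt/q$, and then sum the resulting fractions via the harmonic-series bound. Your version simply spells out more explicitly the averaging step (the two cases $m_i\ge q$ versus all $m_i<q$) and the telescoping $n_t/q_t\le\sum_{j=q_{t+1}+1}^{q_t}1/j$, whereas the paper compresses these into the phrases ``by an averaging argument'' and the single displayed inequality $\sum_j k_j/(k-\sum_{a<j}k_a)\le \ln k$.
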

\begin{proof}
The claim on the running time is trivial. The algorithm computes a feasible solution, due to the composition property. Consider now the approximation ratio. Let $\approximate_1,\ldots,\approximate_q$ be the sequence of approximate solutions computed, $w_j$ be the cost of $\approximate_j$ and $k_j$ the number of requests that it satisfies on layer $i_j$. Observe that at the beginning of iteration $j$, the current number of requests is $k-\sum_{a<j}k_{a}$, and the cost of  the optimal solution with respect to that number of requests is no more than $\opt$. By an averaging argument, at each iteration $j$ we have $\weight^{j_i}(\approximate_j)/k_j\leq \opt/(k-\sum_{a<j}k_j)$. We can conclude that the cost of the solution computed is at most
$$
\rho\,\opt \left( \frac{k_1}{k}+\frac{k_2}{k-k_1}+\ldots+\frac{k_q}{k-\sum_{a<q}k_a} \right)\leq \rho\,\opt\cdot \ln k.
$$
%$$
%\rho\opt\left\( \frac{k_1}{k}+\frac{k_2}{k-k_1}+\ldots+\frac{k_q}{k-\sum_{a<q}k_a} \right\) \leq \rho\opt\cdot \ln k.
%$$
\qed
\end{proof}
\begin{corollary}
There are $O(\log k)$-approximation algorithms for \ukmst and  \ukmfl.
\end{corollary}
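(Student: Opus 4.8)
The plan is to obtain the corollary as a direct instantiation of Theorem~\ref{thr:greedy} (for \ukmst\ this is necessarily the rooted version, since by Theorem~\ref{thr:inapproximable} the unrooted one admits no polynomial-time approximation at all). For each of \ukmst\ and \ukmfl\ I would check the two hypotheses of that theorem: that the single-layer subproblem arising at each greedy step has a constant-factor approximation, so that the factor $\rho$ there is $O(1)$ and hence $O(\rho\log k)=O(\log k)$; and that the problem satisfies the composition property.

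First I would make the correspondence with the abstract covering framework explicit. For \ukmst\ the requests $\requests$ are the nodes, the items $\items^i$ on layer $i$ are the layer-$i$ edges (together with its fixed root $r^i$), and a node is satisfied once it is spanned by the chosen layer-$i$ tree for at least one $i$. The single-layer problem induced by a triple $(\requests',\items^i,k^i)$, with $\requests'$ the set of nodes still unsatisfied, asks for a minimum-weight subtree of layer $i$ containing $r^i$ and at least $k^i$ nodes of $\requests'$; this is a quota Steiner tree instance, which by the standard reductions discussed in the preliminaries reduces to \kmst\ at the cost of a constant factor, so with the $2$-approximation of \cite{Garg05stoc} we get $\rho=O(1)$. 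For \ukmfl\ the requests are the clients, the items $\items^i$ are the layer-$i$ facilities, a client is satisfied once connected in some layer, and the single-layer subproblem on $(\requests',\items^i,k^i)$ is a \kmfl\ instance whose client set is restricted to $\requests'$ (all layer-$i$ facilities still available); this has an $O(1)$-approximation \cite{CKMN01soda,JMMSV03}, so again $\rho=O(1)$.

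Next I would verify composition. For \ukmst, given two partial solutions with layer-$i$ trees $T^i_1$ and $T^i_2$, both containing $r^i$, the graph $T^i_1\cup T^i_2$ is connected, so in polynomial time we may take any spanning tree of it as the merged layer-$i$ tree; after passing to metric closures its weight is at most $w^i(T^i_1)+w^i(T^i_2)$, and it spans exactly the union of the nodes spanned by $T^i_1$ and $T^i_2$. Hence two solutions satisfying disjoint sets of $k'$ and $k''$ requests merge into one satisfying $\ge k'+k''$ requests at total cost at most the sum of the two costs. For \ukmfl, merge by opening $F^i_1\cup F^i_2$ on each layer $i$ and keeping each client's original assignment (still valid, so the connection cost does not grow), while the opening cost is subadditive over unions; the satisfied-client sets add up. Both merges are polynomial-time.

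Applying Theorem~\ref{thr:greedy} with $\rho=O(1)$ then gives the two claimed $O(\log k)$-approximation algorithms. I expect the only point needing care to be the first hypothesis at iterations past the first: after deleting the already-satisfied requests, the layer-$i$ subproblem becomes a quota Steiner tree / partial \kmfl\ problem on the surviving requests rather than a fresh \kmst\ / \kmfl\ instance on all nodes --- but this wider class is still constant-factor approximable, so nothing breaks.
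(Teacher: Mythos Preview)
Your proposal is correct and matches the paper's own argument essentially line for line: the paper likewise invokes Theorem~\ref{thr:greedy}, observes that after deleting already-covered requests the per-layer subproblem for \ukmst\ becomes a \kst\ instance with a constant-factor approximation via \cite{Garg05stoc}, that \kmfl\ stays \kmfl\ with the $2$-approximation of \cite{JMMSV03}, and that merging is trivial because all partial layer-$i$ trees share the root $r^i$ (resp.\ opened facilities and assignments simply union). Your anticipated ``point needing care'' is exactly the one observation the paper spells out.
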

\begin{proof}
Observe that removing requests transforms the original \kmst problem in each layer into a \kst problem: for the latter problem there is a $4$-approximation algorithm \cite{Garg05stoc}.
Note also that all the partial solutions in layer $i$ contain the root $r^i$: hence the merging step is trivial. The claim for \ukmst follows.

%Due to triangle inequality, it is easy to turn the mentioned $3$-approximation for \kmst into a $6$-approximation for (rooted) \ktsp. Triangle inequality also implies the same approximation ratio for the generalization of the problem where only some terminal nodes contribute to reach the target (simply discard non-terminal nodes): this way we can remove satisfied requests. Eventually, given a set of cycles containing the root, we can obtain a simple cycle spanning the same set of nodes via shortcutting: this gives the needed merging step. The claim follows for \uktsp. 

For \kmfl, there is a $2$-approximation algorithm in \cite{JMMSV03}. In this case removing a request simply means removing one client, and the merging step is trivial. This proves the claim for \ukmfl. 
\qed
\end{proof}

\subsection{\ukmfl}
\label{apx:union:extensions}

In this section we present an LP-based $O(h)$-approximation algorithm for \ukmfl. As we will see, the basic idea is an for \ukmst.

Recall that in \ukmfl\  we are given a graph $G=(V,E)$, a set $\clients\subseteq V$ of clients, a set $\facilities\subseteq V$ of facilities, one integer $k$ (target), a set of opening cost functions $\open^i:\facilities\to \R_{\geq 0}$, and a set of edge-weight functions $\connect^i:E\to \R_{\geq 0}$, with $i=1,\ldots,h$. The distance between nodes $u$ and $v$ w.r.t. $\connect^i$ is denoted as $\connect^i(u,v)$. A feasible solution is given by a pair $(\tilde{\clients}^i,\tilde{\facilities}^i)$ for each layer $i$, $\tilde{\clients}^i\subseteq \clients$ and $\tilde{\facilities}^i\subseteq \facilities$, such that $|\cup_i \tilde{\clients}^i|\geq k$.
The goal is minimizing the cost $\sum_{i=1,\ldots,h}(\sum_{f\in \tilde{\facilities}^i}\open^i(f)+\sum_{c\in \tilde{\clients}^i}\connect^i(c,\tilde{\facilities}^i))$.
Here $w^i(c,\facilities')$ denotes the minimum distance on layer $i$ between client $c\in \clients$ and facility $f\in \facilities'\subseteq \facilities$.

Also in this case we consider a natural LP relaxation $LP_{kMFL}(\clients,\facilities,\open,\connect,k)$ for the single-layer version of the problem:
\begin{align*}
\min\quad & \textstyle{\sum_{f\in \facilities}o(f)y_f+\sum_{(c,f)\in \clients\times \facilities}\connect(c,f)\,x_{c,f}} &  &\\
 s.t.\quad  & \textstyle{x_{c,f}\leq y_f}, & \textstyle{\forall (c,f)\in \clients\times \facilities;} & \\
        & \textstyle{\sum_{f\in \facilities}x_{c,f} \geq z_c,} & \textstyle{\forall c\in \clients;} & \\
        & \textstyle{\sum_{c\in \clients}z_{c} \geq k;} &  & \\
        & \textstyle{x_{c,f},y_f\geq 0,1\geq z_c\geq 0,} & \textstyle{\forall c\in \clients,\forall f\in \facilities.} &
\end{align*}
Variable $y_f$ indicates whether facility $f$ is opened, and variable $x_{c,f}$ whether client $c$ is connected to facility $f$. Variable $z_c$ indicates whether client $c$ is connected to  some facility. We need the following result.
\begin{lemma}\label{lem:charikar} \cite{CKMN01soda}
Let $(\clients,\facilities,\open,\connect,k)$ be an instance of \kmfl, $o_{max}:=\max_{f\in \facilities}\open(f)$, and $\opt'$ be the optimal solution to $LP_{kMFL}(\clients,\facilities,\open,\connect,k)$. There is a polynomial time algorithm {\tt apx-kmfl} which computes a solution to the instance of cost at most $3\opt'+2o_{max}$.
%There is a polynomial time algorithm {\tt apx-kmfl} which computes a solution to the $k$-mFL problem of cost at most $3opt_{LP}+2f_{max}$, where $opt_{LP}$ is the optimal solution to $LP_{kmFL}$ and $f_{max}$ is the maximum cost of a facility.
\end{lemma}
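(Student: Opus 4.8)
The plan is to obtain Lemma~\ref{lem:charikar} from the Lagrangian-relaxation framework for partial/outlier facility location of~\cite{CKMN01soda}, which refines the Jain--Vazirani primal--dual method. First I would dualize the cardinality constraint $\sum_{c\in\clients}z_c\ge k$: for a multiplier $\lambda\ge 0$, the relaxed instance is a prize-collecting metric facility location problem in which every client left unconnected pays a penalty $\lambda$. The fact to invoke is that the Jain--Vazirani-style primal--dual algorithm for this penalty variant is \emph{Lagrangian-multiplier preserving}, i.e.\ it returns an integral solution whose opening cost plus $3$ times (connection cost plus total penalty paid) is at most $3$ times the optimum of the corresponding LP; since the fractional solution $(x,y,z)$ behind $opt'$ is feasible for the relaxed LP and contributes exactly $\lambda k$ of slack at the cardinality constraint, this LP optimum is at most $opt' + \lambda(n-k)$ when all clients are penalized, so the relaxed LMP bound specializes to a statement in terms of $opt'$ and $\lambda$.

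Second, I would binary-search on $\lambda$. For $\lambda=0$ the primal--dual solution connects very few clients, and for $\lambda$ large it connects all of them, so bisection locates $\lambda_1\le\lambda_2$ (as close as desired) producing solutions $\mathcal{S}_1,\mathcal{S}_2$ that connect $k_1<k\le k_2$ clients respectively. The step I expect to be the main obstacle is combining $\mathcal{S}_1$ and $\mathcal{S}_2$ into one integral solution connecting at least $k$ clients while keeping the opening cost overpayment additive. The move is to keep $\mathcal{S}_1$'s facilities, reroute $k-k_1$ of $\mathcal{S}_2$'s connected clients to them (the triangle inequality bounds the rerouting connection cost against the fractional one), and, when a rerouted cluster has no usable open facility nearby, open one extra facility for it at cost at most $o_{max}$; a careful accounting shows at most two such extra openings are needed, which is exactly where the additive $2o_{max}$ enters, while the Lagrangian terms $\lambda_i$ essentially cancel (up to the controllable bisection gap) to leave the multiplicative $3opt'$ from the LMP guarantee. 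I would also treat the classical degenerate case where bisection "jumps over" $k$ by splitting a single cluster of the offending solution.

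As a fallback I would keep in mind a direct LP-rounding argument: filter each client $c$ to the ball of facilities within a constant multiple of its fractional connection radius $\bar d_c:=\sum_f \connect(c,f)x_{c,f}/z_c$, greedily pick a maximal family of clients with pairwise-disjoint balls in increasing $\bar d_c$ order as cluster centers, open the cheapest facility in each selected ball (overpaying the captured $y$-mass by at most $o_{max}$ per cluster, again yielding additive $o_{max}$ terms), and within each cluster connect a number of clients equal to the rounded-down captured $z$-mass, so that the total count drops below $\sum_c z_c\ge k$ by at most the number of clusters, which a standard consolidation absorbs. Both routes come down to the same two technical points: controlling the opening-cost overpayment by an additive $2o_{max}$, and ensuring the final number of connected clients is at least $k$ in spite of the rounding; since this lemma is quoted verbatim from~\cite{CKMN01soda}, in the paper itself it suffices to cite that reference, and the sketch above is what a self-contained proof would look like.
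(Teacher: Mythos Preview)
The paper does not prove this lemma at all: it is stated with a citation to \cite{CKMN01soda} and used as a black box, exactly as you anticipate in your last sentence. So there is nothing to compare your argument against here---your proposal is not an alternative to the paper's proof, because the paper has none.

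That said, your sketch is a faithful outline of the Charikar--Khuller--Mount--Narasimhan argument: Lagrangian relaxation of the coverage constraint, an LMP primal--dual subroutine for the penalty version, binary search on $\lambda$ to bracket $k$, and a combination step using triangle inequality with a bounded number of extra facility openings accounting for the additive $2o_{max}$. For the purposes of this paper, simply citing \cite{CKMN01soda} is what is done and what suffices.
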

The algorithm that we use is analogous to the one for the $k$-MST case. In a preliminary phase we guess the largest cost $\open^*$ of a facility in the optimum solution, and remove all facilities of larger cost. Let $\facilities^i$ be the remaining set of facilities on layer $i$. We then compute the optimal solution $\OPT^*=(x^i,y^i,z^i,z)_{i}$, of cost $\opt^*$, to the following relaxation $LP_{ukMFL}$ for the problem:
\begin{align*}
\min\quad & \textstyle{\sum_{i=1,\ldots,h}(\sum_{f\in \facilities^i}\open^i(f)y^i_f+\sum_{(c,f)\in \clients\times \facilities^i}\connect^i(c,f)\,x^i_{c,f})} &  &\\
 s.t.\quad  & \textstyle{x^i_{c,f}\leq y^i_f,} & \textstyle{\forall i\in \{1,\ldots,h\},\forall (c,f)\in \clients\times \facilities^i;} & \\
        & \textstyle{\sum_{f\in \facilities^i}x^i_{c,f} \geq z^i_c,} & \textstyle{\forall i\in \{1,\ldots,h\},\forall c\in \clients;} & \\
        & \textstyle{\sum_{i=1,\ldots,h }z^i_{c} \geq z_c,} & \textstyle{\forall c\in \clients;} & \\
        & \textstyle{\sum_{c\in \clients}z_{c} \geq k;} &  & \\
        & \textstyle{x^i_{c,f},y^i_f,z^i_c\geq 0,1\geq z_c\geq 0,} & \textstyle{\forall i\in \{1,\ldots,h\},\forall c\in \clients,\forall f\in \facilities.} &
\end{align*}
%The algorithm that we use is analogous to the one for the $k$-MST case. In a preliminary phase we guess the most expensive facility in the optimum solution, and remove all facilities of larger cost. Then compute the optimal solution $OPT^*=\{x^i_{c,f},y^i_f,z^i_c\}_{c,f,i}$, of cost $opt^*$, to $LP_{ukmFL}$. 
Then we identify for each layer $i$ the subset of clients $\clients^i:=\{c\in \clients: z^i_c=\max_{j}\{z^j_c\}\}$. We run the algorithm {\tt apx-kmfl} from Lemma \ref{lem:charikar} on each layer, with clients $\clients^i$, facilities $\facilities^i$, and target $k^i:=\lfloor\sum_{c\in \clients^i}z_c\rfloor$: we open facilities and connect clients accordingly. Let $k'$ be the number of connected clients. If $k'<k$, we connect extra clients in a greedy fashion, possibly opening new facilities: in particular, we consider the pairs $(c,f)\in \clients^i\times \facilities^i$, with $c$ not connected, which minimize $o^i(f)+w^i(c,f)$, and we connect the corresponding clients.
%we consider the unconnected clients which are closer to open facilities, as well as pairs $(c,f)$, where $c$ is an unconnected client and $f$ is a facility not opened.
\begin{theorem}\label{thr:constantkSteiner}
There is a $O(h)$-approximation algorithm for \ukmfl. The running time of the algorithm is $O((nh)^{O(1)})$.
\end{theorem}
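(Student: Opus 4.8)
The plan is to follow the analysis of Theorem~\ref{thr:constantkMST} almost verbatim, with Lemma~\ref{lem:charikar} playing the role of Lemma~\ref{lem:garg}. The running-time bound is immediate: $LP_{ukMFL}$ has polynomial size, there are only $O(nh)$ candidate values of $\open^*$ to enumerate, and {\tt apx-kmfl} runs in polynomial time on each of the $h$ layers. Feasibility of the returned solution holds by construction, since the greedy augmentation phase connects exactly $k-k'$ additional clients, and enough unconnected clients exist because the optimum itself connects $k$ of them.

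For the approximation ratio I would first convert the optimal fractional solution $\OPT^*=(x^i,y^i,z^i,z)_i$ into per-layer LP certificates. For each layer $i$ set $\tilde z^i_c:=z_c$ for $c\in\clients^i$ and $\tilde z^i_c:=0$ otherwise, and let $\tilde x^i:=h\,x^i$, $\tilde y^i:=h\,y^i$. The crucial observation is that for $c\in\clients^i$ we have $z^i_c=\max_j z^j_c\ge\frac1h\sum_j z^j_c\ge\frac1h z_c$, whence $\sum_f\tilde x^i_{c,f}=h\sum_f x^i_{c,f}\ge h z^i_c\ge z_c=\tilde z^i_c$; combined with $\tilde x^i_{c,f}=h x^i_{c,f}\le h y^i_f=\tilde y^i_f$ and $\sum_{c\in\clients^i}\tilde z^i_c=\sum_{c\in\clients^i}z_c\ge k^i$ (which holds since $k^i=\lfloor\sum_{c\in\clients^i}z_c\rfloor$), this shows $(\tilde x^i,\tilde y^i,\tilde z^i)$ is feasible for $LP_{kMFL}(\clients^i,\facilities^i,\open^i,\connect^i,k^i)$. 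Let $\tilde{apx}^i$ be its objective value; since $\tilde x^i,\tilde y^i$ are $h$ times the restriction of $x^i,y^i$, we get $\sum_i\tilde{apx}^i=h\cdot\opt^*$.

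Next I would invoke Lemma~\ref{lem:charikar} on each layer: because all facilities of cost larger than $\open^*$ were discarded, $o_{max}\le\open^*$ on layer $i$, so {\tt apx-kmfl} produces a solution of cost at most $3\tilde{apx}^i+2\open^*$. To bound the augmentation phase, note that the sets $\clients^i$ are pairwise disjoint and partition $\clients$, so the algorithm initially connects at least $\sum_i k^i\ge\sum_i\big(\sum_{c\in\clients^i}z_c-1\big)=\sum_{c\in\clients}z_c-h\ge k-h$ clients; hence at most $h$ extra clients are connected greedily, each at cost at most $\opt$ (for any client connected by the optimum but not yet by the algorithm, opening its facility in the optimum solution and attaching it costs no more than the entire optimum). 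Putting the pieces together, the total cost is at most
$$\sum_i\big(3\tilde{apx}^i+2\open^*\big)+h\cdot\opt\le 3h\cdot\opt^*+2h\cdot\open^*+h\cdot\opt\le 6h\cdot\opt,$$
using $\opt^*\le\opt$ and $\open^*\le\opt$.

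The step needing the most care is the feasibility check of $(\tilde x^i,\tilde y^i,\tilde z^i)$ for the single-layer LP: one must verify that routing each client to the layer maximizing $z^i_c$ costs only a factor $h$ in the connectivity constraint, and that the floor in the definition of $k^i$ is absorbed by the $h$-unit slack in the global target. The accounting of the augmentation phase (counting initially connected clients via disjointness of the $\clients^i$, and charging each greedy step to $\opt$) is the other point to handle carefully; the rest is a routine transcription of the \ukmst\ argument, and a more refined constant in front of $h$ can be obtained with the same extra effort alluded to after Theorem~\ref{thr:constantkMST}.
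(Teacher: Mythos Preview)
Your proposal is correct and follows essentially the same route as the paper's proof: both scale $x^i,y^i$ by $h$, reassign each client to its maximizing layer, invoke Lemma~\ref{lem:charikar} per layer, and charge the at most $h$ leftover clients to $\opt$ each. Your write-up is in fact more explicit than the paper's in verifying feasibility of $(\tilde x^i,\tilde y^i,\tilde z^i)$; the only slip is that $\sum_i\tilde{apx}^i\le h\cdot\opt^*$ rather than equality (the connection-cost sum in $LP_{kMFL}$ ranges only over $c\in\clients^i$, a proper subset of $\clients$), but this is harmless for the bound.
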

\begin{proof}
Consider the algorithm above. The claim on the running time is trivial. As in the \kmst case, consider the feasible fractional solution $(\tilde{x}^i,\tilde{y}^i,\tilde{z}^i,z)_{i}$ obtained from $\OPT^*$ by setting $\tilde{z}^i_c=z_c$ if $c\in \clients^i$, $\tilde{z}^i_c=0$ otherwise, and raising the variables $x$ and $y$ by a factor $h$. This new solution costs at most $h\cdot \opt^*$. Furthermore, $(\tilde{x}^i,\tilde{y}^i,\tilde{z}^i)$ is a feasible fractional solution to $LP_{kMFL}(\clients^i,\facilities^i,\open^i,\connect^i,k^i)$. Let $\tilde{apx}^i$ be its cost. Lemma \ref{lem:charikar} guarantees that the cost of the integral solution on layer $i$ is at most $3\tilde{apx}^i+2\open^*$. Since $\sum_i k^i\geq k-h$, the final step costs at most $h\cdot \opt$. Altogether the cost of the solution computed is at most
$$
\sum_i (3\tilde{apx}^i+2o^*)+h\cdot \opt\leq 3h\cdot \opt^*+2h\cdot o^*+h\cdot \opt\leq 6h\cdot \opt.
$$
\qed
\end{proof}
Also in this case a more technical analysis allows one to reduce the constant in front of $h$ in the approximation factor, at the cost of a larger running time.
%\begin{corollary}
%There are a $O(h)$-approximation algorithm for unrooted \ukst, \uktsp, and \ukmfl. The running time of these algorithms is $O((nh)^{O(1)}n^h)$.
%\end{corollary}

\subsection{Integrality Gap}
\label{apx:union:gap}

%We next show that the $O(h)$-approximations obtained for \ukmst and \ukmfl are tight. In order to do that, it is sufficient to bound the integrality gap of $LP_{ukMST}$ and $LP_{ukMFL}$.\rem{F: is this really sufficient?}
\begin{lemma}
The integrality gap of $LP_{ukMST}$ and $LP_{ukMFL}$ is $\Omega(h)$.
\end{lemma}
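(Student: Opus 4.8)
The plan is to exhibit, for every $h$, an instance of rooted \ukmst\ (the argument for \ukmfl\ is analogous, replacing root-edges by facilities of the same cost) on which the optimal fractional solution to $LP_{ukMST}$ is cheaper than any integral solution by a factor $\Omega(h)$. The natural candidate is a ``symmetric spider'': take $h$ roots $r^1,\dots,r^h$ and a single layer structure in which each layer $i$ has $k$ private leaves, all at distance $1$ from $r^i$, plus the crucial feature that makes the LP cheap --- the same $k$ terminals are shared across all layers, so the fractional solution can connect each terminal a $1/h$ fraction in every layer, paying $1/h$ per edge per layer, i.e.\ total $k\cdot h\cdot \frac1h\cdot 1 = k$ if we were that naive; to get a real gap one instead sets the per-edge cost so that a $1/h$-fractional connection is feasible (each cut constraint $\sum_{e\in\delta(S)}x_e^i\ge z_v^i$ is met with $z_v^i=1/h$, $z_v=1$), while an integral solution is forced to fully buy, in some single layer, an edge of cost $1$ for each of the $k$ terminals it commits to. Concretely: one layer of $n=k$ identical unit-cost star edges from the root to each terminal, replicated verbatim in all $h$ layers with identical weights. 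Then $opt^*\le k$ spread as $x_e^i = 1/h$ on every star edge (cost $\sum_i\sum_e \frac1h = k$), but wait --- that already gives no gap.

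So the instance must be sharper: make the edge to terminal $v$ in layer $i$ cheap only when $v$ is ``assigned'' to layer $i$, and expensive otherwise, but make the LP able to fractionally exploit all $h$ copies. The clean construction is the one dual to the $\Omega(h)$ integrality gap for set-cover-type LPs: a ground set of $k$ terminals, $h$ layers, and in layer $i$ a single ``bundle'' edge (through an auxiliary node $s^i$) of cost $1$ that, once bought, connects a $(1-\eps)$-fraction's worth of terminals fractionally but only $\lceil k/h\rceil$ of them integrally, forcing $\Theta(h)$ full payments of $1$ integrally versus $O(1)$ fractionally. First I would fix the exact gadget (auxiliary node $s^i$ adjacent to $r^i$ by an edge of cost $1$, and to every terminal $v$ by an edge of cost $0$ in layer $i$); set $k=n$ and target $k$. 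Second, I would write down the fractional solution: put $x^i_{r^is^i}=1/h$, $x^i_{s^iv}=1/h$ for all $v$, $z^i_v=1/h$, $z_v=1$; check every cut constraint of $LP_{ukMST}$ (the only nontrivial cuts separate some terminal-plus-$s^i$ set from $r^i$, each crossed by the $1/h$-weight edge $r^is^i$) and conclude $opt^*\le h\cdot\frac1h = 1$. Third, I would argue the integral lower bound: any integral solution connecting $k$ terminals must, in each layer it uses, buy the cost-$1$ edge $r^is^i$ to reach any terminal at all through that layer; since one layer's $s^i$ can reach all terminals, in principle one layer suffices --- so I must add a capacity-style obstruction, e.g.\ give each auxiliary node $s^i$ only $\lceil k/h\rceil$ outgoing zero-cost slots (formally: replace $s^i$ by $\lceil k/h\rceil$ parallel auxiliary nodes, or cap by making all but $\lceil k/h\rceil$ of the $s^iv$ edges cost $\ge 2$) so that integrally one needs $\ge h/2$ of the layers active, each contributing a cost-$1$ root edge, giving $opt\ge h/2$.

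The main obstacle, therefore, is getting the integral lower bound honest: the LP happily shares the single cheap structure across layers at strength $1/h$, so I need a gadget that is (a) fractionally satisfiable at uniform strength $1/h$ in a way that sums to $O(1)$, yet (b) integrally forces $\Omega(h)$ disjoint unit-cost purchases, and (c) still a legitimate metric instance after taking the metric closure (the paper assumes complete graphs with metric closures, so I must verify that closing the metric does not create a cheap shortcut that defeats the integral bound --- e.g.\ a zero-cost $s^i$--$v$ edge in layer $i$ combined with a zero-cost $v$--$s^j$ edge in layer $i$ would be fatal, so the auxiliary nodes and their zero edges must live in disjoint layers and the closure is taken per layer, which is fine). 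I would discharge (c) by noting metric closure is taken within each layer independently and the only finite-cost edges in layer $i$ are $r^is^i$ (cost $1$) and the $s^iv$ edges, whose closure adds nothing harmful. Once the gadget is pinned down, steps two and three are routine cut-counting; for \ukmfl\ I would remark that replacing the edge $r^is^i$ of cost $1$ by a facility of opening cost $1$ co-located at $s^i$, with zero connection costs to the eligible terminals, yields the identical computation, so $LP_{ukMFL}$ inherits the same $\Omega(h)$ gap.
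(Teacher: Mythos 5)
Your proposed gadget does not, as stated, produce a gap, and you yourself notice the tension but never discharge it. In the uncapped version (every $s^i v$ edge free), one layer alone connects all terminals integrally at cost $1$, so there is no gap. In the capped version where each layer's $s^i$ reaches a block of $\lceil k/h\rceil$ terminals cheaply and these blocks are disjoint, each terminal is reachable at cost $0$ in exactly one layer; the cut $\{v,s^{i(v)}\}$ then forces $x^{i(v)}_{r^{i(v)}s^{i(v)}}\geq z^{i(v)}_v$, and since $v$ has no cheap alternative in any other layer the LP must put essentially full weight on that root edge as well, so the fractional optimum is $\Omega(h)$ and again there is no gap. The property you actually need is that each terminal is reachable at cost $0$ in $\Theta(h)$ different layers (so the LP can spread $z^i_v$ thinly and pay only $O(1/h)$ per layer for the root edges) while no $o(h)$ layers jointly reach all $k$ terminals (so an integral solution must buy $\Omega(h)$ root edges). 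That combinatorial structure is precisely a \setc\ instance with $\Theta(1)$ fractional optimum and $\Omega(h)$ integral optimum, and pinning it down is the whole content of the lemma --- it is not a routine cut-count once the picture is drawn.

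The paper resolves this by invoking a known such instance: the hypergraph on $m'$ nodes with one hyperedge per $m'/2$-subset has a \setc\ LP with fractional optimum $2$ (take each set at weight $2/m'$; every element lies in $m'/2$ sets) but integral optimum $m'/2+1$. Plugging this into the reduction of Theorem~\ref{thr:hardnessRooted}, where $h=m'$ and buying the $\{r,s\}$ edge in layer $i$ corresponds to choosing set $S_i$, transports the \setc\ gap directly into $LP_{ukMST}$ (the \setc\ LP solution becomes $x^i_{rs}=2/m'$, $z^i_v=2/m'$ for the $m'/2$ layers covering $v$, $z_v=1$), and similarly into $LP_{ukMFL}$. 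Your sketch is missing exactly this overlapping-coverage instance; without it the ``(a) fractionally cheap, (b) integrally expensive'' requirements you list are not simultaneously met by any of the gadgets you describe.
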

\begin{proof}
We consider the following unweighted \setc instance given in \cite{Vazirani03book}. Let $G'$ be an hypergraph on $m'$ nodes, which has one hyperedge for any subset of $m'/2$ nodes. We construct a set cover instance with $m'$ sets given by nodes, $\binom{m'}{m'/2}$ elements given by hyperdeges, and inclusion given by incidence. Taking a fraction $2/m'$ of each set gives a feasible fractional solution of cost $2$ to the natural set cover LP. On the other hand, the  optimal integral solution uses $m'/2+1$ sets. Hence the integrality gap in this case is $\Omega(m')$.

The same reductions as in Theorem \ref{thr:hardnessRooted} imply a $\Omega(h)$ lower bound on the integrality gap of $LP_{ukMST}$ and
$LP_{ukMFL}$ for the case $k=n$.
%In particular, assigning value $2/m'=2/h$ to the variables $x^i_e$ and $z^i_e$ associated to sets provides a feasible fractional solution to $LP_{ukMST}$ of cost $2$, while the optimal integral solution costs $m'/2+1=h'/2+1$.
\qed
\end{proof}

\end{document}